\title{Settling the complexity of Nash equilibrium in congestion games}
\author{Yakov Babichenko\footnote{Technion, Israel institute of Technology. E-mail: yakovbab@technion.ac.il.} \ and Aviad Rubinstein\footnote{Stanford University. E-mail: aviad@cs.stanford.edu.}}
\date{}
\newclass{\CLS}{CLS}
\newclass{\CCLS}{CCLS}
\newclass{\SCCLS}{Smooth-CCLS}
\newclass{\EOPL}{EOPL}
\newclass{\EOML}{EOML}
\newclass{\UEOPL}{UniqueEOPL}
\newcommand{\Ex}{\mathbb{E}}
\newcommand{\Prob}{\mathbb{P}}
\newcommand{\vzero}{\vec{0}}
\newcommand{\vone}{\vec{1}}
\renewcommand{\cL}{\mathcal{L}}
\newcommand{\1}{\mathbbm{1}}
\newcommand{\hx}{\hat{x}}
\newcommand{\hy}{\hat{y}}
\newcommand{\hX}{\hat{X}}
\newcommand{\ox}{\overline{x}}
\newcommand{\oy}{\overline{y}}
\newcommand{\oX}{\overline{X}}
\newcommand{\oY}{\overline{Y}}
\newcommand{\oox}{\mathring{x}}
\newcommand{\tx}{\tilde{x}}
\newcommand{\ty}{\tilde{y}}
\newcommand{\tX}{\tilde{X}}
\newcommand{\tY}{\tilde{Y}}
\newcommand{\tO}{\tilde{O}}
\newcommand{\ui}{\underline{i}}
\newcommand{\oi}{\overline{i}}
\newcommand{\oep}{\overline{\epsilon}}
\DeclareMathOperator{\supp}{supp}
\DeclareMathOperator{\Var}{Var}
\newtheorem{lemma}{Lemma}
\newtheorem*{lemma*}{Lemma}
\newtheorem{theorem}{Theorem}
\newtheorem{corollary}{Corollary}
\newtheorem*{corollary*}{Corollary}
\newtheorem*{proposition*}{Proposition}
\newtheorem{proposition}{Proposition}
\newtheorem{assumption}{Assumption}
\newtheorem{fact}{Fact}
\newtheorem*{theorem*}{Theorem}
\newtheorem*{tay-theorem}{Taylor's Theorem}
\newtheorem*{mtheo*}{Main Theorem}
\providecommand{\customgenericname}{}
\newcommand{\newcustomtheorem}[2]{%
  \newenvironment{#1}[1]
  {%
   \renewcommand\customgenericname{#2}%
   \renewcommand\theinnercustomgeneric{##1}%
   \innercustomgeneric
  }
  {\endinnercustomgeneric}
}
\theoremstyle{definition}
\newtheorem{definition}{Definition}
\newtheorem{meta}{Meta Definition}
\newtheorem{challenge}{Challenge}
\theoremstyle{remark}
\newtheorem{remark}{Remark}
\newtheorem*{remark*}{Remark}
\begin{document}

\maketitle
\begin{abstract}

We consider (i) the problem of finding a (possibly mixed) Nash equilibrium in congestion games, and (ii) the problem of finding an (exponential precision) fixed point of the gradient descent dynamics of a smooth function $f:[0,1]^n \rightarrow \mathbb{R}$.
We prove that these problems are equivalent.
Our result holds for various explicit descriptions of $f$, ranging from (almost general) arithmetic circuits, to degree-$5$ polynomials.
By a very recent result of~\cite{FGHS20}, this implies that
these problems are $\PPAD\cap\PLS$-complete. As a corollary, we also obtain the following equivalence of complexity classes:
 $$\CCLS = \PPAD\cap\PLS.$$

\end{abstract}

\setcounter{page}{0}
\thispagestyle{empty}
 
\newpage
\section{Introduction}

\subsubsection*{Game theory}

Nash equilibrium is the central solution concept in Game Theory. 
In recent decades, its universality as a solution concept has come under a computational critique: Finding a Nash equilibrium is, in general, a computationally intractable problem~\cite{CDT,DGP}; and if centralized, specially designed algorithms cannot find an equilibrium, how can we expect decentralized, selfish agents to converge to one?

Identifying classes of games where Nash equilibrium is tractable is an important open problem (e.g.~\cite{Daskalakis-thesis}).
Congestion games~\cite{Rosenthal73} are arguably the most important class of games for computer scientists, capturing useful applications like resource sharing and routing in networks. 
They play a fundamental role in Algorithmic Game Theory since the inception of the field two decades ago.  
In particular, a celebrated%
\footnote{Including well-deserved citations for prizes named after Godel, Tucker, von Neumann (twice), and Kalai.} research program on Price of Anarchy has focused on characterizing the outcomes of these games at equilibrium. 
But an important question remains about this modeling assumption:
\begin{quote}
Do agents in congestion games play equilibrium strategies?
\end{quote}
Again, if we're hoping for selfish, decentralized agents to converge to an equilibrium, at the very least an algorithm should be able to find one.

At the source of computational hardness for general games is the combinatorial number of {\em mixed} strategies that the algorithm has to consider.
In contrast, congestion games admit pure equilibria, which makes them a prime candidate for tractable equilibrium computation. 
The seminal work of~\cite{FPT} (see also~\cite{AS08,SV08}) ruled out the most immediate application of this idea: while pure equilibria exist for congestion games, finding them is again intractable. 
But more generally, it is natural to hope that the special structure of congestion games allows for efficient computation of (possibly mixed) Nash equilibria.

\subsubsection*{Optimization}

Our paper also touches on a fundamental problem in continuous optimization: Minimize a continuous function $$f:[0,1]^n \rightarrow [0,1].$$
Clearly we cannot hope to find a global minimum, 
but, under a mild smoothness assumption (see Remark~\ref{rem:smoothness}), there is a useful and total local solution concept: (approximate) fixed points of the gradient descent dynamics (``KKT points''). 
By a simple potential argument, gradient descent itself always converges to such $\epsilon$-approximate fixed point in $\poly(1/\epsilon)$ iterations. 
But in some cases, e.g.~when $f$ is also convex, algorithms like the center of gravity method converge in $\poly(n,\log(1/\epsilon))$ iterations~\cite[Theorem 2.1]{Bubeck15}.
For what nonconvex functions can we achieve exponentially fast (in $1/\epsilon$) convergence?
(Indeed, accelerated methods make it possible for some nonconvex functions, e.g.~\cite[Theorem 6.3]{CDHS18}.)

In the worst case, when $f$ is given as a black-box oracle, unconditional query complexity results (e.g.~\cite{Vavasis93,BM20,CDHS20}) rule out exponential convergence.
{\em When $f$ has a succinct representation, e.g.~as a sum of monomials, query complexity arguments can only rule out exponential convergence for specific algorithms.} 
To address the possibility of exponential speedup by general algorithms, we need to turn to computational complexity. 
In this paper, we study the computational problem {\sc GD-FixedPoint} (see Meta Definition~\ref{def:gdfp}), where the goal is to find an approximate fixed point of the gradient descent dynamics. We consider 4 variants of this problem, where the potential $f$ is succinctly represented assuming any of the following different structures: 

\begin{enumerate}
\item In the most general case, $f$ is succinctly described by an arithmetic circuit%
\footnote{Formally we need to require some restrictions on this arithmetic circuit to avoid doubly-exponentially large numbers by repeated squaring, see~\cite{DP20,FGHS20} for details.}
\item When defining the class \CCLS,~\cite{CLS} consider the case where $f$ is guaranteed%
\footnote{Formally to avoid promise problems the algorithm may return a certificate that the component-wise convexity is violated. In contrast, for the fourth variant it is easy to tell if a monomial is component-wise convex.} to be component-wise convex.
\item 
\cite{CLS} consider the restriction that $f$ is explicitly given as a sum of monomials (note that this is orthogonal to the component-wise convexity condition). This problem is called {\sc KKT} in \cite{CLS}.
\item In the most restrictive case, $f$ is a sum of component-wise convex monomials of constant degree.
\end{enumerate}

\begin{remark}\label{rem:smoothness}
For all of these cases we make the assumption that $f$ has a finite (but possibly exponentially large) Lipschitz constant of the gradients. We call this Lipschitz constant the \emph{smoothness parameter}.
\end{remark}

\subsubsection*{Computational complexity}

From a technical perspective, the problem {\sc Congestion} of finding a Nash equilibrium in congestion games, and the aforementioned {\sc GD-FixedPoint}, belong to a special class, \TFNP, of {\em total} search problems, i.e.~ones that always have a solution.
The aforementioned works prove that finding a Nash equilibrium in a general game is complete for the class~\PPAD, which captures fixed point computations for {\em continuous functions}. Finding a pure equilibrium in congestion games is complete for the class~\PLS, which captures local maximum computation for {\em discrete potential funcions}. 
Finding a Nash equilibrium in congestion games is a special case of both problems, i.e.~it lies in $\PPAD \cap \PLS$, and hence unlikely to be complete for either class%
\footnote{\PPAD~and~\PLS~are generally believed to be incomparable, e.g.~due to oracle separations~\cite{BM04, BJ12}.}.

Since the seminal paper of~\cite{CLS}, the class  $\PPAD \cap \PLS$ has been at the frontier of research on total search problems (e.g.~\cite{FGMS17,GDHKS18-CLS,ConMap1,ConMap2,GKRS19-CC-TFNP,EPRY20,GS20,FS21,FGHS20}).
 This research direction is particularly exciting because these problems lie quite low in the \TFNP~hierarchy, so a-priori any of them could plausibly admit a polynomial time algorithm.
Several natural problems have been identified, and a few complexity sub-classes have been defined (including \CLS\footnote{Very recent breakthrough shows that in fact $\CLS = \PPAD \cap \PLS$~\cite{FGHS20}!}, \CCLS, \EOPL, and \UEOPL).
Yet, until this paper, neither $\PPAD \cap \PLS$ nor any of its sub-classes were known to have any natural%
\footnote{\label{fn:natural} In the sense that all known \CLS-complete~\cite{FGMS17,ConMap1}, \UEOPL-complete~\cite{ConMap2}, and $\PPAD\cap\PLS$-complete~\cite{FGHS20} problems are described in terms of an oracle function succinctly represented by a circuit. 
From an algorithm-designer point of view, it is typically more natural -and often much easier- to study the {\em query complexity} of such oracle problems. 
(The computational complexity of oracle problems is still very interesting, e.g.~as a stepping stone for understanding natural problems!)
See further discussions in~\cite{FG18,SZZ18}.} 
complete problems.

\subsection*{Our results and their implications}

\begin{theorem*}[Main result, informal] \hfill

{\sc Congestion} and all 4 variants of {\sc GD-FixedPoint} are computationally equivalent. 
\end{theorem*}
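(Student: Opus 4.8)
The plan is to establish a cycle of reductions connecting {\sc Congestion} and the four variants of {\sc GD-FixedPoint}, so that each problem reduces to the next and the last reduces back. The easy direction of the cycle is among the {\sc GD-FixedPoint} variants themselves: a sum of constant-degree component-wise convex monomials is a special case of a sum of arbitrary monomials, which is a special case of a component-wise convex potential, which in turn is a special case of a potential described by a general arithmetic circuit. Each of these is a syntactic inclusion, so the reductions are trivial (the identity on instances, after checking the smoothness parameter is carried along). Hence the real content is a two-way equivalence between {\sc Congestion} and the most restrictive variant (sum of constant-degree component-wise convex monomials), which then forces all intermediate variants to collapse to the same class.

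First I would reduce {\sc Congestion} to {\sc GD-FixedPoint}. The classical observation of Rosenthal is that a congestion game admits an exact potential $\Phi$ on pure profiles; the natural continuous relaxation is to let each player mix and take the (multilinear) expected potential $f(\mathbf{x}) = \mathbb{E}_{\mathbf{a}\sim\mathbf{x}}[\Phi(\mathbf{a})]$ over the product of simplices $\prod_i \Delta(A_i)$. A mixed Nash equilibrium corresponds exactly to a KKT / fixed point of projected gradient descent on $f$ over this polytope, since the gradient in player $i$'s block is player $i$'s vector of expected costs and stationarity on the simplex is precisely the equilibrium condition (best responses are played with positive probability). The work then is (a) to massage the domain from a product of simplices into the hypercube $[0,1]^n$ (standard: use one coordinate per (player, action) pair, or a barycentric-type encoding, and add a smooth penalty enforcing the simplex constraints), (b) to verify that $f$ so constructed is a low-degree polynomial — the expected potential of a congestion game is multilinear of degree at most the number of players, but using the succinct structure of congestion costs one can re-express it as a sum of monomials of constant degree (degree $5$ as claimed in the abstract), and arrange each monomial to be component-wise convex, and (c) to control the smoothness parameter and the precision, so that an $\epsilon$-fixed point of $f$ translates into an approximate Nash equilibrium with an inverse-polynomial (or exponentially small, as needed) error.

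Second I would reduce {\sc GD-FixedPoint} (for a general arithmetic circuit $f$) back to {\sc Congestion}. This is the harder and more technical direction, and I expect it to be the main obstacle: one must simulate an essentially arbitrary smooth potential by the very rigid potential function of a congestion game. The approach is to build a congestion game whose Rosenthal potential, as a function of the players' mixed strategies, approximates the target $f$. Each variable $x_j$ is controlled by (the mixed strategy of) a gadget player; resources are introduced whose cost functions, when summed according to which players use them, reproduce the monomials of $f$. The chief difficulties are: congestion cost functions depend only on the \emph{number} of players using a resource (not on \emph{which} players), so reproducing a specific monomial $\prod_{j\in S} x_j$ requires gadgetry (e.g.\ dedicated resources together with auxiliary "coordinator" players and large penalty terms) to break the symmetry; and one must keep the number of resources and players polynomial while the costs remain succinctly representable. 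Here I would lean on the fact, stated in the excerpt, that the hardest variant ({\sc GD-FixedPoint} with $f$ a general circuit) is already known to lie in $\PPAD\cap\PLS$ via~\cite{FGHS20}, so it suffices to show {\sc Congestion} is hard for that class — i.e.\ I can reduce from a canonical $\PPAD\cap\PLS$-complete problem (for instance the circuit version of {\sc GD-FixedPoint} itself, or the problem shown complete in~\cite{FGHS20}) rather than from an arbitrary instance, which gives more freedom in the gadget construction and lets me assume a normal form for the input. Combining the two directions with the trivial inclusions closes the cycle and yields the claimed equivalence; invoking~\cite{FGHS20} then upgrades it to $\PPAD\cap\PLS$-completeness and the corollary $\CCLS = \PPAD\cap\PLS$.
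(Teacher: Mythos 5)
Your high-level architecture (a cycle of reductions among the five problems, closed using the $\PPAD\cap\PLS$-completeness of the circuit variant from~\cite{FGHS20}) matches the paper's. But there are genuine gaps in both directions. For {\sc Congestion} $\le$ {\sc GD-FixedPoint}: your claim that the multilinear expected Rosenthal potential of a general congestion game can be ``re-expressed as a sum of monomials of constant degree'' is false --- that polynomial has degree $n$ and exponentially many monomials, and the succinct structure of the costs does not let you compress it. The paper proves the degree-$5$ statement only for the subclass of $5$-polytensor identical-interest games (where each local interaction involves at most $5$ players, so the expected potential genuinely is a degree-$5$ multilinear polynomial with polynomially many terms, extended to the hypercube with the penalty $-C\sum_i(s_i-1)^2$ essentially as you suggest); general {\sc Congestion} reaches the degree-$5$ variant only via membership in $\PPAD\cap\PLS$ plus~\cite{FGHS20}. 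Relatedly, your chain of syntactic inclusions among the four variants is wrong in the middle: a sum of arbitrary monomials is \emph{not} a special case of a component-wise convex potential --- the two conditions are orthogonal (the paper says so explicitly), and the correct picture is a diamond with the degree-$5$ component-wise convex variant at the bottom and the circuit variant at the top.

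The more serious gap is the direction {\sc GD-FixedPoint} $\le$ {\sc Congestion}, which you correctly identify as the hard one but for which you offer only a statement of intent (``resources whose cost functions reproduce the monomials of $f$,'' plus coordinator players and penalties). The difficulty you name --- that congestion costs are anonymous --- is in fact the easy part: once the target is a $5$-polytensor identical-interest game, one facility per action-tuple makes it a congestion game directly. What your sketch is missing are the ideas the paper spends Sections~\ref{sec:idea}--\ref{sec:analysis} on: (i) mixed strategies of the variable-players need not correspond to any point of the domain, so one needs an imitation gadget ($x$- and $y$-teams minimizing a squared distance) to force concentration; (ii) exponential precision forces each coordinate to be represented by polynomially many bit players, and then no single player can move the number across a dyadic boundary --- the paper introduces veto players for this, since otherwise profiles like $011\cdots1$ with a positive gradient are spurious equilibria; (iii) the circuit computing $\phi$ and its discrete gradient must itself be simulated by players, and the exponentially weighted incentive to keep the circuit input correct fights against the tiny incentive to move in the gradient direction, which is resolved by a guide player choosing among shifted samples of the $x$-team. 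Without a mechanism for each of these, the construction you describe admits equilibria that do not correspond to gradient fixed points, and the reduction fails.
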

In the context of optimization, it immediately follows from our main theorem that local maximization (in the {\sc GD-FixedPoint} sense) of an arbitrary smooth function can be reduced to an optimization of a degree 5 polynomial. 
Very recently%
\footnote{In fact,~\cite{FGHS20} was developed concurrently and independently to a previous version of our paper. The proofs in this version (while still unfortunately long) have been simplified significantly since~\cite{FGHS20}  show that {\sc GD-FixedPoint} is already hard in only two dimensions.}~\cite{FGHS20} proved that {\sc GD-FixedPoint} is $\PLS \cap \PPAD$-complete. We therefore obtain the following corollaries:
\begin{corollary*}[Complexity consequences]\hfill

\begin{itemize}
\item {\sc Congestion} and all 4 variants of {\sc GD-FixedPoint} are $\PLS \cap \PPAD$-complete.\footnote{The completeness for the first variant of GD-FixedPoint was proved by \cite{FGHS20}.} This is the first case of a natural%
\footnote{See Footnote~\ref{fn:natural}.} complete problem for $\PLS \cap \PPAD$ or any of its subclasses.
\item $\CCLS = \PLS \cap \PPAD$.
\end{itemize}
\end{corollary*}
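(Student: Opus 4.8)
The plan is to establish a cycle of reductions connecting {\sc Congestion}, the four variants of {\sc GD-FixedPoint}, and ultimately $\CCLS$. Since the four variants are ordered by generality (variant 4 $\subseteq$ variant 3, variant 2 $\subseteq$ variant 1, and variant 4 $\subseteq$ variant 2 $\subseteq$ variant 1 $\subseteq$ \PPAD$\cap$\PLS), the nontrivial content is a reduction going the ``wrong'' way: from the most general variant (arithmetic-circuit $f$) down to the most restrictive (a degree-$5$ sum of component-wise convex monomials), and from there into {\sc Congestion}, and finally from {\sc Congestion} back up to something at least as general as variant 1. The corollary $\CCLS = \PPAD\cap\PLS$ then follows because $\CCLS$ is, by definition, the class of problems reducible to variant 2 ({\sc GD-FixedPoint} on component-wise convex $f$), so $\CCLS$-completeness of our chain plus the $\PPAD\cap\PLS$-completeness of variant 1 (from \cite{FGHS20}) pins $\CCLS$ between the two.

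The first block of steps reduces general-circuit {\sc GD-FixedPoint} to the degree-$5$ component-wise-convex-monomial variant. First I would show that an arithmetic circuit computing $f$ can be simulated by introducing auxiliary coordinates, one per gate, with equality constraints enforced softly via quadratic penalty terms of the form $(y_{\text{gate}} - \text{op}(y_{\text{inputs}}))^2$; addition gates contribute degree-$2$ penalties, and multiplication gates contribute degree-$4$ penalties (after expansion these are sums of monomials, but not yet component-wise convex). The degree bound of $5$ comes from needing one extra ``driver'' coordinate times a degree-$4$ term, or from a careful arrangement of the penalty gadget; the key calculation is that an approximate KKT point of the penalized potential forces all gate-consistency constraints to hold up to the allowed error, so its restriction to the original coordinates is an approximate fixed point of gradient descent on $f$. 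Component-wise convexity of each monomial is obtained by the standard trick of splitting each monomial $c\prod x_i^{a_i}$ with $c<0$ or with an offending sign pattern into a difference of two convex monomials using the identity $xy = \tfrac12((x+y)^2 - x^2 - y^2)$ iterated, or by substituting $x_i \mapsto 1 - x_i$ on a subset of variables; I would need to check that these substitutions preserve the smoothness parameter up to polynomial factors and keep the degree at $5$.

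The second block embeds degree-$5$ {\sc GD-FixedPoint} into {\sc Congestion}. The idea is that the potential function of a congestion game is $\Phi(s) = \sum_{e} \sum_{k=1}^{n_e(s)} c_e(k)$, and its multilinear extension over mixed strategies is a polynomial whose partial derivatives govern the (continuous) better-response dynamics; a mixed Nash equilibrium is exactly a point where no player can decrease her expected cost, which is a KKT-type stationarity condition for this potential over the product of simplices. I would design, for each coordinate $x_i \in [0,1]$ of the optimization problem, a two-strategy player whose mixing probability encodes $x_i$, and resource cost functions chosen so that the game's potential equals (a scaled, sign-adjusted version of) the target degree-$5$ polynomial $f$ plus lower-order bookkeeping terms; each monomial of degree $d$ is realized by a shared resource used by the $d$ relevant players with an appropriately chosen marginal cost sequence. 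The main obstacle, and where I expect most of the technical work to lie, is precisely this simulation of an arbitrary low-degree polynomial by a congestion-game potential while (a) keeping costs monotone/representable as the required step functions, (b) controlling the approximation so that an $\epsilon$-Nash equilibrium maps back to a genuine approximate GD-fixed-point, and (c) handling the mismatch between the \emph{additive} $\epsilon$-Nash notion (expected-cost regret) and the \emph{gradient}-based fixed-point notion, which may require an intermediate notion of approximate stationarity and a perturbation/averaging argument. Finally, the reverse inclusion---{\sc Congestion} reduces to variant 1---is comparatively easy: the multilinear potential of a congestion game is computed by a small arithmetic circuit, and a mixed Nash equilibrium is a fixed point of projected gradient descent on this potential over the simplex product, so after a standard reformulation of the simplex constraints (or an embedding into a cube via a smooth retraction) this lands in general-circuit {\sc GD-FixedPoint}, closing the cycle.
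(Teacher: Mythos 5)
Your high-level cycle (hardest variant $\le$ most restrictive variant $\le$ \textsc{Congestion} $\le$ general variant, plus membership from \cite{FGHS20}) is the right shape, but the two load-bearing reductions you propose are not the paper's and each has a concrete gap. The paper's actual route is: \textsc{2D-GD-FixedPoint} $\le$ \textsc{5-Polytensor-IdenticalInterest} (the entire technical core, Sections 3--5), then \textsc{5-Polytensor-IdenticalInterest} $\le$ \textsc{Congestion} trivially, and separately \textsc{5-Polytensor-IdenticalInterest} $\le$ \textsc{Deg-5-GD-FixedPoint} via the multilinear potential over the hypercube with a quadratic penalty $-C\sum_i(s_i-1)^2$ enforcing the simplex constraints (Proposition~\ref{pro:inKKT}). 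In other words, the descent from circuits to degree-5 polynomials goes \emph{through the game}, not directly within optimization.

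The fatal gap is in your second block. A congestion (or polytensor) game in which coordinate $x_i$ is encoded by the mixing probability of a single two-strategy player has an expected potential that is \emph{multilinear} in those probabilities: for a Bernoulli variable, $\Ex[X_i^2]=\Ex[X_i]$, so no monomial with a repeated variable, such as $x_i^2$, can ever appear in the game's expected potential. Hence "resource cost functions chosen so that the game's potential equals the target degree-5 polynomial" is impossible as stated. To realize non-multilinear monomials you need several independent players per coordinate that are forced to play (nearly) identical and, crucially, \emph{concentrated} mixed strategies --- and forcing concentration of mixed strategies with only polynomially many players/actions at exponential precision is precisely what the paper's imitation, bit-player/veto-player, guide-player, and sampling gadgets exist to do; it is not bookkeeping about cost monotonicity or $\epsilon$-Nash versus gradient stationarity. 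Your first block also does not go through as described: the general variant's circuits contain $\max$, $\min$, and $>$ gates, for which the penalty $(y_{\mathrm{gate}}-\mathrm{op}(y_{\mathrm{inputs}}))^2$ is not a polynomial, and intermediate wire values of an arithmetic circuit need not lie in $[0,1]$, so the auxiliary coordinates do not live in the hypercube; moreover the claim that approximate stationarity of the penalized objective restricts to approximate stationarity of $f$ requires a chain-rule reconstruction argument with error propagation through the circuit depth that you have not supplied. The only piece of your plan that matches the paper is the easy direction (game $\to$ optimization over the cube via a soft simplex penalty), which the paper uses in Proposition~\ref{pro:inKKT}.
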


\subsection{Related work}
Our proof combines ideas from the two long lines of work on~\PLS-completness (e.g.~\cite{PLS,Krentel90,SY91,FPT,SV08,AS08,ARV08,HHKS13}) and \PPAD-completeness (e.g.~\cite{DGP,CDT,CD09,CDDT10,GV14,Mehta14,CDO,CPY17,RubGraph,DFS20,MM20}). It also uses ideas from our recent paper on the {\em communication} complexity of Nash equilibrium in potential and congestion games~\cite{BR20}.

Recently,~\cite{DSZ20} also study a variant of the problem {\sc GD-FixedPoint}. The focus of their paper is on $1/\poly(n)$ or even constant approximation, whereas we seek exponentially precise approximation. Note that for {\sc GD-FixedPoint}, it is easy to compute a $1/\poly(n)$-approximate fixed point in polynomial time because the potential is bounded. In contrast, for the gradient descent-ascent dynamics ({\sc GDA-FixedPoint}) the potential may increase and decrease alternatingly, which makes the problem much harder for approximation. 
Indeed, for an even harder variant of approximate {\sc GDA-FixedPoint} where the domain is not a cartesian product,~\cite{DSZ20} prove that the problem is \PPAD-complete. (Approximate {\sc GDA-FixedPoint} over e.g.~the hypercube remains an exciting open problem.)

Historically, the first Price of Anarchy (PoA) results for congestion games focused on pure equilibrium (e.g.~\cite{KP09,RT00,CV07}), and were then extended to mixed Nash equilibrum (e.g.~\cite{Vetta02,ADGMS11,AAE13}).
The barriers to equilibrium computation in congestion games inspired the development of techniques, in particular Roughgarden's smoothness framework, for bounding the PoA with respect to tractable solution concepts such as correlated equilibrium (e.g.~\cite{BHLR08,Roughgarden15}). Our results show that mixed Nash equilibrium in congestion game is intractable, further motivating the latter line of work on analyzing tractable solution concepts.

Another interesting approach to circumventing the computational barriers to computing pure equilibrium in congestion games considers beyond worst case analysis. Specifically a recent line of works applies smoothed analysis to the special case of polymatrix identical interest games~\cite{ER17,ABPW17,BCC19,BKM20,CGVYZ20}. It is an interesting open question whether considering mixed Nash equilibrium can improve those results. Another interesting question is whether this analysis can extend to more general congestion games (for an appropriate definition of smoothing).

\subsection{Technical highlights}\label{sec:highlights}
In this section we exposite some techincal highlights from the proof of our main reduction. 
Specifically, we reduce the $\PLS \cap \PPAD$-complete {\sc 2D-GD-FixedPoint} to finding a Nash equilibrium in a sub-class of congestion games that we now introduce.

\paragraph{Polytensor identical-interest games}
We consider a class of games that satisfies two orthogonal constraints:
\begin{description}
\item[$5$-Polytensor\footnotemark games:]
\footnotetext{These games are sometimes called {\em hypergraphical games} following the seminal~\cite{PR}, but we believe that the name polytensor better reflects the fact that they generalize {\em polymatrix games}~\cite{Yanovskaya68} (which would be $2$-polytensor games).}
 The utility of Player $i$ is given as a sum of sub-utilities, each depending on the strategies of only 5 players (including $i$).
\item[Identical interest\footnotemark games:]
\footnotetext{In our context these are equivalent to {\em potential} games, and sometimes they're also called {\em coordination} games.}
All players have the same utility function.
\end{description}

\begin{remark}\label{rem:identical-interest}
In identical interest $5$-polytensor games the local sub-utilities have identical interest. Note that this does not create an identical-interest game (because every player has different local interactions). However, we can alternatively consider a strategically equivalent game where every player gets the sum of \emph{all} sub-utilities  in \emph{all} local interactions (including those where she does not participate). This latter game is an identical interest game. The sets of Nash equilibria in these two games coincide.
\end{remark} 

We reduce {\sc 2D-GD-FixedPoint} to finding a Nash equilibrium in the class of {\em polytensor identical-interest games}, i.e.~games that satisfy both desiderata. 
Our primary motivation for studying this class of games is as an intermediary problem for our reductions, but it may also be of independent game-theoretic interest (e.g.~it is a natural generalization of the popular class of polymatrix identical-interest games~\cite{CLS,ER17,ABPW17,BCC19,BKM20,CGVYZ20}). See Section~\ref{sec:polytensor} for details.

\subsubsection*{Potential+imitation game}

Our reduction begins with a potential function $\phi:[0,1]^2 \rightarrow [0,1]$, which we will try to {\em maximize}.\footnote{There is somewhat of a discrepancy between the optimization literature which mostly talks about minimization and game theory where players try to maximize the potential function. We side with the game theorists here, but of course the gradient descent dynamics should be applied to $-\phi$.}
Consider the $2$-player infinite-actions potential game where Player $j=1,2$ chooses a value for coordinate $x_j \in [0,1]$, and every player's utility is $\phi(x)$. 
In any pure equilibrium, the players collectively choose a point $x \in [0,1]^2$ where any unilateral change to one coordinate, in particular an infinitesimal change, does not improve the potential --- hence in particular this is a fixed point of the gradient descent dynamics.
Before we talk about reducing the number of actions, there is an infamous technical obstacle around dealing with {\em mixed} strategies: a mixed strategies of the players need not correspond to any particular point $x$, let alone one that is a fixed point of the gradient descent dynamics. 

\begin{challenge}
Mixed strategies need not correspond to any particular point.
\end{challenge}

It is not possible to completely rule out mixed strategies, but we will incentivize players to choose strategies that correspond to $x_j$'s that are very tightly concentrated. If the potential function satisfies a mild smoothness condition (see Remark~\ref{rem:smoothness}), we can guarantee that the gradient is approximately constant in any sufficiently small neighborhood. Therefore if the mixed $x$'s are  supported in a neighborhood that does not contain an approximate gradient descent fixed point, there is some direction where the gradient always shows an improving deviation.

To force the mixed strategies to concentrate, we use the {\em imitation game}, which has been the driving force in recent advances on complexity of Nash equilibrium (e.g.~\cite{QCNash,RW16,Rub2p,RubGraph,CCNash,GR18,GKP19}, and in particular the authors' own recent work on communication complexity of Nash equilibrium in potential games~\cite{BR20}.
Specifically, we add $2$ additional players who choose a second point $y \in [0,1]^2$, and update the identical utility function to 
\begin{gather}\label{eq:imitation}
U(x,y) :=-||x-y||_2^2+\epsilon \phi(x),
\end{gather} 
for a sufficiently small $\epsilon>0$. 

First, since the main incentive of each player is to minimize the square distance from the opponent the best-reply against any \emph{mixed} strategy of the opponent is $\epsilon$-close to the expectation of the opponent's strategy. Mutual imitation of both players implies that their (mixed) strategies are supported $\epsilon$-close to each other.\footnote{This concentration of mixed strategies, although intuitive, is the most challenging part of our proofs. The formal proof contains quite a few intermediate steps. See Section~\ref{sec:analysis} for details.} If in this region the gradient in some direction is positive (and does not get out of the region $[0,1]^2$), then the $x$-player has an incentive to mildly deviate toward this direction: her imitation loss is quadratic while her improvement in the gradient is linear. For sufficiently small deviation the linear term is more significant and hence it cannot be an equilibrium.

This trick allows us to focus on pure strategies (for now - mixed strategies will come back to haunt us throughout the proof). We now move to reduce the number of actions.

\subsubsection*{Strategic binary representation and veto players}

\begin{challenge}
We are interested in games with a polynomial number of actions for each player.
\end{challenge}

To obtain a finite game, we discretize the real numbers to finite precision.
But we need to work with exponential precision (approximations up to $\pm 1/\poly(n)$ are trivial), and so can no longer afford to let a single player represent each coordinate.
 
We consider binary representation of each $x_j,y_j \in [0,1]$ and we replace a single player who chooses a real number by a polynomial number of \emph{bit-players}, each chooses a single bit in the binary representation. This modification is insufficient: if $x_j$ wants to move infinitesimally from the number  $011...1$ to the number $100...0$ (as is indeed the case in the potential+imitation game~\eqref{eq:imitation}) this requires a \emph{joint} switching of actions of all bit players. In particular, a bad equilibrium may arise where players are playing $011...1$, the gradient in this coordinate is positive, but no bit-player wants to switch her action unilaterally. 

\begin{challenge}
We are interested in a binary representation where there is always a player who can unilaterally shift the number infinitesimally in either direction.
\end{challenge}

Over the last few years we've tried several different approaches to overcoming this obstacle. For example, one natural idea is to use the Gray code (or variants) where every two consecutive strings differ by only one bit. The issue with this approach is that it greatly complicates bit-player utilities: for the least-significant-bit-player to know in which direction it prefers to go it must know the parity of all other bit players for that coordinate. (And things get even messier when those more significant bit players use mixed strategies...)

We overcome this obstacle by introducing an additional \emph{veto player}: the veto player has the power to impose a veto on any suffix of the binary string. The possible vetoes are $011...1$ and  $100...0$. If the veto player imposes such a veto then the actions of the bit players in the suffix are essentially ignored and, instead, the veto action of the veto player is counted. This resolves the problematic issue above: if the $x$-players are playing $011...1$ then they can move to $100...0$ by a unilateral deviation of the veto player.
The veto player has an additional important role in our reduction: it has an incentive to veto bit players who are playing mixed (because those mixed strategies cause a high imitation loss). So the existence of the veto player allows us to argue that the bits of $x$ will essentially be pure.\footnote{For this task of vetoing players that are playing mixed we need to mildly enrich the set of possible vetoes. In addition to $011...1$ and $100...0$ the veto player can choose also the vetoes $00...0$, $11...1$, $0011...1$ and $1100...0$. It turns out that for our reduction this enrichment is sufficient.}

The next problem that arises with the basic imitation+potential game~\eqref{eq:imitation} is that in a ``natural'' problem the utility should not be specified by the arithmetic circuit computing $\phi()$.

\subsubsection*{Circuit players for computing the potential and its gradients}

\begin{challenge}
The utilities of players in our game should be explicitly given, i.e. we cannot use the term $\phi(x)$ in the utility. 
\end{challenge}

Similarly to the existing literature on the complexity of Nash equilibrium (see e.g., \cite{DGP,CDT,RubGraph}) in general games (i.e., not potential games), instead of writing the potential $\phi$ in the utility  we introduce \emph{circuit players} that implement the circuit gates and are responsible for computing $\phi$.

\begin{remark}\label{rem:decreasing-weights}
Unlike \cite{DGP,CDT,RubGraph} our circuit players have identical interest both with their predecessors and their successors in the circuit. We set the utility in the interaction with predecessors higher than the interaction with predecessors. This incentivizes each circuit player to match its bit to the correct computation of the predecessors (rather than adjusting her input in a way that her successor's action will be correct). This results in an exponential decay of the utilities in the depth of the circuit. In particular, the computed potential $\phi$ in the output has exponentially lower weight in the common utility relative to the input. 
\end{remark}

Circuit players allow us to correctly compute the potential, but that comes with the cost of separating the potential $\phi(x)$ from the players who determine $x$:

\begin{challenge}
Before we introduced circuit players, the gain of deviation from $x$ to $x'$ causes an immediate reward once the potential increases from $\phi(x)$ to $\phi(x')>\phi(x)$. Namely, when $x$-players switch from $x$ to $x'$ they gain the increase in the potential. Once we introduce circuit players, the $x$-players will not get the potential gain until the circuit players will update their actions and will calculate $\phi(x')$ instead of the currently computed $\phi(x)$. So how should we incentivize the $x$-players to unilaterally deviate in the direction of the gradient?\footnote{Moreover, such a deviation creates a loss in the circuit computations. This issue will be discussed in Challenge~\ref{challenge:input-circuit}.}
\end{challenge}

In our reduction, those that will be responsible for deviation toward better potential are the $y$-players. To give them immediate reward for moving toward better potential we modify the circuit to calculate $\nabla_j \phi(x)$ too. We add to the utility of the $y$-team the terms $\epsilon y_j \nabla_j \phi(x)$ (for $j=1,2$). Whenever $\nabla_j \phi(x)>0$ the $y$-team has an incentive to increase $y_j$. Whenever $\nabla_j \phi(x)<0$ the $y$ team has an incentive to decrease $y_j$. Once the $y$-team moves toward the gradient, the $x$-team has an incentive to move there too because it tries to imitate $y$. 

The next challenge in the reduction is to bootstrap this tiny incentive for $y$-players to deviate slightly in the gradient direction to a large movement of $x$-players.

\subsubsection*{Guide player and sampling gadgets}

\begin{challenge}\label{challenge:input-circuit}
When the $x$-team moves, it should flip the input to the circuit, causing the first gate to pay for wrong computation. This force is exponentially more significant than the force of trying to imitate $y$ that have mildly moved toward the better potential in the direction of the gradient.\footnote{Correct circuit computations has exponentially higher weight than the computed gradient (see e.g.~Remark~\ref{rem:decreasing-weights}). The $y$-team moves toward the gradient in a step size that is proportional to the gradient. Therefore, indeed the force of maintaining correct input computations is more significant.} 
\end{challenge}

We resolve this obstacle by introducing three modifications. First, we let the $x$-team and the $y$-team choose real numbers with precision that is much finer than the bits that influence the circuit's input. We denote by $N_{in}$ the precision of the circuit's input, where the total number of bits for the $x$-team is $K\gg N_{in}$. This allows the $x$-team to imitate the move of the $y$-team in the direction of the gradient whenever \emph{$x$ is far from a multiple of $2^{-N_{in}}$}. If $x$ is located close to the $2^{-N_{in}}$-grid, the $x$-team does not want to change the first $N_{in}$ bits due to the reason of Challenge~\ref{challenge:input-circuit}.

The second modification applies the \emph{sampling technique} from the \PPAD-literature. This technique was first introduced for 3-D in~\cite{DGP} (who called it the ``averaging maneuver''), and later adapted to high dimensions by~\cite{CDT}. The basic idea is that instead of having a single $x$-team we introduce many copies for the $x$-team. Each team tries to imitate a different \emph{shift} of the $y$-team. We design the shits to ensure that if one $x$-team is located close to the $2^{-N_{in}}$-grid then the other teams necessarily are located far from the grid.

Challenge~\ref{challenge:input-circuit} deals with an exponential gap between the force pushing towards the gradient and the force resisting flipping input bits for the circuit, so taking a uniform average over a polynomial number of samples cannot resolve this issue. 
Instead, our third modification introduces a gadget that carefully and adaptively assigns weights to the samples. This gadget is inspired by a classic \PLS-completeness proof of~\cite{Krentel90}.
Specifically, we add a \emph{guide player}, whose role is to choose a ``well behaved'' sample of $x$: a sample that performs accurate imitation, computes the potential correctly, and has a relatively high value of the potential. 
The circuit computation of a particular $x$-sample is given high weight only if it is chosen by the guide player. Otherwise, if an $x$-sample is not chosen by the guide player, its circuit computation weight is negligible (relative to the incentive to imitate $y$). Our desire is that these problematic $x$-samples that are close to the boundary in one of the coordinates will not be chosen by the guide player, which will allow the $x$-sample to change the first $N_{in}$ bits to accurately imitate the $y$-team (with a shift). Thereafter, the circuit players will adjust their computation to the new input, and then the guide player could choose this $x$-sample again.

The intuition for  guide player's avoidance from choosing close-to-grid problematic samples is as follows. We compare the utility for the guide player in the problematic sample $x_{i,j}$ with her utility in a well behaved sample $x_{i',j}$ that is located far from the grid. The imitation in $x_{i',j}$ sample is very precise because the phenomenon of avoiding switching bits due to wrong circuit computations does not occur when the imitation target is far from the grid: an infinitesimal movement toward the imitation target does not change the first $N_{in}$ bits. Circuit's computation in $x_{i',j}$ is also correct because the first $N_{in}$ input bits are deterministic. The only term which, in principle, might be inferior for $x_{i',j}$ is the potential term. However, this cannot be the case either: assume that $\nabla_j \phi(x)>0$. In such a case $y_j$ has an incentive to (mildly) \emph{over-imitate} her target $\hx_j$. If the problematic team $x_{i,j}$ is located from the left side of the grid-point this means that the potential that is computed there is \emph{weakly lower} than the potential in all other samples, and in particular, the potential at $x_{i',j}$ is either identical or better than in the problematic sample; see Figure \ref{fig:sam} (a). If, on the other hand, the problematic team of $x_{i,j}$ is located from the right side of the grid-point this means that the imitation target $\hy_{i,j}$ is also located from the right side the grid-point (recall that $y_j$ over-imitates $\hx_j$). In such a case $x_j$ can smoothly move toward her imitation target $\hy_{i,j}$ without changing the first $N_{in}$ bits. See Figure \ref{fig:sam}(b).

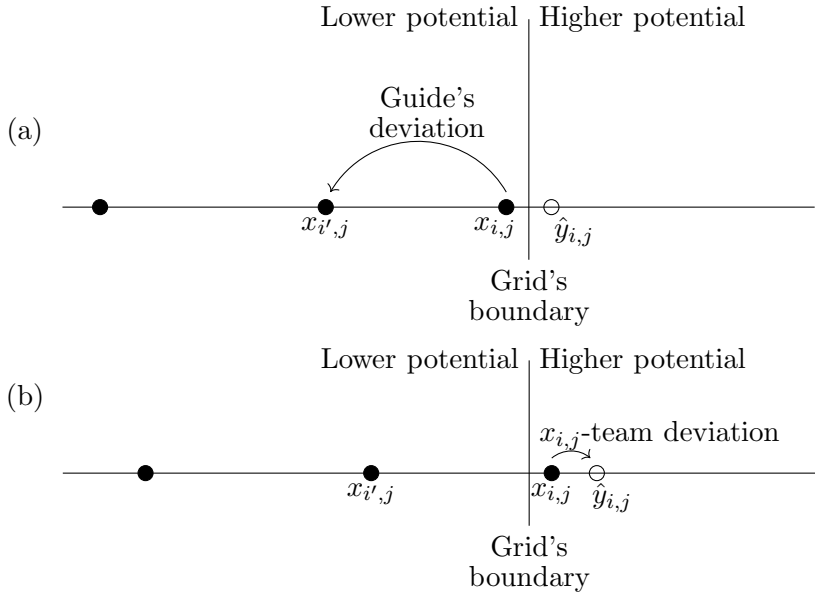
\begin{figure}
\caption{Figure \ref{fig:sam}(a) demonstrates the profitable deviation of the guide player in case the problematic sample $x_{i,j}$ is located from left side of the boundary. Figure \ref{fig:sam}(b) demonstrates the profitable deviation of $x_{i,j}$ team in case this problematic sample is located from right side of the boundary.}\label{fig:sam}
\begin{center}
\begin{tikzpicture}
\draw (1,0)--(11,0);
\filldraw (1.5,0) circle(0.1);
\filldraw (4.5,0) circle(0.1);
\draw (7.5,0) circle(0.1);
\filldraw (6.9,0) circle(0.1);
\draw (7.2,-0.7)--(7.2,2.5);
\node[below] at (4.5,0) {$x_{i',j}$};
\node[below] at (6.75,0) {$x_{i,j}$};
\node[below] at (7.8,0) {$\hy_{i,j}$};
\node[below] at (7.2,-0.7) {Grid's};
\node[below] at (7.2,-1.1) {boundary};
\draw[->] (6.9,0.2) arc (30:150:1.35);
\node[above] at (5.85,1.2) {Guide's};
\node[above] at (5.85,0.8) {deviation};
\node[left] at (7.2,2.5) {Lower potential};
\node[right] at (7.2,2.5) {Higher potential};
\node at (0.5,1) {(a)};
\end{tikzpicture}

\begin{tikzpicture}
\draw (1,0)--(11,0);
\filldraw (2.1,0) circle(0.1);
\filldraw (5.1,0) circle(0.1);
\filldraw (7.5,0) circle(0.1);
\draw (8.1,0) circle(0.1);
\draw (7.2,-0.7)--(7.2,1.5);
\node[below] at (5.1,0) {$x_{i',j}$};
\node[below] at (7.5,0) {$x_{i,j}$};
\node[below] at (8.3,0) {$\hy_{i,j}$};
\node[below] at (7.2,-0.7) {Grid's};
\node[below] at (7.2,-1.1) {boundary};
\node[left] at (7.2,1.5) {Lower potential};
\node[right] at (7.2,1.5) {Higher potential};
\draw[->] (7.5,0.2) arc (130:50:0.4);
\node[right] at (7.2,0.5) {$x_{i,j}$-team deviation};
\node at (0.5,1) {(b)};
\end{tikzpicture}
\end{center}
\end{figure}

\section{Definitions and Preliminaries}\label{sec:def}

In this section we formally define the computational problems that will be discussed in this paper. Along the definitions we provide some (simple) observations.

\subsection{Game Theoretic Definitions}

Our main interest game-theoretic object in this paper is the class of (explicit) congestion games with a polynomial number of facilities and actions for each player. 
We show that, in terms of complexity of finding a Nash equilibrium, this class is equivalent to $c$-polytensor identical interest  games (a generalization of polymatrix identical interest games\footnote{Also called {\em network coordination games} in \cite{CLS}.}).
We now formally introduce each of this classes.

\subsubsection{Congestion games}\label{sec:cong}

In congestion games (see  \cite{Rosenthal73,MS}) we have a set of facilities $F$. Each player $i$ has a collection of subsets $S^i_1,...,S^i_{m_i}\subset F$ which are her actions. Given an action profile $S=(S^1,...,S^n)$ the congestion on each facility $f$ is defined to be the number of players that use this facility (i.e., $c_f(S)=|\{i:f\in S^i\}|$). Each facility has a cost function $l:[n]\rightarrow \mathbb{R}$ and the utility of player $i$ is defined to be 
$u_i(S^1,...,S^n)=\sum_{f\in S^i} l(c_f(S))$; namely, each player pays the total costs of all facilities she has chosen. In \emph{explicit} congestion games the possible actions of players (i.e., $S^i_1,...,S^i_{m_i}\subset F$) are given explicitly.

The problem $\textsc{Congestion}$ gets as an input an explicit congestion game and an $\epsilon_N$ and outputs an $\epsilon_N$-Nash equilibrium of the game.

\subsubsection{Polytensor Identical Interest Games}\label{sec:polytensor} 

An $n$-player \emph{polymatrix game} (see \cite{Yanovskaya68,Howson72}) is given by a tuple of two-player utility functions $(u^i_{i,j}(a_i,a_j))_{i,j\in [n]}$ and the utility of player $i\in [n]$ is the sum of these two-player utilities $u^i(a_i,a_{-i})=\sum_{j\in [n]} u^i_{i,j}(a_i,a_j)$. In the case where $u^i(a_i,a_j)=u^j(a_i,a_j)$ for every $i,j$ we denote this term by $u^{i,j}(a_i,a_j)$.

In a generalization of polymatrix games that we call \emph{$c$-polytensor games}, instead of two-player interactions (i.e., the local interaction can be described by a matrix), we allow $\le c$-player interactions (i.e., local interactions can be described by a tensor of order $\le c$). Namely, a $c$-polytensor game is given by a tuple $(u^i_{S}(a_S))_{i\in [n], S\subset [n], |S|=c, i\in S}$
where $a_S$ denotes the action profile of the players in $S$.
Similarly to polymatrix games the utility of a player is given by the sum of all her interactions: $u^i(a)=\sum_{S\subset [n], |S|=c, i\in S} u^i_{S}(a_S)$. In the case where $u^i(a_S)=u^j(a_S)$ for every $i,j\in S$ we denote this term by $u^S(a_S)$. 
In case where the number of actions for each player is bounded by $m$ note that the representation size of a $c$-polytensor game is $O(n^{c+1}m^c)$. In particular, if $c$ is a constant and $m=poly(n)$ then $c$-polytensor games admit a succinct representation.

The problem $c-${\sc Polytensor-IdenticalInterest} gets as an input a $c$-polytensor identical interest  game and an $\epsilon_N$ and outputs an $\epsilon_N$-Nash equilibrium of the game.

Our primary purpose for studying the problem {\sc Polytensor-IdenticalInterest} is as an intermediary problem for our reductions, but its complexity may also be of independent game theoretic interest.

We argue that every $c$-polytensor identical interest game for a constant $c$ is also an explicit congestion game.
 The set of facilities is $\{a_S:S\subset [n],|S|=c,a_i\in A_i \text{ for every } i\in S \}$ where $A_i$ denotes player's $i$ action set. By choosing an action $a_i \in A_i$ player $i$ chooses the set of facilities $\{(a_i,a_{S'}) : S'\subset [n],|S'|=c-1 \}$; namely she chooses all facilities where her action is relevant (i.e., $i\in S$) and her action is $a_i$. In congestion games we should assign a common utility for all players that is a function of \emph{the number} of users of each facility. In this case we set the common utility of each facility $a_S$ to be $u^S(a_S)$ if it has $c$ users; and $0$ otherwise. Note that indeed only the ``true'' actions $a_S$ have non-zero utility and indeed after summing the utilities over all facilities we obtain exactly the common utility of the $c$-polytensor identical interest  game $\sum_{S\subset [n], |S|=c} u^S(a_S)$.
Hence, the class of $c$-polytensor identical interest  games is  essentially a subclass of explicit congestion games.\footnote{The converse is true (i.e., a congestion game is a $c$-polytensor identical interest  game) whenever for every facility the number of players that can potentially use it (i.e., have an action that includes this facility) is bounded by $c$.}
In particular, we have the following:
\begin{corollary}\label{fact:polytensor-vs-congestion} For every constant $c$ we have
$$c-\text{\sc Polytensor-IdenticalInterest} \le \text{\sc Congestion}.$$
\end{corollary}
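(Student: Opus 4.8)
The statement is, in essence, a formalization of the construction sketched in the paragraph that precedes it, so the plan is to (i) describe that construction precisely and (ii) check that it is a polynomial-time reduction, i.e.\ that the congestion game it produces has exactly the same utility functions — and hence exactly the same set of $\epsilon_N$-Nash equilibria — as the input $c$-polytensor identical interest game.

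First I would fix the input game $G$, given by sub-utilities $(u^i_S(a_S))$ with $u^i_S = u^j_S =: u^S$ for $i,j\in S$, and build an explicit congestion game $G'$ as follows. The facility set is $F = \{\, b_S : S\subseteq[n],\ |S|=c,\ b_S\in \prod_{i\in S} A_i \,\}$, i.e.\ one facility for each ``local configuration''; there are $\binom{n}{c} m^c = O(n^c m^c)$ of them. Each player $i$ keeps her action set $A_i$, and playing $a_i\in A_i$ is declared to select exactly the facilities $b_S$ with $i\in S$ and $(b_S)_i = a_i$ — all local configurations in which $i$ participates consistently with $a_i$. Writing down these lists is polynomial, so $G'$ is an explicit congestion game. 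For the cost function on facility $b_S$ I would set $l_{b_S}(c) := u^S(b_S)$ and $l_{b_S}(k):=0$ for all $k\neq c$. This is where the identical-interest hypothesis is used: since $u^i_S(b_S)$ does not depend on which $i\in S$ we pick, the value $u^S(b_S)$ is well defined, and a single cost function per facility (as the congestion model demands) faithfully encodes it. Negative ``costs'' cause no issue.

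The key verification step is to show $u^{G'}_i(a) = u^{G}_i(a)$ for every pure profile $a$ and every player $i$. A player $i\in S$ uses facility $b_S$ iff $(b_S)_i = a_i$; since only the $c$ members of $S$ can use $b_S$ at all, the congestion $c_{b_S}(a)$ is always at most $c$, and equals $c$ precisely when $a_i=(b_S)_i$ for all $i\in S$, i.e.\ when $b_S = a_S$. Hence, under $a$, the only facilities carrying nonzero cost are the ``true'' ones $\{a_S : S\ni i\}$, and $u^{G'}_i(a) = \sum_{f\in S^i} l_f(c_f(a)) = \sum_{S:\, i\in S,\, |S|=c} u^S(a_S) = u^G_i(a)$. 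Since the utility functions of $G$ and $G'$ agree pointwise — hence also in expectation over any mixed profile — the two games have identical best-reply correspondences and the same set of exact and $\epsilon_N$-approximate Nash equilibria; returning any $\epsilon_N$-equilibrium of $G'$ therefore solves the $c$-{\sc Polytensor-IdenticalInterest} instance, and the whole construction is computable in time polynomial in the $O(n^{c+1}m^c)$ description length of $G$ for constant $c$.

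I do not expect a genuine obstacle here. The one thing that has to be gotten exactly right is the combinatorial bookkeeping in the verification step — namely that every facility is subscribed by at most $c$ players, with equality exactly on the ``diagonal'' $b_S = a_S$, so that summing facility costs collapses to the polytensor utility — together with the routine check that the number of facilities and the action-to-facility incidence lists remain polynomial in the representation size of $G$.
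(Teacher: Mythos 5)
Your construction is exactly the paper's: one facility per local configuration $a_S$, each player subscribing to all configurations consistent with her action, and a cost that equals $u^S(a_S)$ only at congestion level $c$ and vanishes otherwise, with the same diagonal/bookkeeping verification. The proposal is correct and matches the paper's proof.
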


\subsubsection{Nash equilibrium}
A mixed action profile $(x_i)_{i\in [n]}$ where $x_i\in \Delta(A_i)$ is an \emph{$\epsilon$-Nash equilibrium} if for every player $i\in [n]$ and every action $a_i\in A_i$ we have $u_i(x_i,x_{-i})\geq u_i(a_i,x_{-i})-\epsilon$. For $\epsilon=0$ such a profile is called a \emph{Nash equilibrium}.

\subsection{Optimization definitions}\label{sub:opt-def}

We consider the total search problem {\sc GD-FixedPoint} that naturally arise in the context of continuous optimization via gradient-based methods:

\begin{meta}[{\sc GD-FixedPoint}]\label{def:gdfp}\hfill

Given a smooth function $\phi:[0,1]^J \to [0,1]$ and an $\epsilon>0$ find a point $x\in [0,1]^J$ such that 
\begin{enumerate}
\item $\nabla_j \phi(x) \geq -\epsilon$ for every dimension $j\in J$ such that $x_j>0$.
\item $\nabla_j \phi(x) \leq \epsilon$ for every dimension $j\in J$ such that $x_j<1$.
\end{enumerate}
\end{meta}

Namely we are searching for a point $x$ such that a movement in the gradient direction will either be impossible (i.e., when $x_j=0$ the $j$-th coordinate cannot be decreased and when $x_j=1$ the $j$-th coordinate cannot be increased) or alternatively, there will be no improvement (approximately zero gradient).

So far we neither specified the form in which the function $\phi$ is given to the algorithm, nor provided any special properties of $\phi$. As we will see, variants of {\sc GD-FixedPoint} capture the class $\CCLS$ and the problem {\sc KKT}\footnote{\cite{FGHS20} also consider a problem that they call KKT which corresponds to the most general variant of {\sc GD-FixedPoint}. They leave as an open problem the computational complexity of \cite{CLS}'s (explicit) {\sc KKT}, which we settle in this paper.}  that have been defined in \cite{CLS}.

In the first, most general, variant of {\sc GD-FixedPoint} 
$\phi$ is given by an arbitrary well-behaved arithmetic circuit with $\{+,-,\times,\times \zeta,\max,\min,>\}$ gates, where $\times \zeta$ denotes multiplication by a constant. $\max,\min$ gates are not smooth and $>$ gate is not even continuous, but it is promised%
\footnote{To make the problem total, the algorithm may either return a solution or a certificate that this promise is violated; however the $\PPAD \cap \PLS$-complete instances~\cite{FGHS20} always satisfy the promise, so it suffices to prove hardness for this special case.} that $\nabla \phi$ is defined and $\alpha$-Lipschitz in the $\ell_2 - \ell_2$ sense (the parameter $\alpha$ is given as an input to the algorithm). 
The {\em well-behaved} desideratum restricts that the number $\times$ gates on any input-output path is at most logarithmic (in the size of the circuit). This desideratum is important to avoid doubly-exponential numbers from repeated squaring. See~\cite{FGHS20} for a detailed discussion.

In particular, we focus on a special case of this variant where the domain of the potential is $[0,1]^2$ (i.e., $|J|=2$). We call this problem {\sc 2D-GD-FixedPoint}.
By~\cite{FGHS20}, this special case is as hard as the general high dimensional case:
\begin{theorem*}[\cite{FGHS20}]
{\sc 2D-GD-FixedPoint} is $\PPAD \cap \PLS$-complete.
\end{theorem*}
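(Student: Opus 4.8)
The plan is to prove the two directions of completeness separately. Membership in $\PLS$ is the easy direction: regard $\phi$ itself as the potential of a local-search instance whose feasible points are the nodes of a sufficiently fine grid on $[0,1]^2$ (of mesh $\delta$ small relative to $\epsilon$ and the smoothness parameter $\alpha$), and whose neighborhood operator sends a point $x$ to the single point $\Pi_{[0,1]^2}\!\left(x-\eta\nabla\phi(x)\right)$ obtained by one projected-gradient step with a suitably small $\eta$. By smoothness, this step increases $\phi$ by an amount bounded below by a fixed function of $\epsilon,\alpha,\delta$ whenever $x$ is not an $\epsilon$-approximate fixed point, so the local optima of this instance are exactly the {\sc 2D-GD-FixedPoint} solutions. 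Membership in $\PPAD$ is equally standard: the map $g(x)=\Pi_{[0,1]^2}\!\left(x-\eta\nabla\phi(x)\right)$ is a Lipschitz self-map of $[0,1]^2$, its exact fixed points are the KKT points, and its approximate fixed points decode to $\epsilon$-approximate fixed points, so the problem reduces to finding an approximate Brouwer fixed point of an explicitly given Lipschitz function. Restricting to two dimensions changes nothing here, so {\sc 2D-GD-FixedPoint}$\,\in\PPAD\cap\PLS$.

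All the content is in hardness. I would start from an arbitrary $\Pi\in\PPAD\cap\PLS$ and use its two memberships: from $\Pi\in\PPAD$ one obtains an {\sc EndOfLine}-style successor/predecessor structure, and from $\Pi\in\PLS$ one obtains a potential that is monotone along that structure, and these are merged into a single ``line-with-potential'' object in which the endpoints of the line encode solutions of $\Pi$ (making this merge consistent is already part of the work). The target is then to compile this object into a $\poly(n)$-size well-behaved arithmetic circuit describing a function $\phi:[0,1]^2\to[0,1]$ whose gradient is $\poly(n)$-Lipschitz (exponential Lipschitz constants are permitted), in such a way that every $\epsilon$-KKT point of $\phi$ can be decoded back into a solution of $\Pi$.

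The heart of the construction — and the step I expect to be the main obstacle — is the embedding into only two dimensions. In fixed dimension Brouwer fixed points are tractable, so the entire combinatorial hardness must be carried by the exponential size of the circuit rather than by the dimension. The idea is to tile $[0,1]^2$ with exponentially many tiny cells indexed by configurations of the merged object, lay the ``line'' as a snaking path through these cells, and define $\phi$ cell-by-cell so that its negative-gradient flow transports the state from a cell's entry side to the exit side adjacent to the successor configuration, while $\phi$ strictly increases along the flow at a rate dictated by the $\PLS$ potential. Then inside any ``path cell'' the gradient is bounded away from $0$ in the flow direction, so there are no spurious stationary points, and the only $\epsilon$-KKT points lie in cells where the path has no valid continuation --- exactly the endpoints of the line, hence solutions of $\Pi$.

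Making this rigorous means resolving several interlocking technicalities, any one of which could be the true crux: (a) $\nabla\phi$ must be globally well defined and $\poly(n)$-Lipschitz across cell walls, and in particular at the frequent turns that a one-dimensional path must take inside a two-dimensional grid, which calls for carefully designed interpolation gadgets; (b) the successor rule (``given the address of a cell, which neighboring cell comes next'') together with the grid-addressing arithmetic must be compiled into a small circuit so that $\phi(x)$, and hence $\nabla\phi(x)$, is computable from the coordinates of $x$; and (c) the potential bookkeeping must not create $\epsilon$-KKT points that fail to decode, and the trivial known endpoint of the line must be excluded from the set of produced solutions. With the gadgetry in place, correctness is a case analysis over where an $\epsilon$-KKT point can sit, and poly-time computability of the circuit from $\Pi$ is routine; combining this with the membership above yields that {\sc 2D-GD-FixedPoint} is $\PPAD\cap\PLS$-complete. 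Notably the hardness already holds in two dimensions, which is precisely what lets the present paper begin its reduction to congestion games from such a clean low-dimensional starting point.
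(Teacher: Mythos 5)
First, a point of calibration: the paper does not prove this statement at all --- it is quoted verbatim from \cite{FGHS20} and used as a black box (it is the starting point of the paper's own reduction, not one of its results). So there is no in-paper proof to compare against, and what you have written is a reconstruction plan for the main theorem of a separate, substantial paper. Judged on its own terms, your plan does track the actual \cite{FGHS20} strategy at a high level (membership via a discretized local-search / Brouwer formulation of projected gradient steps; hardness via embedding a combined \textsc{EndOfLine}-plus-potential object as a path through an exponentially fine tiling of $[0,1]^2$), but it is a plan rather than a proof: every step that carries real content is explicitly deferred as a ``technicality,'' and those are precisely the places where the theorem lives.

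Two of the deferred steps deserve to be called out as genuine gaps rather than routine bookkeeping. First, the ``merge'': $\PPAD\cap\PLS$ is not a priori known to have complete problems, and your phrase ``a potential that is monotone along that structure'' is wrong as stated --- the $\PLS$ potential obtained from $\Pi\in\PLS$ has no monotonicity relation to the $\PPAD$ line obtained from $\Pi\in\PPAD$; they are two independent witness structures. The enabling lemma (which \cite{FGHS20} prove and rely on) is that the problem ``given an \textsc{EndOfLine} instance \emph{and} a \textsc{LocalOpt}-type instance, output a solution to \emph{either}'' is complete for the intersection; the entire difficulty is then to build one smooth $\phi$ on $[0,1]^2$ whose $\epsilon$-KKT points decode to a solution of one of the two unrelated instances. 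Second, the two-dimensional embedding: an exponentially long path snaking through the unit square necessarily brings distinct path segments exponentially close to one another, and you must prevent the gradient fields of neighboring segments from cancelling or creating spurious stationary points between them; resolving this (in \cite{FGHS20}, by using the $\PLS$ potential to control the ``height'' at which the line is routed, so that interference is impossible) is the main idea of the construction, not an interpolation detail. The membership direction also needs more care than ``the local optima are exactly the solutions'' --- the grid must be exponentially fine because $\alpha$ may be exponential, the neighborhood map must send grid points to grid points, and the promise on $\nabla\phi$ must be handled via violation certificates --- but those are indeed standard. As written, the proposal correctly identifies where the difficulties are but does not resolve any of them.
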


The second variant of {\sc GD-FixedPoint} is {\sc Con-GD-FixedPoint} where in addition to the $\alpha$-smoothness it deals with \emph{componentwise concave} potentials. Namely, for every $j\in J$ and every fixed $x_{-j}^* \in [0,1]^{J \setminus \{j\}}$ the function $\phi(x_j,x^*_{-j}):[0,1] \to [0,1]$ should be concave.
To guarantee a total variant of the problem we allow the algorithm to output either a gradient fixed point or an evidence for componentwise concavity violation that comes in the form of a triplet of points. We will see (Lemma \ref{lem:ccls=con}) that this problem is equivalent to {\sc CCLS}. 

The third variant of {\sc GD-FixedPoint} is {\sc Expl-GD-FixedPoint} considers instances where $\phi$ is a polynomial. Here the input is given by an explicitly description of monomials' coefficients. Note that this variant is a natural problem since $\phi$ is not described by a circuit.
We will see (Lemma \ref{lem:kkt=expl}) that this problem is equivalent to {\sc KKT}.

Finally the last, most restrictive, variant of {\sc GD-FixedPoint} is {\sc $c$-Deg-GD-FixedPoint} where $\phi$ is given explicitly and a sum of constant degree $c$ monomials, and each monomial is componentwise concave. This problem is a special case of {\sc Con-GD-FixedPoint} since each monomial is componentwise concave (and so is the sum) and it is also a special case of {\sc Expl-GD-FixedPoint} because monomials are given explicitly (and their total number does not exceed $|J|^c$).

\subsubsection{Relation to other problems in the literature}


Now we relate the defined above variants of {\sc GD-FixedPoint} with the existing continuous optimization problems that have been defined in \cite{CLS}.  

The problem {\sc CCLS} that defines the class $\CCLS$ (see  \cite{CLS}) is defined as follows:
\begin{definition}[$\textsc{CCLS}$ \cite{CLS}]\label{def:CCLS}
{\color{white} none}

\textbf{Input:} A potential function $\phi:[0,1]^J \rightarrow [0,1]$ that is given 
an arithmetic circuit  and two constants $\oep$ and $\delta$.

\textbf{Output:} Any of the following:
\begin{itemize} 
\item a point $x\in [0,1]^J$ such that $\phi(x)\geq \phi((1-\delta)x_j,x_{-j})-\oep$ and  $\phi(x)\geq \phi((1-\delta)x_j + \delta,x_{-j})-\oep$ for every $j\in J$; 
\item a pair of points that exhibit a violation of Lipschitz condition; or
\item a triplet of points that exhibit a violation of componentwise concavity violation.
\end{itemize}
\end{definition}
\vspace{2mm}

The two terms $(1-\delta)x_j$ and $(1-\delta)x_j+\delta$ follow from the multiplicative deviation variant of {\sc CCLS} $x_j \to (1-\delta)x_j + \delta e_j$.

\begin{remark}\label{rem:ccls-def}
Our definition of the {\sc CCLS} problem is slightly more restrictive than the original definition in \cite{CLS}: (i) we require that the domain is a hypercube rather than the more general product of simplices; and (ii) we require that the function has bounded smoothness. 
Since we show that our variant is also $\PLS\cap \PPAD$-complete, it implies in particular that the definitions are equivalent. 
\end{remark}

The equivalence of the {\sc CCLS} problem and our formulation via the {\sc GD-FixedPoint} is stated in the following lemma.

\begin{lemma}\label{lem:ccls=con}
{\sc CCLS} is computationally equivalent to {\sc Con-GD-FixedPoint}.
\end{lemma}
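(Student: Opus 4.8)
The plan is to prove the two directions separately; in both of them the potential $\phi$ is passed through unchanged and only the accuracy parameters are adjusted (plus, in the harder direction, a trivial rounding post-processing step). In both directions, concavity-violation triplets and Lipschitz-violation pairs are carried over verbatim, since the two problems allow exactly the same kinds of certificates, so I may concentrate on the case where $\phi$ is genuinely component-wise concave and $\alpha$-smooth and the oracle returns an actual solution point.

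\emph{Direction} {\sc CCLS} $\le$ {\sc Con-GD-FixedPoint}. Given a {\sc CCLS} instance $(\phi,\oep,\delta,\alpha)$, I would query the {\sc Con-GD-FixedPoint} oracle on $(\phi,\epsilon',\alpha)$ with $\epsilon':=\oep$, and claim that any gradient fixed point $x$ it returns is already a {\sc CCLS} solution. Fix a coordinate $j$ and set $g(t):=\phi(t,x_{-j})$, which is concave; by the tangent-line inequality $g(b)-g(x_j)\le g'(x_j)(b-x_j)$. Taking $b=(1-\delta)x_j$ (which is a genuine move only when $x_j>0$, where $g'(x_j)=\nabla_j\phi(x)\ge-\epsilon'$) gives $g((1-\delta)x_j)-g(x_j)\le(-g'(x_j))\,\delta x_j\le \epsilon'\delta x_j\le\oep$; symmetrically, taking $b=(1-\delta)x_j+\delta$ (a genuine move only when $x_j<1$, where $g'(x_j)\le\epsilon'$) gives $g((1-\delta)x_j+\delta)-g(x_j)\le \epsilon'\delta(1-x_j)\le\oep$. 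Hence $x$ satisfies both {\sc CCLS} inequalities; this direction is essentially immediate.

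\emph{Direction} {\sc Con-GD-FixedPoint} $\le$ {\sc CCLS}. Given $(\phi,\epsilon,\alpha)$ with $\phi$ component-wise concave and $\alpha$-smooth, I would query {\sc CCLS} on $(\phi,\oep,\delta,\alpha)$ with $\delta:=\min\{1,\epsilon/(4\alpha)\}$ and $\oep$ a sufficiently small (possibly exponentially small, but polynomially representable) polynomial in $\epsilon,1/\alpha,1/\sqrt{J}$, chosen so that the threshold $\theta:=4\oep/(\delta\epsilon)$ satisfies $\alpha\sqrt{J}\,\theta\le\epsilon/4$. On receiving a {\sc CCLS} solution $x$, I would round it to $\hx$ by setting $\hx_j:=0$ if $x_j<\theta$, $\hx_j:=1$ if $x_j>1-\theta$, and $\hx_j:=x_j$ otherwise, and output $\hx$. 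To argue $\hx$ is a gradient fixed point, fix $j$ and again let $g(t):=\phi(t,x_{-j})$; using that $g$ is concave with $\alpha$-Lipschitz derivative, each {\sc CCLS} inequality at $x$ turns into a bound on $g'(x_j)$: the left deviation (magnitude $m_L:=\delta x_j$) yields $g'(x_j)\ge-\oep/m_L-\alpha m_L$, and the right deviation (magnitude $m_R:=\delta(1-x_j)$) yields $g'(x_j)\le \oep/m_R+\alpha m_R$. When $\theta\le x_j\le1-\theta$ both $m_L,m_R\ge\delta\theta$, so both bounds lie in $[-\epsilon/2,\epsilon/2]$; when $x_j<\theta$ the right deviation still has magnitude $\ge\delta(1-\theta)$, giving $g'(x_j)\le\epsilon/2$, and transferring the bound to $\hx_j=0$ via the $\alpha$-Lipschitz gradient costs at most $\alpha\theta$ (and the coordinate $\hx_j=0$ carries no lower-bound constraint); the case $x_j>1-\theta$ is symmetric. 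Finally, replacing $x_{-j}$ by $\hx_{-j}$ changes $\nabla_j\phi$ by at most $\alpha\lVert x-\hx\rVert_2\le\alpha\sqrt{J}\,\theta\le\epsilon/4$, so $\hx$ meets the {\sc GD-FixedPoint} conditions for $\epsilon$.

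The main obstacle is the second direction, and concretely the behavior near the boundary of the cube: a {\sc CCLS} solution only certifies a vanishingly small improvement for the multiplicative deviation that pushes a near-boundary coordinate even closer to that boundary, so such a coordinate need not have a small gradient in that direction — which is exactly the direction in which {\sc GD-FixedPoint} imposes no constraint once the coordinate sits at the boundary. The rounding step exploits precisely this mismatch: it moves such coordinates to the boundary (discarding the constraint we cannot certify) while $\alpha$-smoothness lets us transport the constraint we \emph{can} certify — obtained from the large, interior-facing deviation — to the rounded point. The one place needing care is making the quantitative choices of $\oep,\delta,\theta$ (in terms of $\epsilon,\alpha,J$) so that all the $O(\alpha\cdot\text{distance})$ transfer losses and the $\oep/(\delta\theta)$ terms fit simultaneously inside the budget $\epsilon$; this is routine bookkeeping. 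Both maps are clearly polynomial-time, so the two reductions together give the claimed equivalence.
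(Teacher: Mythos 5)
Your proposal is correct and follows essentially the same route as the paper: the easy direction is the identical concavity/tangent-line argument, and the harder direction uses the same round-to-the-boundary trick combined with the mean value theorem and gradient Lipschitzness to convert the CCLS multiplicative-deviation guarantees into gradient bounds at the rounded point. The only (immaterial) difference is that you round with a separate, smaller threshold $\theta$ rather than reusing $\delta$ itself as the paper does.
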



\begin{proof}
For the reductions in both direction if the output is a triplet of points violating concavity this output is valid for the reduced problem as well. So we restrict attention to the interesting case where the output is a gradient fixed point (in one direction of the reduction) or a solution for {\sc CCLS} (in the opposite direction reduction).

Given an instance of {\sc CCLS} $(\phi,\oep,\delta)$ we set\footnote{For this direction of the reduction the parameter $\delta$ turns out to be irrelevant.} $\epsilon = \oep$. Let $x$ be a  {\sc GD-FixedPoint-Con} solution for $(\phi,\epsilon)$. By concavity we have
\begin{align*}
&\phi((1-\delta)x_j,x_{-j}) \leq \phi(x)-\delta x_j \nabla_j \phi(x)\leq \phi(x)+\delta x_j \epsilon \leq \phi(x) + \epsilon \delta \leq \phi(x) + \oep \\
&\phi((1-\delta)x_j+\delta,x_{-j}) \leq \phi(x)+\delta(1- x_j) \nabla_j \phi(x)\leq \phi(x)+\delta (1-x_j) \epsilon \leq \phi(x) + \epsilon \delta \leq \phi(x) + \oep.
\end{align*}
 

Conversely, given an instance $(\phi,\epsilon)$ of {\sc GD-FixedPoint-Expl} we set $\oep=\frac{\epsilon^3}{64J\alpha^2}$ and $\delta=\frac{\epsilon}{4\alpha}$. Let $x\in [0,1]^J$ be the {\sc CCLS} solution of $(\phi,\oep,\delta)$.
We $\delta$-round the point $x$ to the boundaries $\{0,1\}$ as follows.
For every $j\in J$ we set 
\begin{align*}
\ox_j=
\begin{cases}
0 & \text{if } x_j \in [0,\delta) \\
x_j & \text{if } x_j \in [\delta,1-\delta] \\
1 & \text{if } x_j \in (1-\delta,1]
\end{cases}
\end{align*}
We argue that $\ox$ is a solution for {\sc GD-FixedPoint}. For every dimension $j\in J$ we consider three cases.

\paragraph{Case 1: $x_j \in [0,\delta)$.} In such a case $\ox_j=0$ and hence we only need to prove that $\nabla_j \phi(\ox)\leq \epsilon$. Since $x$ and $\ox$ are $(\oep J)$-close, by the $\alpha$-Lipschitzness of the gradient we have $|\nabla_j (x)-\nabla_j(\ox)|\leq \oep J \alpha \leq \frac{\epsilon}{2}$. Therefore, it is sufficient to prove that $\nabla_j \phi(x)\leq \frac{\epsilon}{2}$. 

We know that $\phi((1-\delta) x_j+\delta,x_{-j})-\phi(x)\leq \oep$. By the mean value Theorem there exists $0\leq \epsilon' \leq \delta (1-x_j)$ such that $$\nabla_j \phi (x+\epsilon' e_j)= \frac{\phi((1-\delta) x_j+\delta,x_{-j})-\phi(x)}{\delta (1-x_j)}\leq \frac{\oep}{\delta (1-x_j)}.$$
By the Lipschitzness of the gradient we deduce that $\nabla_j \phi (x) \leq \frac{\oep}{\delta (1-x_j)} +\delta x_j \alpha \leq \frac{\oep}{\delta^2 } +\delta \alpha \leq \frac{\epsilon}{4}+\frac{\epsilon}{4}$. 

\paragraph{Case 2: $x_j \in (1-\delta,1]$.} We apply similar arguments to those of Case 1, but this time we consider the inequality $\phi(x)-\phi((1-\delta) x_j,x_{-j})\geq \oep$.

\paragraph{Case 3: $x_j \in [\delta,1-\delta]$.} We consider both inequalities $\phi((1-\delta) x_j+\delta,x_{-j})-\phi(x)\leq \oep$ and $\phi(x)-\phi((1-\delta) x_j,x_{-j})\geq \oep$ and apply the arguments of Cases 1 and 2.
\end{proof}

The {\sc KKT} problem that has been defined in \cite{CLS} is inspired by the following (multidimensional) formulation of Taylor's Theorem:

\begin{tay-theorem}[See \cite{CLS} Lemma 3.1]
For every function $\phi:\mathbb{R}^J \rightarrow \mathbb{R}$ with $\alpha$-Lipschitz gradients $\nabla \phi$ in the $\ell_2-\ell_2$ sense and for every $x_0,x\in \mathbb{R}^J$ we have $$|\phi(x)-\phi(x_0)-\nabla \phi (x_0) \cdot (x-x_0)|\leq \frac{\alpha}{2}||x-x_0||_2^2.$$ 
\end{tay-theorem}

The problem $\textsc{KKT}$ is defined as follows:

\begin{definition}[$\textsc{KKT}$ \cite{CLS}]\label{def:KKT}
{\color{white} none}

\textbf{Input:} A potential function $\phi:[0,1]^J \rightarrow [0,1]$ that is given by 
coefficients of monomials in $|J|$ variables\footnote{\label{foot:KKT} We follow the convention of \cite{CLS} and restrict the $\textsc{KKT}$ problem to explicit polynomials. To avoid confusions, the Karush–Kuhn–Tucker (KKT) conditions are valid for arbitrary potential functions.}, $\oep$ and $\kappa$.

\textbf{Output:} A point $x\in [0,1]^J$ such that $\phi(x)\geq \phi(y)-\frac{\alpha}{2}\oep^2-\kappa$ for every $y\in B(x,\oep)\cap [0,1]^J$, when $B(x,\oep)$ is the $\ell_2$-ball of radius $\oep$ around $x$.

\end{definition}
\vspace{2mm}

The following lemma shows the equivalence of the {\sc KKT} problem and the defined above {\sc GD-FixedPoint-Expl} variant of gradient-dynamic fixed point problem.

\begin{lemma}\label{lem:kkt=expl}
{\sc KKT} is computationally equivalent to {\sc Expl-GD-FixedPoint}.
\end{lemma}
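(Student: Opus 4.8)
The plan is to show both directions of the reduction, exploiting the fact that for an explicit polynomial $\phi$ one can compute $\nabla\phi$ symbolically (again as explicit polynomials), and that the Taylor bound in the statement of Taylor's Theorem translates between the first-order (gradient) and zeroth-order (function-value ball) formulations with only polynomial loss in parameters. Since both problems are about the same kind of input (explicitly given polynomials), the Lipschitz constant $\alpha$ of $\nabla\phi$ is derivable from the coefficients, so we may treat $\alpha$ as a known quantity on both sides; as in the proof of Lemma~\ref{lem:ccls=con}, we need not worry about the Lipschitz-violation escape clause here because $\alpha$ is honest.

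\emph{From {\sc Expl-GD-FixedPoint} to {\sc KKT}.} Given an instance $(\phi,\oep,\kappa)$ of {\sc KKT}, I would set the gradient-fixed-point tolerance to $\epsilon := \kappa/(\sqrt{J}\,\oep)$ (possibly with an extra constant factor) and solve {\sc Expl-GD-FixedPoint} on $(\phi,\epsilon)$. Let $x$ be the returned point. For any $y\in B(x,\oep)\cap[0,1]^J$ write $y-x = v$ with $\|v\|_2\le\oep$. Taylor's Theorem gives $\phi(y)\le \phi(x) + \nabla\phi(x)\cdot v + \tfrac{\alpha}{2}\|v\|_2^2 \le \phi(x)+\nabla\phi(x)\cdot v + \tfrac{\alpha}{2}\oep^2$. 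It remains to bound $\nabla\phi(x)\cdot v$ from above by $\kappa$: decompose the inner product coordinate-wise, and for each $j$ note that if $v_j>0$ then (since $y\in[0,1]^J$) we must have $x_j<1$, so the fixed-point condition gives $\nabla_j\phi(x)\le\epsilon$ and $\nabla_j\phi(x)v_j\le \epsilon|v_j|$; symmetrically if $v_j<0$ then $x_j>0$ and $\nabla_j\phi(x)\ge-\epsilon$, again giving $\nabla_j\phi(x)v_j\le\epsilon|v_j|$. Summing and using Cauchy–Schwarz, $\nabla\phi(x)\cdot v\le \epsilon\|v\|_1\le \epsilon\sqrt{J}\,\|v\|_2\le \epsilon\sqrt{J}\,\oep = \kappa$. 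Hence $\phi(y)\le\phi(x)+\tfrac{\alpha}{2}\oep^2+\kappa$, so $x$ is a valid {\sc KKT} solution.

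\emph{From {\sc KKT} to {\sc Expl-GD-FixedPoint}.} Given an instance $(\phi,\epsilon)$, I would pick a small radius $\oep$ (roughly $\oep := \epsilon/(2\alpha)$) and an additive slack $\kappa := \epsilon\,\oep/4$ (constants to be tuned), and solve {\sc KKT} on $(\phi,\oep,\kappa)$ to get a point $x$. I then need to produce a point satisfying the two gradient inequalities. The idea is to test each coordinate direction: for coordinate $j$ with $x_j<1$, plug $y = x+\oep\,e_j$ (truncated to stay in $[0,1]^J$, which is fine since $\oep$ is small unless $x_j$ is within $\oep$ of $1$) into the {\sc KKT} guarantee $\phi(x)\ge\phi(y)-\tfrac{\alpha}{2}\oep^2-\kappa$; combined with the Taylor lower bound $\phi(y)\ge\phi(x)+\oep\nabla_j\phi(x)-\tfrac{\alpha}{2}\oep^2$, this yields $\oep\nabla_j\phi(x)\le \alpha\oep^2+\kappa$, i.e. $\nabla_j\phi(x)\le \alpha\oep+\kappa/\oep\le\epsilon$; symmetrically with $y=x-\oep\,e_j$ when $x_j>0$. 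The remaining wrinkle is the boundary case where $x_j$ is within $\oep$ of $0$ or $1$: there I would round $x_j$ to the nearer endpoint exactly as in the proof of Lemma~\ref{lem:ccls=con}, using $\alpha$-Lipschitzness of $\nabla\phi$ to control the change in gradient and a one-sided mean-value / Taylor argument on the short interval to the endpoint. This rounding step, and getting all the constant factors among $\epsilon,\oep,\kappa,\alpha$ to line up, is the only real bookkeeping obstacle; conceptually it is the same device already used for Lemma~\ref{lem:ccls=con}, so I expect it to go through routinely.
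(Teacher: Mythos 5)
Your proposal is correct and follows essentially the same route as the paper: the forward direction is Taylor's Theorem plus the coordinate-wise sign argument bounding $\nabla\phi(x)\cdot(y-x)$ (you use Cauchy–Schwarz where the paper uses the cruder bound $\epsilon J\oep$), and the reverse direction is the same $\oep$-rounding to the boundary followed by a finite-difference estimate of $\nabla_j\phi$ from the KKT guarantee, with your Taylor lower bound playing the role of the paper's mean-value-theorem step. The only differences are in the choice of constants, which is immaterial.
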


\begin{proof}
Given an instance of {\sc KKT} $(\phi,\oep,\kappa)$ we set $\epsilon = \frac{\kappa}{J\oep}$ and we apply the {\sc GD-FixedPoint-Expl} algorithm for $(\phi,\epsilon)$. By Taylor's Theorem for every $y\in B(x,\oep)$ we have $$\phi(x) \geq \phi(y) - \nabla \phi \cdot (y-x) - \frac{\alpha}{2} \oep^2 \geq \phi(y) - \frac{\alpha}{2} \oep^2 - \epsilon J \oep \geq  \phi(y)-\frac{\alpha}{2}\oep^2-\kappa.$$

Conversely, given an instance $(\phi,\epsilon)$ of {\sc GD-FixedPoint-Expl} we set $\oep=\frac{\epsilon}{6J\alpha}$ and  $\kappa=\frac{\epsilon^2}{24J\alpha}$. Let $x\in [0,1]^J$ be the {\sc KKT} solution of $(\phi,\oep,\kappa)$.
We $\oep$-round the point $x$ to the boundaries $\{0,1\}$ as follows.
For every $j\in J$ we set 
\begin{align*}
\ox_j=
\begin{cases}
0 & \text{if } x_j \in [0,\oep) \\
x_j & \text{if } x_j \in [\oep,1-\oep] \\
1 & \text{if } x_j \in (1-\oep,1]
\end{cases}
\end{align*}
We argue that $\ox$ is a solution for {\sc GD-FixedPoint}. For every dimension $j\in J$ we consider three cases.

\paragraph{Case 1: $x_j \in [0,\oep)$.} In such a case $\ox_j=0$ and hence we only need to prove that $\nabla_j \phi(\ox)\leq \epsilon$. Since $x$ and $\ox$ are $(\oep J)$-close, by the $\alpha$-Lipschitzness of the gradient we have $|\nabla_j (x)-\nabla_j(\ox)|\leq \oep J \alpha \leq \frac{\epsilon}{2}$. Therefore, it is sufficient to prove that $\nabla_j \phi(x)\leq \frac{\epsilon}{2}$. 

Consider a point $y=x + \oep e_j$ which necessarily belongs to the hypercube. By the {\sc KKT} condition we know that $\phi(y)-\phi(x) \leq \frac{\alpha}{2}\oep^2 + \kappa$ and hence 
$\frac{\phi(y)-\phi(x)}{\oep} \leq \frac{\alpha}{2}\oep + \frac{\kappa}{\oep}$. By the mean value Theorem there exists $0 \leq \epsilon' \leq \oep$ such that $\nabla_j \phi(x+\epsilon' e_j) \leq \frac{\alpha}{2}\oep + \frac{\kappa}{\oep}$. By the Lipschitzness of the gradient we get $\nabla_j \phi(x) \leq \frac{3\alpha}{2}\oep + \frac{\kappa}{\oep} \leq \frac{\epsilon}{4} + \frac{\epsilon}{4}$. 

\paragraph{Case 2: $x_j \in (1-\oep,1]$.} We consider the point $y=x-\oep e_j$ and apply the symmetric arguments to those of Case 1.

\paragraph{Case 3: $x_j \in [\oep,1-\oep]$.} We consider both points $y=x\pm \oep e_j$ and apply the arguments of Cases 1 and 2.

\end{proof}

\subsection{Formal statement of our results}

Our main result is as follows:

\begin{theorem}\label{theo:main}
The problems {\sc Congestion, 5-Polytensor-IdenticalInterest, Con-GD-FixedPoint, KKT}, and {\sc Deg-5-GD-FixedPoint} are $\PPAD \cap \PLS$-complete.
\end{theorem}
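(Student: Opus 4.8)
The plan is to establish Theorem~\ref{theo:main} by a cycle of reductions together with the known membership and hardness facts. Membership in $\PPAD \cap \PLS$ is the easy direction: all five problems reduce to the most general variant of {\sc GD-FixedPoint} (for {\sc Congestion} and {\sc 5-Polytensor-IdenticalInterest} via a potential-function encoding, since these are identical-interest/potential games and Rosenthal's potential certifies that an approximate maximizer of the potential is an approximate Nash equilibrium; for {\sc KKT} and {\sc Deg-5-GD-FixedPoint} via Lemmas~\ref{lem:ccls=con} and~\ref{lem:kkt=expl} and the trivial inclusions already noted), and the most general {\sc GD-FixedPoint} is in $\PPAD\cap\PLS$ by~\cite{FGHS20}. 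So the whole content is the hardness direction.

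For hardness, I would prove a single chain of reductions closing the loop:
\[
\text{\sc 2D-GD-FixedPoint} \;\le\; \text{\sc 5-Polytensor-IdenticalInterest} \;\le\; \text{\sc Congestion} \;\le\; \text{(general) {\sc GD-FixedPoint}} \;\le\; \text{\sc 2D-GD-FixedPoint},
\]
and separately observe that {\sc Deg-5-GD-FixedPoint} and {\sc KKT} (equivalently {\sc Expl-GD-FixedPoint}) sit inside this loop. The first link is the heart of the paper and is carried out via the construction sketched in Section~\ref{sec:highlights}: starting from a smooth potential $\phi:[0,1]^2\to[0,1]$, build the potential+imitation utility $U(x,y)=-\|x-y\|_2^2+\epsilon\phi(x)$, replace each real coordinate by a polynomial number of bit-players controlling a binary representation, add veto players to guarantee a player can always shift a number infinitesimally and to punish mixing, add circuit players that compute $\phi$ and $\nabla\phi$ with geometrically decaying weights, and add the sampling gadgets and the guide player (inspired by Krentel's $\PLS$-completeness proof) to bootstrap the tiny gradient incentive on the $y$-team into a macroscopic movement of the $x$-team while neutralizing the exponentially stronger force resisting input-bit flips. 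One checks that every sub-utility depends on at most $5$ players and that all players have identical interest, so this is a $5$-polytensor identical-interest game; then Corollary~\ref{fact:polytensor-vs-congestion} gives the embedding into {\sc Congestion}. The second-to-last link ({\sc Congestion} $\le$ general {\sc GD-FixedPoint}) uses Rosenthal's potential again, expressed by an arithmetic circuit over the mixed strategies with the smoothness introduced by an appropriate rounding/mollification of the multilinear extension. The last link is Theorem~[\cite{FGHS20}]. Finally, to pull {\sc KKT} and {\sc Deg-5-GD-FixedPoint} into the complete class, note {\sc Deg-5-GD-FixedPoint} $\le$ {\sc Con-GD-FixedPoint} $\equiv$ {\sc CCLS} (Lemma~\ref{lem:ccls=con}) and {\sc Deg-5-GD-FixedPoint} $\le$ {\sc Expl-GD-FixedPoint} $\equiv$ {\sc KKT} (Lemma~\ref{lem:kkt=expl}), while hardness for {\sc Deg-5-GD-FixedPoint} itself follows because the potential built in the main reduction can be arranged to be a sum of component-wise concave monomials of degree at most $5$ (the imitation term $-\|x-y\|^2$, the bilinear circuit-consistency terms, and the gradient-reward terms $\epsilon y_j\nabla_j\phi$ are all of bounded degree and can be made component-wise concave by a standard completion-of-squares normalization, and the circuit itself can be unrolled into a low-degree polynomial since the well-behavedness assumption bounds the number of multiplications on any path).

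The main obstacle I expect is the concentration argument for the {\em mixed} strategies in the first reduction (the footnote in the excerpt flags this as ``the most challenging part''). The subtlety is that a mixed strategy of the $x$-team need not correspond to any single point $x\in[0,1]^2$, and even with the imitation term the best reply only places mass $\epsilon$-close to the opponent's {\em expectation}, so one must control not just the mean but the whole distribution: the veto player must be shown to actively veto any bit-player whose mixing contributes non-negligible imitation loss, and this has to interact correctly with the sampling gadgets (so that at most one $x$-sample is close to the $2^{-N_{in}}$-grid) and with the guide player's incentive to select a ``well-behaved'' sample with high potential. Pinning down the exact quantitative relationships among the parameters $\epsilon$, $N_{in}$, $K$, the circuit-depth decay rate, the number of samples, and the shifts — so that every desired deviation (for a bit-player, a veto player, the guide player, or a $y$-player moving along the gradient) is strictly improving unless $x$ is genuinely near a {\sc 2D-GD-FixedPoint} — is where the bulk of the technical work lies, and is deferred to Section~\ref{sec:analysis}.
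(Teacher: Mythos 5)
Your overall architecture matches the paper's: the heart is the reduction $\textsc{2D-GD-FixedPoint} \le \textsc{5-Polytensor-IdenticalInterest}$ (Sections~\ref{sec:game}--\ref{sec:analysis}), the embedding into {\sc Congestion} is Corollary~\ref{fact:polytensor-vs-congestion}, the equivalences with {\sc CCLS} and {\sc KKT} are Lemmas~\ref{lem:ccls=con} and~\ref{lem:kkt=expl}, and both membership and the base hardness come from~\cite{FGHS20}. You also correctly single out the mixed-strategy concentration analysis as the hard technical core.

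There is, however, one concrete gap in your treatment of {\sc Deg-5-GD-FixedPoint} (and hence of {\sc Con-GD-FixedPoint} and {\sc KKT}, whose hardness you derive from it). The paper explicitly flags that ``the only non-obvious aspect of extending this hardness to variants of {\sc GD-FixedPoint}'' is that a polytensor identical-interest game naturally yields a potential over a \emph{product of simplices} (the mixed-strategy space), whereas {\sc GD-FixedPoint} is posed over a \emph{hypercube}. Your proposal never addresses this mismatch: if you simply take the multilinear extension of the game's common utility over $[0,1]^A$, a gradient fixed point need not correspond to any mixed-strategy profile (the coordinates of a player need not sum to $1$), so the reduction fails. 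The paper's Proposition~\ref{pro:inKKT} resolves this by adding the penalty $-C\sum_i(s_i-1)^2$ with $C = 4\cdot 2^A$, showing every solution has $0.5 \le s_i \le 1.5$, and then arguing via a directional-derivative computation that the normalized profile is an $\epsilon_N$-Nash equilibrium; the squared penalty is also exactly what introduces the (component-wise concave) non-multilinear monomials. Your ``completion-of-squares normalization'' is aimed at the wrong target --- the multilinear monomials coming from the polytensor structure are already linear in each coordinate, hence component-wise concave --- and ``unrolling the circuit into a low-degree polynomial'' is not how the degree-$5$ bound arises: the circuit is implemented by boolean-action circuit \emph{players}, and degree $5$ comes from each local interaction involving at most five players (Proposition~\ref{pro:hypergraphical}), not from the arithmetic circuit for $\phi$. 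With the domain-modification step of Proposition~\ref{pro:inKKT} supplied, the rest of your outline goes through as in the paper.
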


As a corollary, we obtain the following equivalence of complexity classes.
\begin{corollary} $\CCLS = \PPAD \cap \PLS$.
\end{corollary}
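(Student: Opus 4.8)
The plan is to establish Theorem~\ref{theo:main} by a cycle of reductions, with the new content being a single reduction from the $\PPAD\cap\PLS$-complete problem {\sc 2D-GD-FixedPoint} into the class of $5$-polytensor identical-interest games; all other links are either already in the excerpt or are easy containments. Concretely: (i) membership in $\PPAD\cap\PLS$ of {\sc Congestion} (and hence of all the other listed problems, which are subclasses or are reducible into it) is standard, since a congestion game has a bounded integer Rosenthal potential, giving a $\PLS$ algorithm, while the general Nash problem is in $\PPAD$; (ii) {\sc Deg-5-GD-FixedPoint} $\le$ {\sc KKT} and {\sc Deg-5-GD-FixedPoint} $\le$ {\sc Con-GD-FixedPoint} are immediate from the definitions, and {\sc KKT} $\equiv$ {\sc Expl-GD-FixedPoint}, {\sc CCLS} $\equiv$ {\sc Con-GD-FixedPoint} by Lemmas~\ref{lem:kkt=expl} and~\ref{lem:ccls=con}; (iii) by Corollary~\ref{fact:polytensor-vs-congestion}, $5$-{\sc Polytensor-IdenticalInterest} $\le$ {\sc Congestion}; (iv) finding an (approximate) Nash equilibrium of a $5$-polytensor identical-interest game can be cast as a {\sc GD-FixedPoint} instance on the product of simplices (the mixed-strategy polytope), with the negative expected potential as objective — and since the local interactions have degree $\le 5$, the potential is a degree-$5$ polynomial, each monomial of which is multilinear and hence component-wise linear (a fortiori component-wise concave). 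So closing the cycle only requires the one hard reduction {\sc 2D-GD-FixedPoint} $\le$ $5$-{\sc Polytensor-IdenticalInterest}, which is exactly the construction outlined in Section~\ref{sec:highlights}.

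For that reduction I would proceed in the order the technical-highlights section lays out. Start from the target potential $\phi:[0,1]^2\to[0,1]$ with $\alpha$-Lipschitz gradient and build the potential$+$imitation game $U(x,y) = -\|x-y\|_2^2 + \epsilon\phi(x)$; the first job is the concentration argument showing that in any approximate equilibrium the mixed strategies of the $x$- and $y$-teams are supported in a tiny neighborhood, so that, because the gradient is nearly constant there, an approximate gradient fixed point must lie nearby — this forces us to (essentially) work with pure strategies. Next, replace each real coordinate by polynomially many bit-players plus a veto player, where the veto player can overwrite a suffix of the binary string by one of a small list of vetoes ($00\cdots0, 11\cdots1, 011\cdots1, 100\cdots0, 0011\cdots1, 1100\cdots0$); this both guarantees that some single player can always shift a coordinate infinitesimally in either direction (fixing the $011\cdots1\leftrightarrow100\cdots0$ carry problem) and lets the veto player punish bit-players who randomize, restoring purity. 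Then introduce circuit players computing $\phi$ and $\nabla_j\phi$ from the input bits, with interaction weights decaying exponentially in circuit depth (predecessor interactions weighted above successor interactions, Remark~\ref{rem:decreasing-weights}); give the $y$-team the extra reward terms $\epsilon\, y_j\nabla_j\phi(x)$ so that the $y$-team, not the $x$-team, absorbs the immediate incentive to climb the gradient, and let the $x$-team follow by imitation. Finally, to bootstrap the tiny $y$-incentive past the exponentially heavier "wrong input to circuit" penalty near the $2^{-N_{in}}$-grid, install the sampling gadget (many $x$-teams imitating distinct shifts of $y$, so at most one sits near the grid) together with the guide player à la Krentel, who adaptively up-weights one "well-behaved" sample; the casework in Figure~\ref{fig:sam} shows the guide never profits from selecting a near-grid sample, so the selected sample is free to update its top $N_{in}$ bits and track $y$. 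Throughout, every gadget must be implemented with interactions involving at most $5$ players at once — the $5$-polytensor constraint — and with identical interest among the players of each local interaction (Remark~\ref{rem:identical-interest}). One then verifies that the unique "honest" equilibria of this game correspond, after rounding, to solutions of {\sc 2D-GD-FixedPoint}, and that the reduction is polynomial-time with $\epsilon_N$ chosen inverse-polynomially in the smoothness and precision parameters.

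Given Theorem~\ref{theo:main}, the Corollary is essentially a restatement: on one hand, by Lemma~\ref{lem:ccls=con} the problem {\sc CCLS} is computationally equivalent to {\sc Con-GD-FixedPoint}, which Theorem~\ref{theo:main} shows is $\PPAD\cap\PLS$-complete; since $\CCLS$ is by definition the class of search problems reducible to {\sc CCLS}, completeness of {\sc CCLS} for $\PPAD\cap\PLS$ gives $\CCLS = \PPAD\cap\PLS$. (The containment $\CCLS \subseteq \PPAD\cap\PLS$ is the easy direction, already known from~\cite{CLS}'s $\CCLS\subseteq\CLS$ together with~\cite{FGHS20}'s $\CLS=\PPAD\cap\PLS$, or directly; the reverse containment is the new content, coming from the hardness half of Theorem~\ref{theo:main}.) I expect the main obstacle, by a wide margin, to be the concentration/purity analysis: controlling how much the $x$- and $y$-teams' mixed strategies can spread out in an approximate equilibrium, propagating that control through the bit-player and veto-player layers, and through the exponentially-decaying circuit weights, so that the guide-player casework actually applies — this is the step the authors themselves flag as "the most challenging part of our proofs."
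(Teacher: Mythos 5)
Your overall architecture matches the paper's: the corollary follows from Lemma~\ref{lem:ccls=con} ({\sc CCLS} $\equiv$ {\sc Con-GD-FixedPoint}) together with the completeness statement of Theorem~\ref{theo:main}, whose hardness half rests on the single reduction $\textsc{2D-GD-FixedPoint}\le\textsc{5-Polytensor-IdenticalInterest}$, and your sketch of that reduction (imitation game, bit/veto players, circuit players with depth-decaying weights, sampling plus guide player) is the paper's construction. You also correctly identify the equilibrium concentration analysis as the main burden, which you defer; that deferral is acceptable as a proposal, since the paper devotes Section~\ref{sec:analysis} to it.

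There is, however, one concrete gap in your step (iv). You close the cycle by casting Nash equilibrium of a $5$-polytensor identical-interest game as a {\sc GD-FixedPoint} instance ``on the product of simplices (the mixed-strategy polytope).'' But all variants of {\sc GD-FixedPoint} in this paper, including {\sc Deg-5-GD-FixedPoint}, are defined over the hypercube $[0,1]^J$ (Meta Definition~\ref{def:gdfp}), not over a product of simplices, so the multilinear-potential observation from~\cite{CLS} does not directly yield the reduction: a gradient fixed point of the multilinear extension on the full cube need not lie on (or project sensibly to) the simplex product. The paper handles this in Proposition~\ref{pro:inKKT} by augmenting the potential to $\phi(x)=-C\sum_i(s_i-1)^2+U(x)$ with $C=4\cdot 2^A$, showing any approximate fixed point must have $s_i\in[0.5,1.5]$, and then arguing via a directional-derivative computation that the normalized profile $(x_i/s_i)_i$ is an $\epsilon_N$-Nash equilibrium; the quadratic penalty monomials are component-wise concave, so the instance stays within {\sc Deg-5-GD-FixedPoint}. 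Your proposal needs this domain-modification argument (or an equivalent one) to be complete.
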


\subsection{Structure of the remainder of the paper}
Our main technical contribution (Sections~\ref{sec:idea},\ref{sec:game},\ref{sec:analysis}) is the reduction 
\begin{gather}\label{eq:main-reduction}
\textsc{2D-GD-FixedPoint} \le \textsc{5-Polytensor-IdenticalInterest}.
\end{gather}

In Section \ref{sec:idea} we provide intuition and ideas of the main reduction. 
Section \ref{sec:game} describes the reduction (i.e., the construction of the potential game) given a $\textsc{2D-GD-FixedPoint}$ instance. 
Section \ref{sec:analysis} includes the equilibrium analysis which shows that every equilibrium corresponds to a $\textsc{2D-GD-FixedPoint}$ solution.

\eqref{eq:main-reduction} implies that \textsc{5-Polytensor-IdenticalInterest} and {\sc Congestion} are $\PPAD \cap \PLS$-complete.
The only non-obvious aspect of extending this hardness to variants of {\sc GD-FixedPoint}, is that \textsc{5-Polytensor-IdenticalInterest} naturally corresponds to a potential function over a product of simplices (where each simplex corresponds to the feasible mixed strategies of each player), whereas $\textsc{GD-FixedPoint}$ is defined with hypercube domain. In Section~\ref{sec:domain} we show how to handle non-hypercube domains.

\section{Main result: an informal technical overview}\label{sec:idea}
In this section we provide a more detailed overview of our reduction, complementing the highlights described in Section~\ref{sec:highlights}.

\subsection{The utility}

We find it useful for the reader to jump directly to the formula of the game's utility, explain the notations in the utility, and explain which role each utility term plays in the reduction. Formal definitions of these notations appear in Section \ref{sec:game}.
We describe the game in the form of a sum of \emph{all} local interactions, which is equivalent (with respect to Nash equilibria analysis) to a polytensor identical-interest game; see Remark \ref{rem:identical-interest}.

\begin{align*}
\begin{aligned}
u=&-\sum_{p\in P,j\in J} (1+\1_{g=p}) [\ox^p_j-\oy_{j}-i(p,j)\epsilon_S]^2 &&\text{(imitation)}\\
&-\epsilon_C \sum_{p\in P,l\in \cL} (\epsilon_T+\1_{[g=p]}) 2^{-2d_l} w^p_{l} &&\text{(circuit)}\\
&+\epsilon_P \sum_{p\in P} \1_{[g=p]} (\phi^p-c^p) &&\text{(potential)}\\
&+\epsilon_G \sum_{p\in P,j\in J} \oy_{j} \1_{[g=p]} \triangle^p_j  &&\text{(gradient)}\\
&-\epsilon_T \sum_{p\in P, j\in J,k\in [K]} [\tx^p_{j,k} -\oy_{j}-i(p,j)\epsilon_S]^2 &&(x\text{-bit imitation)}\\
&-\epsilon_T \sum_{p\in P, j\in J,k\in [K]} (1+\1_{g=p})[\ox^p_{j} -\ty_{j,k}-i(p,j)\epsilon_S]^2 &&(y\text{-bit imitation)}\\
&-\epsilon_T\epsilon_G \sum_{p\in P, j\in J,k\in [K]} \ty_{j,k} \1_{[g=p]} \triangle^p_j  &&\text{(bit gradient)}\\
&+\epsilon_T \sum_{r\in R} m^x_{r} +\epsilon_T \sum_{j\in J} m^y_{j} &&\text{(veto)}.
\end{aligned}
\end{align*}

The parameters that appear in the utility satisfy $\epsilon_C \gg \epsilon_P \gg \epsilon_G \gg \epsilon_T$.

\paragraph{Imitation utility} The index $p\in P$ denotes a point in the sampling. The index $j\in J$ denotes a coordinate. The action of the guide player is denoted by $g$. The term $\ox^p_j$ denotes the $j$-th coordinate of the point $p$ (recall that this coordinate, as any other coordinate in the game, is chosen by $K$ bit players and one veto player). The term $\oy_j$ denotes the $j$-th coordinate of the point $p$. The term $i(p,j)\epsilon_S$ is the sampling shift of $y$ which the $x$ tries to imitate.

The quadratic imitation loss plays a central role in our reduction. This is a convenient imitation loss to work with due to two reasons. First, in a mixed Nash equilibrium analysis, even if both the $x^p_j$ team and the $y_j$ team are playing mixed strategies, we have a clean formula for the expected utility (see Fact~\ref{fact:var}) from which it is clear that the $x$ team tries to imitate the expectation of $\oy_j$. Second, the fact that it is a finite degree polynomial allows us to argue that this term is in fact $c$-uniform polymatrix game (See Proposition \ref{pro:hypergraphical}). Note that the imitation is the most significant term in the utility and it remains most significant even if the guide player does not choose the point $p$ (i.e., even if $\1_{[g=p]}=0$). Having said that, note that in case the imitation is almost perfect, for instance if $\ox^p_j$ is pure and is located $\epsilon \approx 2^{-K}$ close to its target $\Ex [\oy_j+i(p,j)\epsilon_S]$ the gain from making the imitation perfect improves the imitation loss only by $O(2^{-2K})$ which is negligible relative to the circuit, the potential, and the gradient utility terms.  

\paragraph{Circuit utility} 
The index $l\in \cL$ refers to a line (wire) in a circuit. Note that every point has its own circuit to compute $\phi$ and $(\triangle_j)_{j\in J}$. The term $d_l$ is the depth of the gadget. The term $w^p_l$ denotes the indicator of whether the binary action played by the player that is located at $l$-th line in the $p$-th circuit is \emph{wrong}; namely, it is not the correct output of the corresponding gate given the input players' actions.

By the exponential decay of the weights of wrong computations, the incentive of a gadget to match her predecessors is $4$ times larger than her gain from correct computations of her successors. This creates an incentive for all gadget players to perform correct computations in case the input to the circuit is pure.\footnote{It remains unclear what the circuit players will do in case where the input bits are mixed.} 

Note that in case the guide player chooses a point, its weight for correct computations is relatively high ($\epsilon_C$); whereas in case the guide player chooses a point with probability 0, the incentive for correct computation is tiny ($\epsilon_T$). This allows points that are located close to the boundary and are not chosen by the guide player to pass though the boundary because for such points imitation is more significant. The tiny incentive $\epsilon_T$ for correct computations appears there to reach updating of the circuit to the new input after $x^p_j$ has passed the boundary. 

\paragraph{Potential utility} 
The term $\phi^p$ denotes the calculated potential by the circuit of the point $p$. Note that only potentials that are chosen by the guide player with positive probability appear in the utility. We prove (see Lemma \ref{lem:pure-L-input}) that the guide player does not choose the problematic points with mixed input that are located on the $2^{-N_{in}}$ grid. Therefore, the computed potential in the utility is in fact meaningful correct potential rather than some unexpected potential that is a result of mixture of the predecessors.

\paragraph{Gradient utility}
The term $\triangle^p$ denotes the gradient\footnote{It is actually more convenient to calculate $\triangle_j (x)=\phi(x+2^{-N_{in}})-\phi(x)$ rather than the gradient whose approximation is 
$\frac{\phi(x+2^{-N_{in}})-\phi(x)}{2^{-N_{in}}}$. Namely, $\triangle_j \approx 2^{-N_{in}} \nabla_j$.} estimated by the circuit of the point $p$. Similarly to the potential utility, we note that only the values of circuits that are chosen by the guide player with positive probability are counted. The factor of $\oy_j$ incentivizes the $y$-team to increase (decrease) $\oy_j$ if $\triangle_j>0$ ($\triangle_j<0$).

\paragraph{$x$-bit imitation utility}
The index $k\in [K]$ refers to the binary bits that create the real numbers $\ox^p_j$. The term $\tx^p_{j,k}$ denotes the number $\ox^p_j$ when we replace the $k$-th bit to be the one that the $k$-th player choose (even if she was vetoed). The role of this term is to provide an incentive for vetoed bits to update their action to the one that veto player choose for them (recall that when a bit player is vetoed, her decision does not affect $\ox^p_j$).

\paragraph{$y$-bit imitation utility}
Similar to $x$-bit imitation.

\paragraph{Bit gradient utility}
This is a technical term that allows us to apply on the $y$-team similar arguments to those we apply for the $x^p$ teams. Note that the weight of this term is extremely low: $\epsilon_T \epsilon_G$.

\paragraph{Veto utility}
The index $r\in R$ refers to real numbers chosen by the $x$-teams (the real numbers that generate the points $x^p$).
The integer number $m^x_r \in [K+1]$ is chosen by the veto player of the $x_r$ team and it denotes the bit starting from which the veto player implies her veto ($m^x_r=K+1$ means no veto). This utility incentivizes the veto player to impose veto on less bits. This utility term is needed in order that after the bit players match their bits with the veto player, the veto on them would be canceled.

\subsection{A better-reply path}\label{sec:br-path} 
Another angle that provides intuition about the reduction is a description of a better-reply path that converges to a gradient decent fixed point in our game. 
Roughly speaking the $x$ and the $y$ teams almost continuously move  in the direction of the gradient until they reach a gradient decent fixed point. (The move is continuous up to a very small step size $\approx 2^{-K}$.) Below we describe the steps of this process. For simplicity we focus on a single coordinate; for multiple coordinates these steps simply happen simultaneously.  

\begin{enumerate}
\addtocounter{enumi}{-1}

\item In this discussion we consider the case where the initial state of the best-reply dynamic is already quite arranged\footnote{The assumption that the initial state of the dynamic is arranged substantially simplifies the arguments. For an arbitrary mixed action profile it is more involved to identify the player that has an incentive to deviate. These cases are treated in the formal proof; see the discussion at the beginning of Section \ref{sec:analysis}.}: the $x$-team and the $y$-team players are playing pure; all the $x$-teams perfectly imitate the corresponding $y$ shifts; no veto is imposed by the veto player at any team; all the $x$-teams and the $y$ team are located far from the $2^{-N_{in}}$-grid and we denote $x^p\in [a2^{-N_{in}},(a+1)2^{-N_{in}})$ for some integer $a\in [2^{N_{in}}]$;  and finally we assume that in this region $\nabla \phi(a2^{-N_{in}})>0$. 

\item Any circuit player that is located on the input gadgets observes a bit player who plays pure and whom she tries to imitate. Therefore, this input circuit player strictly prefers to match her bit, so the inputs to the circuit is $a2^{-N_{in}}$ for every circuit. We proceed inductively with the depth of the circuit to deduce that all the circuits correctly compute $\phi(a2^{-N_{in}})$ and correctly compute $\nabla (a2^{-N_{in}})$.

\item The guide player sees identical samples in terms of the imitation loss and in terms of the computed potential and might play mix at this point.

\item The $y$-team will be better off by deviating to $y+2^{-K}$: this causes an imitation loss of $|P|2^{-2K}$ ($2^{-2K}$ for each one of the $|P|$ $x$-teams) but increases the gradient utility by $\epsilon_G2^{-K} \nabla (a2^{-N_{in}})\gg |P| 2^{-2K}$. 
If the least significant bit of $y$ is $0$ it is simply flipped to $1$; otherwise consider the suffix $01...1$ beginning with the least significant $0$: the $y$-vetor player prefers to impose a $10...0$ veto over the existing $01...1$.

\item After the bit players corresponding to the suffix of $y$ are vetoed, their action affects only the $y$-bit imitation term. In this term their bit is aggregated with the veto player's assignments for the other bits. Therefore, every vetoed bit player prefers to match the action of the veto player. 

\item After all vetoed bit players have updated their bits to $100...0$ the veto player prefers to cancel her veto on these players because of the veto utility.

\item After the $y$ team has moved by $2^{-K}$ the $x$-teams prefer to move by $2^{-K}$ too in order to perfectly imitate their shifts of the $y$-team. Note that such a move does not affect the first $N_{in}$ bits and hence does not affect the circuit utility which continues to be perfect. Namely, each $x$-team separately updates the actions according to Steps (3),(4), and (5).

\item The $x$ and $y$ team proceed moving in the positive direction by repeatedly applying Steps (3)-(6) until one of these teams reaches the boundary $(a+1)2^{-N_{in}}$. 

\item Let $x_{-1},x_0,x_1$ denote three $x$-samples, trying to imitate $y$ with negative, zero, and positive shift, respectively. The first team that will reach the boundary is the $x_{1}$ team. At this point it will stop increasing $x_{1}$ because such a change will cause wrong computations of the circuit. Meanwhile teams $x_{0}$ and $x_{-1}$ proceed updating their actions and proceed imitating their shifts of $y$ perfectly. 

\item Now the guide player is no longer indifferent between all samples and she prefers to avoid choosing $x_{1}$ because this sample has non-perfect imitation while the computed potential at $x_{1}$ is the same as in $x_{0}$ and in $x_{-1}$.


\item Once the guide player does not choose $x_{1}$ the weight of the circuit in this team becomes negligible and then the $x_{1}$-team prefers to increase the value $x_{1}$. In particular the $N_{in}$ most significant bits become $(a+1)2^{-N_{in}}$.

\item The circuit players of the $x_{1}$ team update their actions according to Step (2).

\item The computed potential $\phi((a+1)2^{-N_{in}})$ at $x_{1}$ team is now \emph{higher}  than the computed potential $\phi(a2^{-N_{in}})$ at $x_{0}$ and $x_{-1}$ teams and hence the guide player switches to choose team $x_{1}$.

\item All $x$-teams and the $y$-team proceed to increase their values by $2^{-K}$ by repeatedly applying Steps (3)-(6) until both other teams $x_{0}$ and $x_{-1}$ also pass the boundary from $a2^{-N_{in}}$ to $(a+1) 2^{-N_{in}}$. Now we got back to the initial state at Step (0) but now all the $x$-teams and the $y$-team are located in the next grid cell $[(a+1)2^{-N_{in}},(a+2)2^{-N_{in}})$.

\item Players repeatedly apply Steps (1)-(13) until they reach either the boundary ($0$ or $1$) or they reach a point $x$ with approximately equal potential values at $a2^{-N_{in}}$ and at $(a+1)2^{-N_{in}}$. In the former case the procedure terminates are a boundary point 0 (1) with negative (positive) gradient. In the latter case the procedure terminates at an interior point with approximately zero gradient.
  
\end{enumerate}

\section{Main Result: Constructing a hard game}\label{sec:game}
In this section we present the formal construction for our main result; we analyze the construction in Section~\ref{sec:analysis}.

\subsection{Proof's notations}
Our reduction (the game and its equilibrium analysis) involves quite a few notations. For reader's convenience we summarize all the notations their short verbal description and a reference for their definition in the following table. 

\LTXtable{\textwidth}{t1.tex}

\subsection{Reduction parameters}\label{sec:constants}

The parameters $\epsilon$ and $\alpha$ are inputs to the problem. $|J|=2$ denotes the dimension of potential's domain.

We define four parameters $D,N_{in},N_{out}$ and $K$ of polynomial magnitude. The depth of the modified circuit (which will be described in Section \ref{sec:circuit}) is $D$. The parameter $N_{in}$ is chosen to satisfy $2^{-N_{in}}=(\frac{\epsilon}{6\alpha})^3$.
The precision of the circuit computation is given by $$N_{out} := \left(N_{in} + D\right) \cdot 2^{\text{$\times$-gate-depth of arithmetic circuit}}.$$ It is chosen so that for any $N_{in}$-bit input we can compute the output exactly. By the well-behaved desideratum (see Section~\ref{sub:opt-def}), 
$2^{\text{$\times$-gate-depth of arithmetic circuit}}$ is polynomial in the input size. 
Finally, the parameter $K$ is chosen to satisfy the exponential Inequality~\eqref{eq:constants} below.

The exponentially small parameters are $\epsilon_R,\lambda,\epsilon_S,\oep_C,\epsilon_C,\epsilon_P,\epsilon_G,$ and $\epsilon_T$. These parameters are chosen to satisfy the following inequalities.  The relation between the parameters in our reduction are summarized below where the sign $f \gg g$ means that $f > g \cdot \polylog(g))$ for some sufficiently large $\polylog$.

\begin{align}\label{eq:constants}
\begin{aligned}
&\epsilon,\alpha^{-1} \gg 2^{-N_{in}}\gg \epsilon_R \gg 2^{-1.5N_{in}} \gg \lambda \gg \epsilon_S  \gg \oep_C \gg \oep_C^2 \gg \epsilon_C \gg \epsilon_C 2^{-2D}
\gg \epsilon^2_C 2^{-2D} \gg \epsilon_P \gg \epsilon_G \gg \\
&\gg \lambda \epsilon_G \gg 2^{-K} \gg 2^{-2K} \gg \epsilon_T.
\end{aligned} 
\end{align}

\subsection{Collective Choice of a Real Number}\label{sec:real}

We introduce several useful notations. Given a binary string $(x_k)_{k\in [K]}$ we denote by 
\begin{align}\label{eq:x<}
x_{<k} := \sum_{k'<k} 2^{-k'}x_{k'}\in [0,1]
\end{align}
the value of the bits that are more significant than $k$. Similarly, we define $x_{\leq k},x_{\geq k}$ and $x_{>k}$ where the summation is over the bits $\{k'\leq k \}, \{k' \geq k\}$ and $\{k'>k\}$ correspondingly. Same notations are valid for the case where $x\in [0,1]$ is a real value: we simply apply it on the binary representation of $x$.

As will be described in Section \ref{sec:circuit}, the circuit's inputs are numbers in $[0,1]$ in binary representation with precision $2^{-N_{in}}$ (i.e., of length $N_{in}$). The players in our game will (collectively) choose real numbers with much higher precision $2^{-K}$ (i.e., binary strings of length $K$) for $K\gg N_{in}$. The first $N_{in}$ bits of this string (i.e., the most significant bits of this string) will serve as an input to the circuit. 

Each real number in our game is chosen by a team of $K+1$ players: $K$ \emph{bit players} and a single \emph{veto player}. The bit players simply choose binary actions $x_k\in \{0,1\}$. Their profile of choices induces a binary string $(x_k)_{k\in [K]}$. The veto player chooses a coordinate $m\in [K+1]$ and imposes a veto on a suffix $(x_k)_{k\geq m}$. Such a choice replaces the suffix of the binary string $(x_k)_{k\geq m}$ by another string that also is chosen by the veto player. For each $m$, veto player has a constant size menu (specifically 6) of strings for the replacement of the suffix.  

Formally, the veto player chooses a pair $(m,t)$ where $m\in [K+1]$ and $t\in \{\vzero, 00\vone, 0\vone, 1\vzero, 11\vzero,\vone\}$. 
The $K$-bit string that is determined by the collective choices of the bit players and the veto player is denoted by $\ox^b=(\ox^b_k)_{k\in [K]}$ (the superscript $^b$ distinguishes the {\bf b}it string from the associated real number $\ox$).
The $m-1$ initial elements of $\ox^b$ are determined by the bit players; i.e., $\ox^b_k=x_{k}$ for every $1\leq k < m$. The suffix $(\ox^b_{k})_{k\geq m}$ is determined by $t$ and is defined to be the binary string 
$00...0$, $0011...1$, $011...1$, $100...0$, $1100...0$, or $11...1$ according to the type of veto ($\vzero$, $00\vone$, $0\vone$, $1\vzero$, $11\vzero$, or $\vone$ correspondingly).

The binary string $\ox^b$ straightforwardly induces a real value $\ox\in [0,1]$, by $\ox= (\ox^b)_{\geq 1}$.
Note that
\begin{align}\label{eq:veto-value}
\begin{aligned}
\ox:=x_{<m} + v(m,t) \ \text{ where } \ v(m,t):=\begin{cases}
0 &\text{if } t=\vzero \\
2^{-m-1}-2^{-K} &\text{if } t=00\vone \\
2^{-m}-2^{-K} &\text{if } t=0\vone \\
2^{-m} &\text{if } t=1\vzero \\
2^{-m}+2^{-m-1} &\text{if } t=11\vzero \\
2^{-m+1}-2^{-K} &\text{if } t=\vone.
\end{cases}
\end{aligned}
\end{align}
The real value that ignores veto's choice only in the $k$-th bit and instead follows the choice of the bit player is denoted by
\begin{align}\label{eq:tx}
\tx_{k}:=(\ox^b)_{<k}+2^{-k}x_{k}+(\ox^b)_{>k}.
\end{align}

\subsection{Sampling}\label{sec:sampling}
We recall that $J=\{1,2\}$ are the dimensions of the potential, and we denote $I=\{-1,0,1\}$. The point $y$ is simply chosen by $2$ teams of players: $y_j$-team for $j=1,2$, where each team chooses a real number as in Section~\ref{sec:real}. For the $x$ group the players choose $|I|^2=9$ points as follows.
We have $|I||J|=6$ teams of players denoted by $x_{i,j}$ for $i\in I,j\in J$ each team chooses a real number. These teams generate the $9$ points $x^p=x^{i_1,i_2}:=(x_{i_1,1},x_{i_2,2})$ for $i_1,i_2\in I$. We denote by $R=I\times J$ the set of indexes for the $x$-teams. A generic index of a team is denoted by $r\in R$. We denote by $P=I^2$ the set of indexes for the $x$-points, which will be called \emph{$x$-samples}. A generic index of a sample is denoted by $p\in P$. For $r\in R$ we denote by $P(r)\subset P$ the samples in which the $x_r$-team participates. Namely, for $r=(i,1)$ we have $P(r)=\{(i,-1),(i,0),(i,1)\}$ and for $r=(i,2)$ we have $P(r)=\{(-1,i),(0,i),(1,i)\}$. 

Samples' incentives will be designed to imitate different shifts of $y$. Concretely, the $x_{i,j}$ team tries to imitate $y_j+i\epsilon_S$. The location of the points in case of perfect imitation of all teams is demonstrated in Figure \ref{fig:sampling}.

\begin{figure}
\begin{center}
\begin{tikzpicture}
\draw[->] (0,0)--(10,0);
\draw[->] (0,0)--(0,7);
\foreach \x in {-1,0,1}
	\foreach \y in {-1,0,1}
		{\filldraw (5+2*\x,4+2*\y) circle(0.1);
		\node[right] at (5+2*\x,4+2*\y) {$x^{\x,\y}$}; 
		}
\foreach \x in {-1,0,1}
	{\draw (5+2*\x,-0.1)--(5+2*\x,0.1);
	\node[below] at (5+2*\x,-0.1) {$x_{\x,1}$};
	}	
\foreach \y in {-1,0,1}
		{\draw (-0.1,4+2*\y)--(0.1,4+2*\y);
		\node[left] at (-0.1,4+2*\y) {$x_{\y,2}$}; 
		}
\node[right] at (5.6,3.85) {$=y$};
\node[below] at (5.6,-0.1) {$=y_1$};
\node[left] at (-0.8,4) {$y_2=$};

\end{tikzpicture}
\end{center}
\caption{The figure demonstrates the samples in case where all the $x$-samples perfectly imitate their target of the shifted $y$ point.}\label{fig:sampling}
\end{figure}
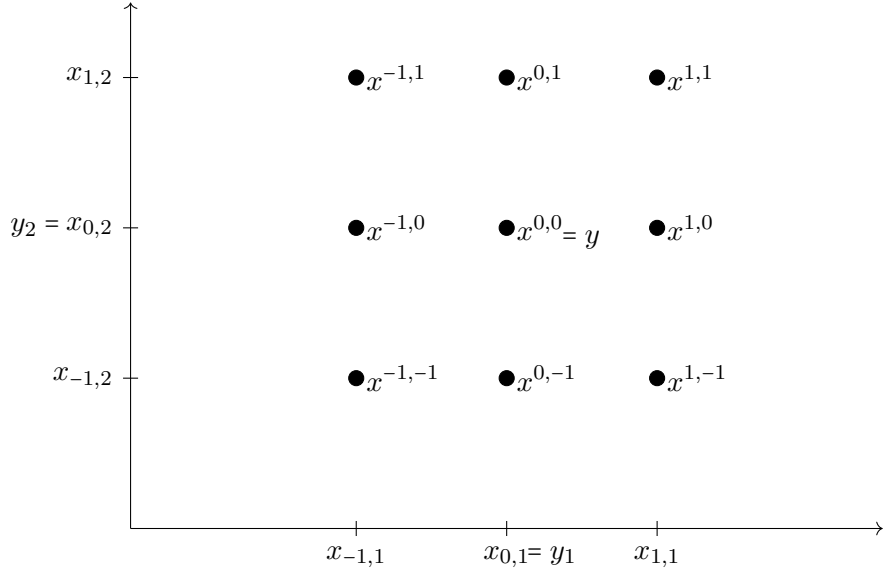

\subsection{The circuit}\label{sec:circuit} 

We recall that the input to the problem is $(\phi,\epsilon)$ where the potential $\phi$ is given by an \emph{arithmetic} circuit whose Lipschitz constant for the gradient (in the $\ell_2-\ell_2$ norms) is $\alpha$ and $
\alpha$ might be at most exponentially large in the input size. We set $N_{in}$ to satisfy  
\begin{align}\label{eq:N and ep-r}
2^{-N_{in}}=(\frac{\epsilon}{6\alpha})^3 \text{ and we set } \epsilon_R=(\frac{\epsilon}{6\alpha})^4=2^{-\frac{4}{3}N_{in}}. 
\end{align}

We translate $\phi$ to a \emph{boolean} circuit $C'$ where the inputs  of $C'$ in each coordinate are given by binary stings of length $N_{in}$. 
Note that the input in each coordinate is some $x_j \in \{0,2^{-N_{in}},\dots, 1-2^{-N_{in}})$. 
The circuit computes $\phi(x)$ with precision $2^{-N_{out}}$; in particular, the output is a binary string of length $N_{out}$. We denote  by $[0,1)^2_{2^{-N_{in}}}$ the domain of the circuit $C'$, which is the $2$-dimensional $2^{-N_{in}}$-grid. For every sample $x$ on the grid we denote by 
\begin{align}\label{eq:triangle}
\triangle_j (x):=\phi(x+ 2^{-N_{in}}e_j)-\phi(x)
\end{align}
the difference in $\phi$ between two adjacent grid points in the direction of $j$. 
Our reduction finds a point $x$ on the grid such that
\begin{align}\label{eq:delta-reduction}
\begin{aligned}
&\triangle_j (x) \leq \epsilon_R \text{ for every } j\in J \text{ such that } x_j\neq 1-2^{-N_{in}} \\
&\triangle_j (x) \geq -\epsilon_R \text{ for every } j\in J \text{ such that } x_j\neq 0
\end{aligned}
\end{align}
The following lemma shows that this property is sufficient for the reduction.
\begin{lemma}\label{lem:sufficient}
If $x\in [0,1)^J_{2^{-N_{in}}}$ satisfies the property in Equation \eqref{eq:delta-reduction} then the point $\frac{1}{1-2^{-N_{in}}}x$ is a solution for $\textsc{2D-GD-FixedPoint}(\phi,\epsilon)$.
\end{lemma}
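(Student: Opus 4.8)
The plan is to verify directly that the rescaled point $\bar x := \tfrac{1}{1-2^{-N_{in}}}x$ satisfies the two defining conditions of {\sc 2D-GD-FixedPoint}$(\phi,\epsilon)$ from Meta Definition~\ref{def:gdfp}. The first thing I would do is record two elementary facts about this rescaling. Since $x\in[0,1)^J_{2^{-N_{in}}}$, each $x_j$ lies in $\{0,2^{-N_{in}},\dots,1-2^{-N_{in}}\}$, so $\bar x_j$ lies in $[0,1]$ and in particular $\bar x\in[0,1]^J$; moreover $\bar x_j=0$ iff $x_j=0$, and $\bar x_j=1$ iff $x_j=1-2^{-N_{in}}$. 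Consequently the two coordinatewise side conditions in~\eqref{eq:delta-reduction}, namely $x_j\neq 1-2^{-N_{in}}$ and $x_j\neq 0$, are precisely the conditions $\bar x_j<1$ and $\bar x_j>0$ under which we must verify $\nabla_j\phi(\bar x)\leq\epsilon$ and $\nabla_j\phi(\bar x)\geq-\epsilon$, respectively. So it suffices to transfer each inequality in~\eqref{eq:delta-reduction} into the corresponding gradient bound at $\bar x$.

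Fix $j\in J$ with $x_j\neq 1-2^{-N_{in}}$. The segment from $x$ to $x+2^{-N_{in}}e_j$ is contained in $[0,1]^J$, so by the mean value theorem applied to $t\mapsto\phi(x+te_j)$ there is $\theta\in[0,1]$ with $\triangle_j(x)=\phi(x+2^{-N_{in}}e_j)-\phi(x)=2^{-N_{in}}\,\nabla_j\phi(x+\theta 2^{-N_{in}}e_j)$. Combining with $\triangle_j(x)\leq\epsilon_R$ and the parameter setting~\eqref{eq:N and ep-r} (which gives $\epsilon_R 2^{N_{in}}=(\tfrac{\epsilon}{6\alpha})^4(\tfrac{\epsilon}{6\alpha})^{-3}=\tfrac{\epsilon}{6\alpha}$) yields $\nabla_j\phi(x+\theta 2^{-N_{in}}e_j)\leq\tfrac{\epsilon}{6\alpha}$. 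Next I would move this bound from $x+\theta 2^{-N_{in}}e_j$ to $\bar x$ using $\alpha$-Lipschitzness of $\nabla\phi$ (in the $\ell_2-\ell_2$ sense): since $\bar x-x=\tfrac{2^{-N_{in}}}{1-2^{-N_{in}}}x$ and $\|x\|_2\leq\sqrt{|J|}=\sqrt 2$, we get $\|\bar x-(x+\theta 2^{-N_{in}}e_j)\|_2\leq\tfrac{\sqrt 2\cdot 2^{-N_{in}}}{1-2^{-N_{in}}}+2^{-N_{in}}\leq 3\cdot 2^{-N_{in}}$, hence $\nabla_j\phi(\bar x)\leq\tfrac{\epsilon}{6\alpha}+3\alpha\cdot 2^{-N_{in}}=\tfrac{\epsilon}{6\alpha}+\tfrac{\epsilon^3}{72\alpha^2}\leq\epsilon$, the last inequality using $\alpha\geq 1$ and $\epsilon\leq 1$ (which may be assumed without loss of generality and are consistent with~\eqref{eq:constants}). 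This establishes the first condition of Meta Definition~\ref{def:gdfp}. The second condition follows symmetrically: for $j$ with $x_j\neq 0$, start from $\triangle_j(x)\geq-\epsilon_R$ and obtain $\nabla_j\phi(\bar x)\geq-\tfrac{\epsilon}{6\alpha}-3\alpha 2^{-N_{in}}\geq-\epsilon$.

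I do not expect a genuine obstacle here: the argument is just the mean value theorem, Lipschitz continuity of the gradient, and arithmetic with the reduction parameters fixed in~\eqref{eq:N and ep-r} and~\eqref{eq:constants}. The only step that warrants care is the bookkeeping around the factor $\tfrac{1}{1-2^{-N_{in}}}$: it is present precisely so that the grid's extreme coordinate value $1-2^{-N_{in}}$ maps to the true boundary value $1$, and one must check (as in the first paragraph) that the side conditions $x_j\neq 0$ and $x_j\neq 1-2^{-N_{in}}$ in~\eqref{eq:delta-reduction} line up with $\bar x_j>0$ and $\bar x_j<1$, so that exactly the required gradient inequalities get proved at exactly the right coordinates of $\bar x$.
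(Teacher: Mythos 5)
Your proposal is correct and follows essentially the same route as the paper's proof: apply the mean value theorem to relate $\triangle_j(x)$ to $\nabla_j\phi$ at an intermediate point, transfer the bound to $\frac{1}{1-2^{-N_{in}}}x$ via $\alpha$-Lipschitzness of the gradient (the point is within $O(2^{-N_{in}})$ in $\ell_2$), and finish with the parameter arithmetic from~\eqref{eq:N and ep-r}; your preliminary check that the side conditions of~\eqref{eq:delta-reduction} align with $\bar x_j>0$ and $\bar x_j<1$ is the same observation the paper makes inline.
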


\begin{proof}
By the Mean Value Theorem  in the $j$-th dimension, we derive that for some $x'=x + \delta e_j$ for $0\leq \delta\leq 2^{-N_{in}}$ we have 
$2^{N_{in}} \triangle_j (x) = \nabla_j \phi(x') $. Note that $x'$ is $2^{-N_{in}}$-close to $x$ which it turn is $2\cdot 2^{-N_{in}}$-close to $\frac{1}{1-2^{-N_{in}}}x$.
By the $\alpha$-Lipschitzness of the gradient we deduce that $$|\nabla_j \phi(\frac{1}{1-2^{-N_{in}}}x) - 2^{N_{in}} \triangle_j (x)| \leq 3\cdot 2^{-N_{in}}\alpha.$$

If $\frac{1}{1-2^{-N_{in}}}x_j \neq 1$ which is equivalent to $x_j\neq 1-2^{-N_{in}}$ from the property in Equation \eqref{eq:delta-reduction} we deduce that $\nabla_j \phi(\frac{1}{1-2^{-N_{in}}}x)\leq 2^{N_{in}} \epsilon_R + 3\cdot 2^{-N_{in}}\alpha \leq \frac{\epsilon}{2}+\frac{\epsilon}{2}$.

If $\frac{1}{1-2^{-N_{in}}}x_j \neq 0$ which is equivalent to $x_j\neq 0$ from the property in Equation \eqref{eq:delta-reduction} we deduce that $\nabla_j \phi(\frac{1}{1-2^{-N_{in}}}x)\geq -2^{N_{in}}\epsilon_R - 3\cdot 2^{-N_{in}}\alpha \geq -\frac{\epsilon}{2}-\frac{\epsilon}{2}$.
\end{proof}

We denote by 
\begin{align}\label{eq:lambda}
\lambda:=\alpha 2^{-2N_{in}},
\end{align}
an upper bound on the scaled Lipschitz constant of the $\Delta_j$'s when we move to a $2^{-N_{in}}$-close point.
Note that  $\lambda<2^{-\frac{5}{3}N_{in}}\ll 2^{-1.5N_{in}} \ll \epsilon_R$. 

\subsubsection{Modifying the circuit's computations}\label{sec:mod-cir}
We are given a boolean circuit $C'$ with boolean gates that computes a potential function $\phi'(x)\in [0,1]$ for every $x\in[0,1]^J_{2^{-N_{in}}}$. From $C'$ we can construct (in polynomial time) a circuit $C$ that for every $x\in[0,1)^J_{2^{-N_{in}}}$ outputs $\phi(x)$ as well as a pair $(\triangle_j(x))_{j\in J}$.

We denote by $\cL$ the set of lines (wires) in the circuit $C$. The set of input lines is denoted by $\cL_{in}\subset \cL$. 
The set of output lines is denoted by $\cL_{out}\subset \cL$.
We denote also $\cL_{out}=\{(\phi_{k})_{k\in [N_{out}]} \cup \{ (\triangle_{j,k})_{j\in J,k\in [N_{out}]}\}$ to distinguish the $\phi$ output from the $\triangle$ outputs. The depth of a wire in the circuit is denoted by $d_l$. The depth of the circuit is denoted by  $D\geq d_l$ for every $l\in \cL$.
Without loss of generality we may assume that every gate in $C$ has a fan-in and fan-out 2 (otherwise one can replace larger fan-out by a binary tree structure with equality gates).  

\subsubsection{Replicating the circuit}\label{sec:rep}
Our reduction includes $|I|^2=9$ samples $(x^p)_{p\in P}$. For each sample we have a separate circuit $C^p$; whose lines are denoted by $l^p$ for $l \in \cL$. 
Each such circuit computes $(\phi(x^p),(\triangle_j(x^p))_{j\in J})$. 

The inputs to these circuits come from the first $N_{in}$ bits of the corresponding real numbers $\ox_r$; see Section \ref{sec:real} for the definition of $\ox$; see Section \ref{sec:sampling} for the correspondence of between samples $x^p$ and the real numbers $x_r$. 
We construct a player for every wire $\l^p$ of every circuit; we will abuse notation and also denote the player by $l^p$. 
We say that the computation of an input player $l^p\in \cL_{in}$ is \emph{correct} if its binary action matches the bit $\ox^b_r$ of the corresponding team that chooses $x_r$.
We say that the computation of a non-input player $l^p$ for $l\notin \cL_{in}$ is \emph{correct} if its binary action matches the boolean operation on the binary actions of the inputs to the gate.
We denote by $w^p_l$ the indicator of a {\em wrong} computation.

The actions of the output players $(\phi^p_{k})_{k\in [N_{out}]}$ define a value for the potential 
$$\phi^p:=\sum_{k\in [N_{out}]}2^{-k}\phi^p_{k}\in [0,1].$$ Similarly, we define the value of $\triangle^p_j$ for every $j\in J$ with respect to the corresponding output bits $(\triangle^p_{j,k})_{k\in [N_{out}]}$.

\subsection{Players and Actions}\label{sec:players-actions}
We summarize here all the players and their actions that we have described. 
For clarity of notations we typically identify players' names with the action they are taking.

We have $|R|=6$ teams of real-number players $(x_r)_{r\in R}$. Each team chooses a real number. Each team consists of $K$ bit players and a single veto player. The bit players are denoted by $x_{r,k}$ and their actions are $x_{r,k}\in \{0,1\}$. The $x_r$-veto player has $6(K+1)$ actions that are all the pairs $(m^x_{r},t^x_r)$ where $m^x_{r}\in [K+1]$ and $t^x_r \in \{\vzero, 00\vone, 0\vone, 1\vzero, 11\vzero,\vone\}$.

Similarly, we have $|J|=2$ teams of players $(y_j)_{j\in J}$. Each team chooses a real number. Each team consists of $K$ bit players and a single veto player. The bit players are denoted by $y_{j,k}$ and their actions are $y_{j,k}\in \{0,1\}$. The $y_j$-veto player has $6(K+1)$ actions that are denoted by $(m^y_{j},t^y_j)$.

Each sample $x^p$ (which is generated by the real numbers $(x_r)$) has its own circuit players that are denoted by $(l^p)_{l\in \cL}$. Each circuit player has binary actions.

Finally, we have a single \emph{guide} player who chooses a sample in $P$ and her action is denoted by $g\in P$.

\subsection{Utility}\label{sec:utility-formal}

Given a point $p=(i_1,i_2)$ we denote $i(p,j)=i_j$. We recall that the imitation target of $x^p_j=(x^{i_1,i_2})_j$ is $y_j+i_j\epsilon_S$. 
We also recall that this coordinate is chosen by the $x_{i_j,j}$-team and the real value (after taking veto's action into account) is denoted by $\ox^p_j$.

The common utility of all players is given in the following equation where the names of each utility term is indicated on the right hand side.
\begin{align}\label{eq:u}
\begin{aligned}
u=&-\sum_{p\in P,j\in J} (1+\1_{g=p}) [\ox^p_j-\oy_{j}-i(p,j)\epsilon_S]^2 &&\text{(imitation)}\\
&-\epsilon_C \sum_{p\in P,l\in \cL} (\epsilon_T+\1_{[g=p]}) 2^{-2d_l} w^p_{l} &&\text{(circuit)}\\
&+\epsilon_P \sum_{p\in P} \1_{[g=p]} \phi^p &&\text{(potential)}\\
&+\epsilon_G \sum_{p\in P,j\in J} \oy_{j} \1_{[g=p]} \triangle^p_j  &&\text{(gradient)}\\
&-\epsilon_T \sum_{p\in P, j\in J,k\in [K]} [\tx^p_{j,k} -\oy_{j}-i(p,j)\epsilon_S]^2 &&(x\text{-bit imitation)}\\
&-\epsilon_T \sum_{p\in P, j\in J,k\in [K]} (1+\1_{g=p})[\ox^p_{j} -\ty_{j,k}-i(p,j)\epsilon_S]^2 &&(y\text{-bit imitation)}\\
&-\epsilon_T\epsilon_G \sum_{p\in P, j\in J,k\in [K]} \ty_{j,k} \1_{[g=p]} \triangle^p_j  &&\text{(bit gradient)}\\
&+\epsilon_T \sum_{r\in R} m^x_{r} +\epsilon_T \sum_{j\in J} m^y_{j} &&\text{(veto)}.
\end{aligned}
\end{align}

The utility is presented as the sum of \emph{all} local interactions; see Remark \ref{rem:identical-interest}. 

\begin{proposition}\label{pro:hypergraphical}
The identical interest game $u$ is strategically equivalent to a $5$-polytensor identical interest game.
\end{proposition}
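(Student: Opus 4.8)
The plan is to go through the utility function in Equation~\eqref{eq:u} term by term and check that each summand depends on the actions of at most $5$ players, possibly after the standard trick of splitting a term into several sub-terms. Once every summand involves at most $5$ players, the whole utility is (by definition) a $5$-polytensor identical-interest game, because each such summand is a tensor of order $\le 5$ in the relevant players' actions and the total utility is their sum; recall from Remark~\ref{rem:identical-interest} that writing $u$ as a sum of all local interactions is the same thing as a polytensor identical-interest game.

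First I would catalogue which players appear in each term. The subtle point is that ``the $x^p_j$-team'' and ``the $y_j$-team'' each consist of $K+1$ players, so a naive reading of, say, the imitation term $[\ox^p_j-\oy_j-i(p,j)\epsilon_S]^2$ makes it depend on $2(K+1)$ players plus the guide player. The key observation that rescues us is the structure of the collective-choice mechanism in Section~\ref{sec:real}: the realized real value $\ox$ is \emph{not} an arbitrary function of all $K+1$ team members, but is given by Equation~\eqref{eq:veto-value} as $\ox = x_{<m} + v(m,t)$, which, conditioned on the veto player's action $(m,t)$, is a \emph{linear} function $\sum_{k<m} 2^{-k} x_k + v(m,t)$ of the bit players' actions. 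So I would expand each quadratic imitation term, using that a square of a linear form $\sum_k c_k x_k$ is $\sum_k c_k^2 x_k^2 + \sum_{k\ne k'} c_k c_{k'} x_k x_{k'}$, i.e.\ a sum of monomials each involving at most two bit players; combined with the conditioning on one $x$-veto player, one $y$-veto player, and the guide player $g$, each resulting sub-term depends on at most $2 + 3 = 5$ players. One has to be a little careful that the conditioning on $m$ (``which bits are active'') is itself encoded in the veto player's action, so the coefficients $c_k = 2^{-k}\1_{k<m}$ are functions of the veto player's action, not constants — but that is fine, it just means the tensor entry is indexed by the veto player's action, which is exactly what a polytensor game allows.

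Then I would walk through the remaining terms the same way. The circuit term $\epsilon_C(\epsilon_T+\1_{g=p})2^{-2d_l} w^p_l$: since every gate has fan-in $2$ (Section~\ref{sec:mod-cir}), the wrong-computation indicator $w^p_l$ for a non-input wire depends on the wire player $l^p$ and its two input wire players, i.e.\ $3$ players, plus $g$, so $4$; for an input wire $w^p_l$ compares $l^p$ with one bit player $\ox^b_{r,k'}$ and possibly one veto player (to know whether bit $k'$ is active and what it is), which is at most $3$ players plus $g$, so $4$. The potential term $\epsilon_P\1_{g=p}\phi^p$: $\phi^p=\sum_k 2^{-k}\phi^p_k$ is linear in the output wire players, so each summand $\epsilon_P 2^{-k}\1_{g=p}\phi^p_k$ involves one output player plus $g$. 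The gradient term $\epsilon_G \oy_j\1_{g=p}\triangle^p_j$: expand $\oy_j$ via Equation~\eqref{eq:veto-value} into a sum over one bit player $y_{j,k}$ and the $y_j$-veto player, and expand $\triangle^p_j=\sum_{k'}2^{-k'}\triangle^p_{j,k'}$, so each monomial involves one $y$-bit player, the $y$-veto player, one output wire player, and $g$ — that is exactly $4$ players. The $x$- and $y$-bit-imitation terms are handled like the imitation term (here $\tx^p_{j,k}$ from Equation~\eqref{eq:tx} is linear in the bit players and the veto player once we condition on $m$), the bit-gradient term like the gradient term (but with an extra $y$-bit player, still $\le 5$), and the veto term $\epsilon_T m^x_r$ depends on a single player. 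Finally I would note the one place the count is genuinely tight — the gradient and bit-gradient terms, together with the dependence on $g$ — which confirms that $c=5$ is the right constant and explains the ``$5$'' in {\sc 5-Polytensor-IdenticalInterest}.

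The main obstacle, and the only place real care is needed, is the bookkeeping around the veto mechanism: making precise that ``$\ox$ depends linearly on the bit players after conditioning on the veto player'' and that this conditioning is legitimately absorbed into a polytensor tensor entry indexed by the veto player's ($6(K+1)$-valued) action. Everything else is routine expansion of squares and of the linear maps wire-bits $\mapsto$ computed value, using fan-in/fan-out $2$ of the circuit. I would therefore structure the proof as: (i) a lemma stating that $\ox$, $\tx_k$ and the circuit output values $\phi^p,\triangle^p_j$ are, conditioned on the relevant veto/wire players, linear forms in a bounded number of binary variables; (ii) a term-by-term table showing each summand of~\eqref{eq:u}, after expansion, is a monomial in $\le 5$ players; (iii) conclude by the definition of $5$-polytensor identical-interest game and Remark~\ref{rem:identical-interest}.
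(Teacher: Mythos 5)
Your proposal is correct and follows essentially the same route as the paper: decompose each of $\ox^p_j,\oy_j,\tx^p_{j,k},\ty_{j,k}$ into per-bit terms each depending on two players (bit player and veto player), decompose $\phi^p,\triangle^p_j$ into per-output-bit terms each depending on one circuit player, expand all squares and products, and observe that every resulting monomial involves at most $5$ players. One cosmetic slip: the count of $5$ is actually attained by the imitation cross-terms $\1_{[g=p]}\ox^b_{r,k}\oy^b_{j,k}$ (two bit players, two veto players, and the guide), exactly as you computed in your imitation paragraph, not by the gradient or bit-gradient terms, which involve only $4$ players.
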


\begin{proof}
We write each of the terms $\ox^p_{j},\oy_{j},\tx^p_{j,k},\ty_{j,k}$ as a sum of $K$ terms $(\ox^b_{r,k})$ or $(\oy^b_{j,k})$, and each of these terms depends on actions of only two players (veto player and bit player).
We write each of the terms $\phi^p,\triangle^p_{j}$ as the sum of $N_{out}$ terms $\phi^p_k,\triangle^p_{j,k}$, and each of these terms is determined by the action of a single player. Then we open the brackets of $[\cdot ]^2$ and of all multiplication operations. We obtain a representation of the utility as a sum of polynomial number ($O(2|P| K^2)$) of terms. Each term is determined by the actions of at most five players.\footnote{The maximal dependence on the actions of five players is obtained, for instance, in terms of the form $\1_{[g=p]}\ox^b_{r,k}\oy^b_{j,k}$ which depends on the actions of two veto players, two bit players, and the guide player.}
\end{proof}

\section{Main Result: Equilibrium Analysis}\label{sec:analysis}

The informal discussion of our construction mostly relied on the underlying assumption that the behavior of players in the game is pure. For instance, we have used terms as ``$x_{i,j}$ is located somewhere''. However, in a mixed Nash equilibrium analysis all bit players as well as the veto player might play mixed strategies which creates a \emph{distribution} over the reals. In order that the above intuition will be applicable we need to prove that $x_{i,j}$ is either pure or that the distribution of its realizations is extremely concentrated.

The intuition for the validity of the above pure equilibrium analysis is the following. We first use the convenient structure of the square loss imitation utility to show that indeed ``many'' $x_{i,j}$ teams have concentrated values. The pureness of these teams provokes pureness on other players too. For example, if all inputs to the circuit are pure, then all gadget players in that circuit are playing pure. 
With this high level intuition in mind, in this section we present the formal proof, which is exceptionally delicate. 
At each step of the proof, we exhibit more players playing (approximately) pure strategies; along the way we also slowly obtain increasingly precise surrogates (see $\oox_{i,j}$ and $x^c_{i,j}$) for ``the location of $x_{i,j}$''.

\subsection{Rewriting the Expected Utility}\label{sec:rewriting}

The square-loss form of the imitation utility terms allows us to provide a clean formula of the expected common utility (even if all players are playing mixed strategies). We provide this formula in this subsection.

We denote by capital letters the mixed strategies of the players.
Henceforth, we will refer to a mixed Nash equilibrium as a tuple $(X_{r,k},(M^x_{r},T^x_{r}),L^p,G,Y_{j,k},(M^y_{j},T^y_{j}))$ of independent random variables.\footnote{For the actions of the veto players the pair $(M^y_{j},T^y_{j})$ is independent of the other random variables. Obviously $M^y_{j}$ might be correlated with $T^y_{j}$ because those are chosen by the same player. Similarly, for veto players in the $x$-team.}
All real-valued random variables (such as $u$, $\phi^p$, or $\ox^p_{j}$) will be denoted by capital latter ($U$, $\oX^p_{j}$, and $\Phi^p$) For the guide player we denote $g_p=\Prob [G=p]$. We denote by $\Delta_j$ the random variable that is equal $\Delta^p_j$ with probability $g^p$ (we recall that $\Delta^p_j$ is determined by the circuit output players and hence is determined by $(l^p)_{l\in \cL}$). 

\begin{fact}\label{fact:var}
For any three independent random variables $A,X,Y$ we have
\begin{align*}
\Ex [A(X-Y)^2]=\Ex[A]\left[ \left( \Ex[X]-\Ex[Y] \right)^2 + \Var [X] + \Var [Y] \right].
\end{align*}
\end{fact}

\begin{proof}
\begin{align*}
\Ex [A(X-Y)^2] & = \Ex[A] \Ex[(X-Y)^2]=\Ex[A] \left[ \Ex[X^2]-2\Ex[XY]+\Ex[Y^2] \right] \\
& = \Ex[A] \left[ \Var[X] + \Ex[X]^2- 2\Ex[X] \Ex[Y]+\Var[Y] + \Ex[Y]^2 \right]\\
& =\Ex[A]\left[ \left( \Ex[X]-\Ex[Y] \right)^2 + \Var [X] + \Var [Y] \right]
\end{align*}
The first equation follows from the independence of $A$ and $(X,Y)$. The third equation follows from the definition of the variance and by the independence of $X$ and $Y$.
\end{proof}

By applying Fact~\ref{fact:var} on the imitation terms the $x$-bit and $y$-bit imitation terms of Equation~\eqref{eq:u}, and by the fact that for independent random expectation commutes with multiplication, we can rewrite the expected utility in any profile of mixed strategies as:

\begin{align}\label{eq:eu}
\begin{aligned}
\Ex[U]=&-\sum_{p\in P, j\in J} (1+g^p) \left[ \left( \Ex[\oX^p_{j}]-\Ex[\oY_j] -i(p,j)\epsilon_S \right)^2 + \Var [\oX^p_{j}] + \Var [\oY_j] \right] 
&&\text{(imitation)} \\
&-\epsilon_C \sum_{p\in P,l\in \cL} (\epsilon_T+g_i) 2^{-2d(c_{i,l})} \Prob [W_{p,l}] &&\text{(circuit)}\\
&+\epsilon_P \sum_{p\in P} g^p  \Ex[\Phi^p]    &&\text{(potential)}\\
&+\epsilon_G \sum_{j\in J} \Ex[\oY_{j}] \Ex[\Delta_j]  &&\text{(gradient)}\\
&-\epsilon_T \sum_{p\in P, j\in J,k\in [K]} 
 \left( \Ex[\tX^p_{j,k}]-\Ex[\oY_j]-i(p,j)\epsilon_S \right)^2 + \Var [\tX^p_{j,k}] + \Var [\oY_j]  &&(x\text{-bit imitation)}\\
&-\epsilon_T  \sum_{p\in P, j\in J,k\in [K]} (1+g^p) \left[ 
 \left( \Ex[\oX^p_{j}]-\Ex[\tY_{j,k}]-i(p,j)\epsilon_S \right)^2 + \Var [\oX^p_{j}] + \Var [\tY_{j,k}] \right]  &&(y\text{-bit imitation)}\\
&+\epsilon_T\epsilon_G \sum_{j\in J,k\in [K]} \Ex[\tY_{j,k}] \Ex[\Delta_j]    &&\text{(bit gradient)}\\
&+\epsilon_T \sum_{r\in R} \Ex[M^x_{r}] + \epsilon_T \sum_{j\in J}\Ex[M^y_{j}] &&\text{(veto)}.
\end{aligned}
\end{align}

\subsection{Interior Equilibrium Analysis}\label{sec:int}

We use a standard trick (see e.g., \cite{DGP}): We first reduce $\textsc{2D-GD-FixedPoint}$ to the problem of finding a \emph{well supported} $\epsilon'_N$-Nash equilibrium. Thereafter, we can use the fact that the problem of finding a well-supported $\epsilon'_N$-Nash equilibrium can be reduced to the problems of finding $\epsilon_N$-Nash equilibrium; see \cite{DGP}. For clarity of presentation, in this section and in Section \ref{sec:boundary} we argue about exact equilibria. However, all our arguments remain valid also for the case of well-supported $\epsilon'_N$-Nash equilibrium for $\epsilon'_N\ll \epsilon_T^2$.

For clarity of presentation we describe first an equilibrium analysis under the assumption that all imitation targets of the samples $x^p$ for $p\in P$ and are located in the interior of $[0,1]$. Formally, this Section states and proves all lemmata and propositions under the following assumption.
\begin{assumption}\label{as:hy}
For every $r\in R$ we have $0\leq \hy_r<1$; see Equation \eqref{eq:hy-def} for the definition of $\hy_r$.
\end{assumption}

Thereafter, in Section \ref{sec:boundary} we show that, with an assistance of several complementary lemmata, very similar analysis can be applied in the general case; i.e., without imposing Assumption \ref{as:hy}. The fact that \emph{the proof} of a certain lemma in this section relies on Assumption \ref{as:hy} is indicated in the statement of the lemma by ``Under Assumption \ref{as:hy}''. For lemmata that indirectly rely on Assumption \ref{as:hy} (i.e., are using previous lemmata that have used Assumption \ref{as:hy}) we omit the dependence on Assumption \ref{as:hy} from the statement of the lemma. The reason for that is because in Section \ref{sec:boundary} we provide complementary lemmata to those that actually rely on Assumption \ref{as:hy}. The lemmata that indirectly reply on Assumption \ref{as:hy} need no adjustments in their proofs. 

\subsubsection{Perfect Imitation for some $x$-teams}

For every team $x_r=x_{i,j}$ in the sampling structure we define the \emph{imitation target} of the $x_r$-team by:
\begin{align}\label{eq:hy-def}
\hy_r:=
\Ex [\oY_j]+i\epsilon_S.
\end{align}
Note that $-\epsilon_S\leq \hy_r\leq 1+\epsilon_S$ where $\epsilon_S$ is the sample shift size. 
Recall that the imitating team $x_r$ always chooses a number $0\leq \ox_r \leq 1-2^{-K}$. It is possible that the imitation target $\hy_r$ will be out of $[0,1-2^{-K}]$. Our Assumption \ref{as:hy} excludes the possibility of $\hy_r\in [1,1+\epsilon_S]$ and the possibility of $\hy_r\in [-\epsilon_S,0)$. This possibility is relegated to Section \ref{sec:boundary}.

By slightly rewriting of the utility in Equation~\eqref{eq:eu} we will see (Equation~\eqref{eq:uij}) that indeed the $x_r$-team tries to imitate the scalar $\hy_r \in [0,1)$. We first define different notions of imitation for the $x_r$-team.

\begin{definition}[Perfectly, strongly perfectly, approximately perfect, and mild imitation] \label{def:perfect-imitation}\hfill

We say that the $x_{r}$-team \emph{perfectly imitates} $\hy_{r}$ if $\Ex[(\oX_{r}-\hy_r)^2]= O(2^{-2K})$.
We say that the $x_{r}$-team \emph{strongly perfectly imitates} $\hy_{r}$ if, in addition, $\Prob[M^x_{r}>N_{in}]=1$ and $\Prob[X_{r,k}=y^b_{r,k}]=1$ for every $k\in [N_{in}]$; i.e., the veto player does not impose veto on the first $N_{in}$ bits and the first $N_{in}$ bit players purely match the binary representation of $\hy_{r}$. We say that the $x_{r}$-team \emph{approximately perfectly imitates} $\hy_{r}$ if $\Ex[(\oX_{r}-\hy_r)^2]\leq O(\epsilon_G)$. Finally, we say that the $x_{r}$-team \emph{mildly imitates} $\hy_{r}$ if $\Ex[(\oX_{r}-\hy_r)^2]\leq O(\oep_C)$.

We say that \emph{(strongly) perfect imitation happens in a sample} $p=(i_1,i_2)\in P$ if both teams $x_{i_j,j}$ for $j=1,2$ (strongly) perfectly imitate their target $\hy_{i_j,j}$. 
\end{definition}

Roughly speaking, our goal is to prove that all $x_r$-teams imitate well their target $\hy_r$. As a first step toward this goal we succeed to prove it under some conditions. Specifically, we show the following.

\begin{enumerate}
\item If $\oox_r$ (see Equation \ref{eq:oox}) is sufficiently far from the $2^{-N_{in}-1}$-grid then strongly perfect imitation occurs (see the $\oox$-far-imitation Lemma (Lemma~\ref{lem:grid-far})).
\item If a team is not chosen by the guide player then the imitation is perfect (see the No-guide-imitation Lemma (Lemma~\ref{lem:no-krentel->perfect})).
\end{enumerate} 

Roughly speaking, we obtain these results inductively starting from the most significant bit in a decreasing order. We first prove that the first bit of $\oX_r$ is pure and matches the first bit of $\hy_r$. Thereafter, we proceed to the second bit, et cetera. We start with a rather weak version of ``pureness of the $k$-th bit'' argument of the Vetoed-mix-significant-bits Lemma (Lemma \ref{lem:no-mix}), which states that a bit that is sufficiently significant and is vetoed with probability 0 cannot be mixed. This allows us to define the useful scalar $\oox_r$ that captures the ``location of the $x_r$ team'' (even if $\oX_r$ is random).
We show in the $\oox$-utility-approximation Lemma (Lemma \ref{lem:oox}) that this scalar indeed captures well the imitation loss of the $x_r$-team. The key lemmata of this section are the Correct-bit Lemmata \ref{lem:imitation-No-circuit} and \ref{lem:imitation-No-guide}. These lemmata argue that if all more significant bits behave well (i.e., match the imitation target) so does the current bit. The inductive application of these Correct-bit Lemmata appear in Imitation Lemmata \ref{lem:grid-far} and \ref{lem:no-krentel->perfect} (correspondingly).

We start our formal discussion by denoting 
\begin{align}\label{eq:gr}
g_r:=\sum_{p\in P(r)} g^p
\end{align}
the total mass the guide player assigns to the $x_r$ team. By Equation \eqref{eq:eu} we can write
\begin{align}\label{eq:uij}
\begin{aligned}
\Ex[U^x_{r}]=&- (3+g_r) \left[ \left( \Ex[\oX_r]- \hy_r \right)^2 + \Var [\oX_r] \right] &&\text{(imitation)}\\
&-\epsilon_C \sum_{p\in P(r),l\in \cL_{in}} (\epsilon_T+g_i) \Prob [W_{p,l}] &&\text{(circuit)}\\
&-\epsilon_T \sum_{k\in [K]} 
 \left( \Ex[\tX_{r,k}]-\hy_{r} \right)^2 + \Var [\tX_{r,k}]  &&\text{($x$-bit imitation)}\\
&-\epsilon_T \Ex [U^y_{\text{bit}}(\oX_{r})] &&\text{($y$-bit imitation)}\\
&+\epsilon_T \Ex[M^x_{r}] &&\text{(veto)},
\end{aligned}
\end{align}
where $U^y_{\text{bit}}(\oX_{r})$ denotes the $y$-bit imitation term and highlights the fact that this utility term depends on the actions of the $x_{r}$-team only through $\oX_{r}$. Note that the potential and the bit-potential terms are not affected by the $x_{r}$-team and hence are excluded from $\Ex[U^x_{r}]$. Also, $\Var[\oY_j]$ is not affected by the $x_{r}$-team and is omitted from $\Ex[U^x_{r}]$ for simplicity. 

All our lemmata refer to the behaviour in (an arbitrary) equilibrium. Our first lemma in the analysis states that sufficiently significant bits that are not vetoed are playing pure.
Formally, we recall that $ \oep_C \gg \sqrt{\epsilon_C}$. Let 
\begin{align}\label{eq:kc}
k_C:=\lfloor-\log \oep_C\rfloor.
\end{align}
Roughly speaking, the influence of every bit $1\leq k \leq k_C$ on $\ox$ is more significant than the circuit utility.  

\begin{lemma}[Vetoed-mix-significant-bits Lemma]\label{lem:no-mix} 

For every $r\in R$, and every $k < \min\{\supp(M^x_{r}) \cup \{k_C\}\}$ the bit player $x_{r,k}$ is playing pure. 
\end{lemma}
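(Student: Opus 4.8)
The plan is to fix an arbitrary equilibrium, fix a team $x_r$ and a bit index $k$ with $k < \min(\supp(M^x_r)\cup\{k_C\})$, and show that the bit player $x_{r,k}$ has a strict preference between action $0$ and action $1$ (given the opponents' mixed strategies), which forces a pure best reply. I would work from the expected-utility expression for the $x_r$-team, Equation~\eqref{eq:uij}, isolating the terms that actually depend on $X_{r,k}$: the imitation term $-(3+g_r)[(\Ex[\oX_r]-\hy_r)^2+\Var[\oX_r]]$, the circuit term over input lines $l\in\cL_{in}$ (only the bit $k$ line matters, and only if $k\le N_{in}$), the $x$-bit imitation term, the $y$-bit imitation term, and the veto term (which does not depend on $X_{r,k}$, only on $M^x_r$). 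The key point of the hypothesis $k<\supp(M^x_r)$ is that with probability $1$ the veto does not touch position $k$, so $\oX^b_{r,k}=X_{r,k}$ deterministically; hence $X_{r,k}$ enters $\oX_r$ with a clean coefficient $2^{-k}$.

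The main computation is to view $\Ex[U^x_r]$ as a function of $q:=\Prob[X_{r,k}=1]\in[0,1]$ and observe it is a concave quadratic in $q$ whose leading (second-order) coefficient is dominated by the imitation term. Writing $\oX_r = 2^{-k}X_{r,k} + Z$ where $Z$ collects the other bits (independent of $X_{r,k}$), one gets $\Var[\oX_r]$ contributing a $-(3+g_r)\,2^{-2k}\,q(1-q)$ term, and similarly the $x$-bit and $y$-bit imitation terms contribute $\Var$-type terms of the same sign but with weight only $\epsilon_T$ (times $K$, harmlessly). The competing terms are: the circuit term, which is linear (hence convex-trivial) in $q$ with magnitude at most $\epsilon_C$ (times a constant from the two input lines of $\cL_{in}$ at depth $0$), and any linear-in-$q$ pieces coming from the $(\Ex[\oX_r]-\hy_r)^2$ cross term. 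Since $k\le k_C=\lfloor-\log\oep_C\rfloor$ we have $2^{-2k}\gg \oep_C^2\gg \epsilon_C$ by the parameter chain~\eqref{eq:constants}, so the strictly concave quadratic term of magnitude $\gtrsim 2^{-2k}$ strictly dominates every convexity-breaking contribution of magnitude $O(\epsilon_C)$; therefore $\Ex[U^x_r]$ is a \emph{strictly} concave function of $q$. A strictly concave function on $[0,1]$ attains its maximum at a unique point; but a mixed best reply must be supported on maximizers of the per-action payoff, which for a quadratic objective means the endpoints $\{0,1\}$ give the same value only if the quadratic is symmetric about $1/2$, and in any case $q\in(0,1)$ can only be optimal if it is the unique unconstrained maximizer — and then the player is indifferent over no other points, contradicting that $q$ arises from mixing over $\{0,1\}$. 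More cleanly: in a Nash equilibrium the payoff to pure action $0$ equals the payoff to pure action $1$ whenever both are played with positive probability; but the difference (payoff of $1$) $-$ (payoff of $0$), as the coefficient structure shows, is an \emph{affine} function of $q$ with a nonzero (indeed $\Theta(2^{-2k})$-large) slope coming from the $\Var$ terms, so it cannot vanish identically, forcing $q\in\{0,1\}$.

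I expect the main obstacle to be bookkeeping rather than conceptual: one must carefully verify that \emph{every} term in~\eqref{eq:uij} that depends on $X_{r,k}$ either (i) contributes to the strictly-concave quadratic with the correct sign, or (ii) is linear in $q$ with magnitude $O(\epsilon_C)$ or smaller, so that the dominance argument via~\eqref{eq:constants} goes through. The slightly delicate cases are the $x$-bit imitation term $-\epsilon_T\sum_k[(\Ex[\tX_{r,k}]-\hy_r)^2+\Var[\tX_{r,k}]]$ — here one uses that $\tX_{r,k}$ differs from $\oX_r$ only in whether the veto or the bit player controls position $k$, but since $k\notin\supp(M^x_r)$ these coincide, so this term has the same structure — and the circuit term, where one checks only the two input wires at depth $d_l=0$ of each circuit in $P(r)$ matter for bit $k\le N_{in}$, each weighted by at most $\epsilon_C(1+g_p)$, and if $k>N_{in}$ the circuit term does not depend on $X_{r,k}$ at all, making the argument only easier. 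Once the affine-slope-is-nonzero claim is established, pureness is immediate.
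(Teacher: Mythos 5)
Your approach has a fatal flaw at its core: the expected utility of the bit player $x_{r,k}$ is \emph{affine} in $q=\Prob[X_{r,k}=1]$, not strictly concave. This is forced by the general fact that any player's expected payoff is linear in her own mixing probabilities, and you can see it directly in your own decomposition: writing $\oX_r = 2^{-k}X_{r,k}+Z$ with $c=\Ex[Z]-\hy_r$, the squared-bias term contributes $-(2^{-k}q+c)^2$, whose $q^2$-coefficient is $-2^{-2k}$ (concave), while the variance term contributes $-2^{-2k}q(1-q) = -2^{-2k}q + 2^{-2k}q^2$, whose $q^2$-coefficient is $+2^{-2k}$ (convex, not concave as you assert). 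These cancel exactly, leaving $\Ex[U^x_r]$ affine in $q$ with slope $-(3+g_r)(2^{-2k}+2\cdot 2^{-k}c)+O(\epsilon_C)$, which is precisely $\Ex[U|X_{r,k}=1]-\Ex[U|X_{r,k}=0]$. This difference is a \emph{number} (it does not depend on $q$), and it vanishes exactly when $\hy_r \approx \Ex[\oX_{-k}]+2^{-k-1}$, i.e., when the imitation target sits at the midpoint between the two realizations of bit $k$. So indifference of the bit player is genuinely achievable, and no argument confined to the bit player's own incentives can rule out mixing.

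What the paper actually does — and what your proposal is missing entirely — is to use the bit player's indifference as a \emph{starting point} rather than a contradiction: indifference pins down $\hy_r$ at the midpoint, which forces the imitation loss to be $\approx (3+g_r)\bigl(2^{-2(k+1)}+\Var[\oX_{-k}]\bigr)$, and then the contradiction comes from a profitable deviation of the \emph{veto player}, who can choose one of the veto suffixes in Equation~\eqref{eq:veto-value} landing within $2^{-k-2}$ of $\hy_r$ with zero suffix variance, improving the imitation loss to $\approx (3+g_r)\, 2^{-2(k+2)}$. The condition $k<k_C$ is needed so that this $\Theta(2^{-2k})$ improvement swamps the $\tO(\epsilon_C)$ slack from the other utility terms. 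Your proposal never invokes the veto player at all, which is why it cannot work; the veto gadget is the entire reason the lemma is true.
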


\paragraph{Informal proof sketch}
We consider the imitation utility only. If, by way of contradiction there exists a bit player who is not vetoed and is playing mixed, we deduce by the indifference principle that $\hy_{r}$ is located approximately at the center in between the two realizations ($0$ and $1$) of the $k$-th bit and the distance between $\oX_{r}$ and $\hy_{r}$ is approximately $2^{-k-1}$. In the relevant interval that contains $\hy_{r}$, veto player has a grid of options of distance $2^{-k-1}$ from each other (see Equation \eqref{eq:veto-value}). Therefore, one of veto's types will imitate  $\hy_r$ better (i.e., the distance will reduce to $2^{-k-2}$), so veto strictly prefers to impose this veto on the $k$-th bit.

\begin{proof}
For simplicity of notations we omit the index $r$, which remains fixed along the entire proof of this lemma. 
Assume by way of contradiction that there exists a sufficiently significant bit $k<k_C$ that is not vetoed (because $k<\min\{\supp(M^x)\}$) and mixes between $0$ and $1$. 
We denote by $x_{<k}= \Ex [\oX_{<k}]$ the expected contribution to $\Ex[\oX_r]$ of the bits that are more significant than $k$.
We denote by $x_{>k}= \Ex [\oX_{>k}]$ the expected contribution to $\Ex[\oX_r]$ of the bits that are less significant than $k$.

By the indifference principle we have $\Ex[U^x]=\Ex[U^x|X_{k}=0]=\Ex[U^x|X_{k}=1] $. By Equation \eqref{eq:uij} we can write 
\begin{align*}
\Ex[U^x|X_{k}=0] &= - (3+g_r) \left[ ( x_{<k} + x_{>k} - \hy )^2 + \Var [\oX_{-k}] \right] \pm \tO(\epsilon_C)\\
\Ex[U^x|X_{k}=1] &= - (3+g_r) \left[ \left( x_{<k} +2^{-k} + x_{>k}- \hy \right)^2 + \Var [\oX_{-k}] \right] \pm \tO(\epsilon_C).
\end{align*}
from this we deduce that
\begin{align*}
&\left( x_{<k} + x_{>k}- \hy \right)^2 = \left( 2^{-k}+ x_{<k} + x_{>k} - \hy \right)^2 \pm \tO(\epsilon_C) \Rightarrow \\
& 2^{-k} \cdot |2^{-k} + 2x_{<k} + 2x_{>k}-2\hy|=\tO(\epsilon_C) \Rightarrow \\
& \hy=x_{<k} + x_{>k} +2^{-k-1} \pm 2^{k+1}\tO(\epsilon_C) \\
 & \;\;\;\; =x_{<k} + x_{>k}+2^{-k-1} \pm \tO(\oep_C) \Rightarrow \\
&\Ex[U^x]=-(3+g_r)[2^{-2(k+1)}+\Var[\oX_{-k}]]\pm \tO(\epsilon_C).
\end{align*}
 
Since $x_{>k}\in [0,2^{-k})$ we know that $\hy \in [x_{<k}+2^{-k-1},x_{<k}+2^{-k}+ 2^{-k-1}]$ and so one of the numbers $x_{<k} + 2^{-k-1}-2^{-K}, x_{<k} + 2^{-k}, x_{<k} + 2^{-k} + 2^{-k-1}$ is $2^{-k-2}$-close to $\hy$. If the veto player will choose the corresponding veto action ($00\vone$, $1\vzero$, $11\vzero$, see Equation \eqref{eq:veto-value}) on the $k$-th bit, she will improve the common utility to at least $\Ex[U^x]=-(3+g_r)[2^{-2(k+2)}+\Var[\oX_{>k}]]\pm \tO(\oep_C)$, which is a contradiction. This is indeed an improvement because $2^{-2(k+2)}<2^{-2(k+1)}+\tO(\epsilon_C)$ and $\Var[\oX_{>k}]\leq \Var[\oX_{>k}]+\Var [\oX_{<k}] = \Var[\oX_{-k}]$.
\end{proof}

The Vetoed-mix-significant-bits Lemma (Lemma \ref{lem:no-mix}) allows us to assign a real number $\oox_{r}$ for each random variable $\oX_{r}$ that captures well (up to $\oep_C$) the common number represented by  the $x_{r}$-team. 
We denote $m:=\min \{\supp (M^x_{r}) \}$ and we recall that $v(m,t)$ denotes the suffix value of the veto action $(m,t)$ (see Equation \eqref{eq:veto-value}). We define
\begin{align}\label{eq:oox}
\oox_{r}:=\begin{cases}
(\ox_{r})_m^< + v(m,t) &\text{if } m<k_C  \text{ and } t=\arg\max_{t'} \{\Prob[(M^x_{r},T^x_{r})=(m,t')]\} \\
\Ex[\oX_{r}] &\text{if } m\geq k_C. 
\end{cases}
\end{align}
The real number $\oox_{r}$ chooses the (deterministic) number that is obtained when the veto player imposes a veto on the most significant bit in her support. If several veto types are imposed on this most significant bit we choose the most probable one.\footnote{The tie-braking rule in case we have two most probable types is not important.} But, if this bit it too insignificant (i.e., $m\geq k_C$) $\oox_{r}$ is set to be the expectation. Note that in the first case $\ox^<_{r,m}$ is pure by the Vetoed-mix-significant-bits Lemma (Lemma \ref{lem:no-mix}), and hence it is not a random variable. 

\begin{lemma}[$\oox$-utility-approximation Lemma]\label{lem:oox}
For every $r\in R$ we have
\begin{align}\label{eq:uoox}
\begin{aligned}
\Ex[U^x_{r}] & =-(3+g_r) \Big((\Ex[\oX_{r}]-\hy_{r})^2 + \Var[\oX_r]\Big)\\
& = -(3+g_r)\Big((\oox_{r}-\hy_{r})^2\pm O(\oep_C^2)\Big) .
\end{aligned}
\end{align} 
\end{lemma}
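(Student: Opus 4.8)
The first equality is immediate: it is just Equation~\eqref{eq:uij} restricted to the imitation term, since the circuit, $x$-bit imitation, $y$-bit imitation and veto terms in $\Ex[U^x_r]$ are all of order $O(\epsilon_C)$ or smaller (indeed $\epsilon_C \gg \epsilon_T$), hence absorbed into the $O(\oep_C^2)$ error once we recall $\oep_C^2 \gg \epsilon_C$. So the real content is showing
\[
(\Ex[\oX_r]-\hy_r)^2 + \Var[\oX_r] = (\oox_r-\hy_r)^2 \pm O(\oep_C^2).
\]
I would split into the two cases of the definition~\eqref{eq:oox} of $\oox_r$. In the easy case $m := \min\{\supp(M^x_r)\} \geq k_C$, we have $\oox_r = \Ex[\oX_r]$ by definition, so it remains to bound $\Var[\oX_r]$. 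Since the veto is imposed only on bits of index $\geq k_C$, all bits with index $< k_C$ are vetoed with probability $0$, hence pure by the Vetoed-mix-significant-bits Lemma (Lemma~\ref{lem:no-mix}); therefore the only randomness in $\oX_r$ comes from the bits of index $\geq k_C$ and from the veto player's choice, all of which affect $\oX_r$ only within a window of size $2^{-k_C+1} = O(\oep_C)$. Thus $\Var[\oX_r] = O(\oep_C^2)$ and we are done in this case.

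In the main case $m < k_C$, write $\oX_r = (\ox_r)^{<}_{m} + V$, where $(\ox_r)^{<}_m$ is the (deterministic, by Lemma~\ref{lem:no-mix}) value of the first $m-1$ bits and $V = v(M^x_r, T^x_r)$ is the random suffix value determined by the veto player. By definition $\oox_r = (\ox_r)^{<}_m + v(m, t^\ast)$ where $t^\ast$ is the most probable type conditioned on $M^x_r = m$. The key point is that $\hy_r$ must be $O(\oep_C)$-close to $\oox_r$: otherwise the veto player, by deviating to always play $(m, t^\ast)$, would strictly reduce the (dominant) imitation term $-(3+g_r)\Ex[(\oX_r-\hy_r)^2]$. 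Indeed, if $|\oox_r - \hy_r| \gg \oep_C$, then because $v(m,t^\ast)$ is one of a constant-size menu of suffix values all lying in an interval of length $O(2^{-m}) = O(\oep_C)$, moving all the veto mass onto $t^\ast$ changes the imitation loss by an amount bounded in terms of this window but changes $(\Ex[\oX_r]-\hy_r)^2$ in a strictly improving direction — a contradiction with equilibrium, analogous to the argument already used in the proof of Lemma~\ref{lem:no-mix}. Hence $|\oox_r - \hy_r| = O(\oep_C)$. Combined with the fact that every realization of $\oX_r$ also lies within $O(\oep_C)$ of $\oox_r$ (both $V$ and $v(m,t^\ast)$ lie in the same $O(2^{-m})$-window around the grid point selected by the most significant vetoed bit), we get $\Ex[(\oX_r - \hy_r)^2] = (\oox_r - \hy_r)^2 \pm O(\oep_C^2)$ by expanding the square and bounding each cross/variance term by $O(\oep_C) \cdot O(\oep_C) = O(\oep_C^2)$.

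The main obstacle I anticipate is the careful bookkeeping in the main case: one has to be sure that \emph{all} the randomness contributing to $\oX_r$ beyond the choice of $t^\ast$ — namely the less significant vetoed bits, the tie-broken alternative veto types, and the bits past $m$ — is confined to a window of width $O(2^{-m}) = O(\oep_C)$ around the grid point, so that both $|\Ex[\oX_r] - \oox_r|$ and $\sqrt{\Var[\oX_r]}$ are $O(\oep_C)$; and to invoke the equilibrium (indifference/best-reply) condition for the veto player correctly to pin $\hy_r$ near $\oox_r$. Everything else — discarding the lower-order utility terms, expanding the square — is routine given the parameter hierarchy~\eqref{eq:constants}, in particular $\oep_C^2 \gg \epsilon_C \gg \epsilon_T$.
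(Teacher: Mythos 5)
Your treatment of the first equality and of the case $m\geq k_C$ matches the paper: there $\oox_r=\Ex[\oX_r]$, all bits below $k_C$ are pure by the Vetoed-mix-significant-bits Lemma, so the realizations of $\oX_r$ spread over a window of width $O(2^{-k_C})=O(\oep_C)$ and $\Var[\oX_r]=O(\oep_C^2)$. The problem is the main case $m<k_C$, where your argument has two genuine gaps. First, the claim that every realization of $\oX_r$ lies within $O(\oep_C)$ of $\oox_r$ is false: the suffix values $v(m,\cdot)$ and the unvetoed/mixed bits from position $m$ onward range over a window of width $O(2^{-m})$, and since $m<k_C=\lfloor -\log\oep_C\rfloor$ we have $2^{-m}>\oep_C$ — possibly by many orders of magnitude (e.g.\ $m=1$ gives a window of constant width). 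Second, the claim that $|\oox_r-\hy_r|=O(\oep_C)$ cannot be obtained from the deviation you sketch: moving all veto mass onto $(m,t^\ast)$ yields, by the indifference principle, \emph{exactly} the equilibrium utility rather than a strict improvement, so no contradiction arises; and the closeness of $\oox_r$ to $\hy_r$ is precisely the content of the later Mild-imitation Lemma (Lemma~\ref{lem:mild-imitation}), whose proof relies on the present lemma, so assuming it here would be circular. In any case the lemma does not assert that $(\oox_r-\hy_r)^2$ is small — only that it tracks the utility — so your route proves the wrong statement under unavailable hypotheses.

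The fix is the observation you circled around but did not use: the indifference principle applied directly to the veto player. Since $(m,t^\ast)$ lies in her support, $\Ex[U^x_r]=\Ex[U^x_r\mid (M^x_r,T^x_r)=(m,t^\ast)]$; conditional on that action the prefix bits are pure (Lemma~\ref{lem:no-mix}) and the suffix is fixed, so $\oX_r$ is \emph{deterministic} and equal to $\oox_r$, making the conditional imitation term exactly $-(3+g_r)(\oox_r-\hy_r)^2$ with zero variance, while all remaining terms contribute $\tilde O(\epsilon_C)\ll\oep_C^2$. No bound on $\Var[\oX_r]$, on the spread of realizations, or on $|\oox_r-\hy_r|$ is needed in this case.
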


\begin{proof}
The first line follows from Eq.~\eqref{eq:uij} simply because all terms except the imitation affect the utility by at most $\tO(\epsilon_C)\ll \oep_C^2$.

We now prove the second line.
If $m<k_C$ then the indifference principle for the veto player and the fact that for such a choice $\Var[\oX_{r}]=0$ (i.e., $\oX_{r}$ is deterministic) implies  the lemma by Equation \eqref{eq:uij}. If $m\geq k_C$ then by the Vetoed-mix-significant-bits Lemma (Lemma \ref{lem:no-mix}) $\oX^+_{r,k_C}$ is deterministic
and hence all realizations of $\oX_{r}$ are $2\times 2^{-k_C}=2\oep_C$-close to $\Ex [\oX_{r}]$ and so $\Var[\oX_{r}]\leq 4\oep_C^2$. Therefore, $(3+g_r)(\oox_{r}-\hy_{r})^2=(3+g_r)(\Ex[\oX_{r}]-\hy_{r})^2$ approximates up to $O(\oep_C^2)$ the imitation loss. 
\end{proof}

We denote by $(\hy^b_{r,k})_{k\in [K]}$ the binary representation of $\hy_{r}$.
Simply speaking, the following lemma states that for coordinates $k>N_{in}$ (those that do not serve as an input for the circuit) if all more significant bits have matched the binary representation of $\hy_{r}$ and if the veto player didn't impose veto on these more significant bits, then the $k$-th bit player also matches the binary representation of $\hy-r$.

\begin{lemma}(Correct-bit-no-circuit Lemma)\label{lem:imitation-No-circuit}
Under Assumption \ref{as:hy}, for every $r\in R$, and every $N_{in}<k<K-3$, if the following conditions hold: 
\begin{description}
\item[Bit-$k$ is not vetoed:] $\Prob[M^x_{r}<k]=0$; and 
\item[More significant bits agree with $\hy_r$:] $\Prob[X_{r,k'}=\hy^b_{r,k'}]=1$ for every $k'<k$,
\end{description}
 then we have at least one of the following conclusions:
 \begin{description}
 \item[$\hy_r$ is indifferent about Bit-$k$:] $(\hy_{r})_{\geq k} \in [2^{-k}-2^{-K},2^{-k}]$; or 
 \item[Bit-$k$ also agrees with $\hy_r$:] $\Prob[X_{r,k}=\hy^b_{r,k}]=1$. 
 \end{description}
\end{lemma}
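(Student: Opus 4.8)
The plan is to argue by contradiction: suppose both conclusions fail, so $(\hy_r)_{\geq k} \notin [2^{-k}-2^{-K}, 2^{-k}]$ and the bit player $x_{r,k}$ does not always play $\hy^b_{r,k}$. The latter means that either $x_{r,k}$ is mixed, or it is pure on the ``wrong'' bit value. I would first use the hypotheses to pin down $\Ex[\oX_r]$ tightly. Since the more significant bits $k'<k$ all agree with $\hy_r$ (purely) and bit $k$ is not vetoed (so neither is any less significant bit in $\supp(M^x_r)$'s complement up to $k$), we have $(\ox_r)_{<k}$ is pure and equals $(\hy_r)_{<k}$. Hence the imitation loss, as rewritten in Equation~\eqref{eq:uij}, reduces — via the $\oox$-utility-approximation Lemma (Lemma~\ref{lem:oox}) and the fact that every non-imitation term in $\Ex[U^x_r]$ is only $\tO(\epsilon_C) \ll 2^{-2K}$ when $k < K-3$ — to controlling $\bigl(\Ex[\oX_r^{\geq k}] - (\hy_r)_{\geq k}\bigr)^2 + \Var[\oX_r^{\geq k}]$, where I write $\oX_r^{\geq k}$ for the contribution of bits $\geq k$ (note $N_{in}<k$ rules out the circuit term mattering since those bits don't feed the circuit).

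The core of the argument is then a case analysis on $(\hy_r)_{\geq k}$, which lies in $[0, 2^{-k+1})$ since bit $k-1$'s value has been fixed consistently. \textbf{Case A:} $(\hy_r)_{\geq k} < 2^{-k} - 2^{-K}$, i.e. $\hy^b_{r,k}=0$ with slack. Then playing $X_{r,k}=1$ forces $\oX_r^{\geq k} \geq 2^{-k}$, so the distance to $(\hy_r)_{\geq k}$ is at least $2^{-K}$, while playing $X_{r,k}=0$ and matching can bring the loss down to $O(2^{-2K})$ using the less significant bits; more carefully, since $X_{r,k}=1$ is ruled out by strict suboptimality, and since $x_{r,k}$ cannot be mixed either (mixing would put positive weight on the strictly worse value $1$), we conclude $X_{r,k}=0=\hy^b_{r,k}$ purely, contradicting our assumption. \textbf{Case B:} $(\hy_r)_{\geq k} > 2^{-k}$, i.e. $\hy^b_{r,k}=1$ with slack, is symmetric: $X_{r,k}=0$ is strictly suboptimal, so $x_{r,k}$ plays $1$ purely. \textbf{Case C:} $(\hy_r)_{\geq k} \in [2^{-k}-2^{-K}, 2^{-k}]$ — but this is exactly the first conclusion, so it is not a counterexample. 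In Cases A and B I should double-check that the less significant bit players (those in $[k+1,K]$, together with the veto player if $\supp(M^x_r)$ lands below $k$) can indeed be used freely to approach $(\hy_r)_{\geq k}$ from the correct side once bit $k$ is set correctly — but since I only need a \emph{strict} gap of order $2^{-K}$ to beat the wrong choice and all other utility terms are $\tO(\epsilon_C) \ll 2^{-2K} \ll 2^{-K}$, this works: the penalty $\geq c\cdot 2^{-k} \cdot 2^{-K} \gg 2^{-2K}$ for the wrong bit dominates.

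The main obstacle I anticipate is handling the \emph{mixed} behavior of bit $k$ cleanly: when $x_{r,k}$ is mixed, the indifference principle says the utility conditioned on $X_{r,k}=0$ equals that conditioned on $X_{r,k}=1$, but I must rule this out by showing that in Cases A and B one of the two conditional utilities is strictly larger. This requires arguing that conditioning on the \emph{better} bit value, the team can (at its own best response) achieve strictly lower imitation loss than any mixture with the worse value allows — essentially an argument that the expected loss is a strictly convex-like function of the bit-$k$ mixing probability with minimum at the deterministic correct value, given that $(\hy_r)_{\geq k}$ is bounded away from the midpoint $2^{-k-1}\cdot 2 = 2^{-k}$ (here the midpoint between the two realization ranges). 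The bound $(\hy_r)_{\geq k} \notin [2^{-k}-2^{-K}, 2^{-k}]$ is precisely what gives the needed separation, and the $k < K-3$ condition ensures the less-significant bits have enough room to not interfere. I would also need to invoke the Vetoed-mix-significant-bits Lemma (Lemma~\ref{lem:no-mix}) if $k < k_C$ to know the more significant bits really are pure, though the hypothesis already grants this directly; for $k \geq k_C$ one uses that $k > N_{in}$ and the hypotheses as stated.
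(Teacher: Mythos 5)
There is a genuine gap in Cases A and B: you compare the two values of bit $k$ as if the bit player could also steer the less significant bits, but those bits belong to \emph{other} players and their equilibrium distribution is fixed when you condition on $X_{r,k}=b$. The bit player's indifference point is $\Ex[\oX_{>k}]+2^{-k-1}$, not $2^{-k}$, and $\Ex[\oX_{>k}]$ can sit anywhere in $[0,2^{-k}-2^{-K}]$. Concretely, take $y=(\hy_r)_{\geq k}=2^{-k}+2^{-K}$ (so Case B, $\hy^b_{r,k}=1$) and suppose the bits below $k$ are deterministically all $1$, so $\oX_{>k}=2^{-k}-2^{-K}$: then $X_{r,k}=0$ gives imitation distance $2\cdot 2^{-K}$ while $X_{r,k}=1$ gives distance $\approx 2^{-k}$, so the \emph{wrong} bit is the strict best response of the bit player in isolation. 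Your claimed penalty ``$\geq c\cdot 2^{-k}\cdot 2^{-K}$ for the wrong bit'' therefore does not hold, and excluding $(\hy_r)_{\geq k}$ from $[2^{-k}-2^{-K},2^{-k}]$ does not separate $y$ from the actual indifference point. This is precisely why the paper's proof cannot, and does not, argue through the bit player alone: it routes the contradiction through the \emph{veto player}. Assuming $\Prob[X_k=1-\hy^b_k]>0$, the indifference principle forces $\Ex[U^x_k\mid X_k=1-\hy^b_k]=v$, where $v$ also upper-bounds the payoff of the veto deviations $(k,1\vzero)$, $(k,11\vzero)$, $(k,\vone)$; the paper then decomposes $\Ex[U^x_k\mid X_k=1-\hy^b_k]$ over the veto player's actions and shows each conditional piece is strictly below one of those veto payoffs --- using the full imitation term when the bit is unvetoed (gain $\Omega(2^{-2K})\gg\tO(\epsilon_T)$) and the $\epsilon_T$-weighted $x_k$-bit-imitation term when it is vetoed. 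Your proposal contains no analogue of this mechanism, so the mixed (and pure-but-wrong) cases are not actually ruled out.

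A secondary inaccuracy: you bound the non-imitation terms by $\tO(\epsilon_C)\ll 2^{-2K}$, but the parameter ordering in Equation~\eqref{eq:constants} has $\epsilon_C\gg 2^{-2K}$. The correct statement (and the one the paper uses in Equation~\eqref{eq:uijk-simple}) is that, because bits $k>N_{in}$ do not feed the circuit and the deviations considered do not touch it, the residual terms have weight $\tO(\epsilon_T)\ll 2^{-2K}$.
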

Note that the case where $(\hy_{r})_{\geq k} \in [2^{-k}-2^{-K},2^{-k}]$ corresponds to the case where $\hy$ has two different binary representations. Both representations approximate $\hy_{r}$ up to $2^{-K}$, but the $k$-th bit of these representations differs. So the lemma essentially states that the $x_{r,k}$ matches the representation, unless there are multiple representation with different $k$-th bit.

\paragraph{Informal proof sketch} The idea is to state that if the bit player $x_{r,k}$ plays the wrong bit with positive probability then the veto player will improve the common utility if she vetoes this bit. Once the $k$-th bit player is vetoed with probability 1, the bit player affects the utility only through the bit imitation. Therefore, she prefers to match the correct bit. The formal arguments are quite involved because the lemma should treat the case where both the $x_{r,k}$ player and the veto player are playing mixed strategies.

\begin{proof}
For simplicity of notations we omit the index $r$ in the notations, which remains fixed for the entire proof of this lemma.
The bits $(\oX^b_{k'})_{k'\geq N_{in}}$ do not affect the circuit utility. Moreover, we will consider deviations of the veto player to $m^x=k$ which again does not affect the circuit utility. Therefore, the circuit utility can be ignored.
By the assumption that $x_{k'}=\hy_{k'}$ for $k'<k$ the contribution of these bits to $\Ex[\oX]$ and to $\hy$ is identical. Similarly, for the expression $\Ex[\tX]$. For simplicity of notations we denote $X=\oX_{\geq k} \in [0,2^{-k+1}-2^{-K}]$ and $y=\hy_{\geq k}\in [0,2^{-k+1}]$. Therefore we can rewrite the expected common utility (see Equation \eqref{eq:uij}) as 
\begin{align}\label{eq:uijk}
\begin{aligned}
\Ex[U^x_k]=\text{\footnotemark}&- 
(3+g_r) \left[ \left( \Ex[X]- y \right)^2 + \Var [X] \right] &&\text{(imitation)}\\
&-\epsilon_T \left[ \left( \Ex[\tX_{\geq k}]-y \right)^2 + \Var [\tX_{\geq k}] \right] &&\text{($x_k$-bit imitation)}\\
&-\epsilon_T \Ex [U^x_{-k \text{ bit}}(X)] &&\text{($x_{-k}$-bit imitation)}\\
&-\epsilon_T \Ex [U^y_{\text{bit}}(X)] &&\text{($y$-bit imitation)}\\
&+\epsilon_T \Ex[M^x] &&\text{(veto)}.
\end{aligned}
\end{align}

\footnotetext{Omitting the circuit term because it is not affected by less (or equally) significant bits than $k$.}

To analyze the case where the $k$-th bit is not vetoed, it suffices to simplify to 
\addtocounter{footnote}{-1}
\begin{equation}\label{eq:uijk-simple}
\Ex[U^x_k]=\text{\footnotemark}- 
(3+g_r) \left[ \left( \Ex[X]- y \right)^2 + \Var [X] \right] \pm \tO(\epsilon_T).
\end{equation}

\begin{figure}\label{fig:y-cases}
\caption{Cases of $\hy_r$ for proof of Imitation Lemma~\ref{lem:imitation-No-circuit}}
\begin{tikzpicture}

\def\myx{0}
\def\myy{0}
\def\W{16}
\def\smallx{1.2}
\def\largex{4}
\def\linewidtha{0.4}
\def\linewidthg{0.5}
\def\scalesize{1.1}
\def\scalesizea{1.2}

\draw [line width = \linewidtha mm] (\myx ,\myy - 0.5) -- (\myx + \W, \myy -0.5) ;

\draw [dashed,line width = \linewidtha mm] (\myx , \myy+ 0 ) -- (\myx ,\myy - 1); 
\draw [dashed,line width = \linewidtha mm] (\myx + 8-\smallx, \myy+ 0 ) -- (\myx + 8-\smallx,\myy - 1); 
\draw [dashed,line width = \linewidtha mm] (\myx + 8, \myy+ 0 ) -- (\myx + 8,\myy - 1); 
\draw [dashed,line width = \linewidtha mm] (\myx + \W - \largex, \myy + 0 ) -- (\myx + \W - \largex, \myy - 1); 
\draw [dashed,line width = \linewidtha mm] (\myx + \W, \myy + 0 ) -- (\myx + \W, \myy - 1); 

\node[scale = \scalesize](z) at (\myx, \myy-1.3) {$0$};
\node[scale = \scalesize](z) at (\myx+8-\smallx, \myy+0.34) {$2^{-k}-2^{-K}$};
\node[scale = \scalesize](z) at (\myx+8, \myy-1.3) {$2^{-k}$};
\node[scale = \scalesize](z) at (\myx+\W-\largex, \myy-1.3) {$2^{-k}+2^{-k-1}$};
\node[scale = \scalesize](z) at (\myx+\W, \myy-1.3) {$2^{-k+1}$};

\node[scale = \scalesize](z) at (\myx + 4-0.5*\smallx, \myy-0.2) {\color{blue}(a)};
\node[scale = \scalesize](z) at (\myx + 8-0.5*\smallx, \myy-0.2) {\color{blue}(b)};
\node[scale = \scalesize](z) at (\myx + 12-0.5*\largex, \myy-0.2) {\color{blue}(c)};
\node[scale = \scalesize](z) at (\myx + 16-0.5*\largex, \myy-0.2) {\color{blue}(d)};

\end{tikzpicture}

\end{figure}
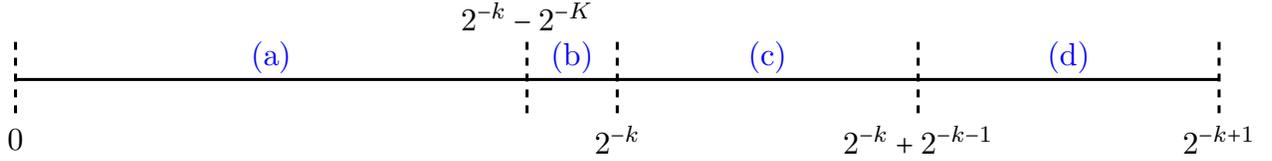

We now consider a few different cases of $\hy_r$, as depicted in Figure~\ref{fig:y-cases}.
Case (a),  where $y<2^{-k}-2^{-K}$, will follow by arguments symmetric to Cases (c) and (d) which are described below.
In Case (b), where $2^{-k}-2^{-K} \leq y \leq 2^{-k}$ we already have the ``$\hy_r$ is indifferent'' conclusion. 
For the rest of the proof, we will consider the Cases (c) and (d), in which $y>2^{-k}$  and $\hy^b_{k}=1$.  We need to prove that $\Prob[X_k=1]=1$. 

We start with considering the six types of the vetos $(k,\vzero),(k,11\vone),(k,0\vone),(k,1\vzero),(k,11\vzero),(k,\vone)$ and find which is optimal (among the six options) in the possible regions of $y$.
Each of the veto types ($\vzero,00\vone,0\vone,1\vzero,11\vzero,\vone$) with $m^x=k$ induces a deterministic value for $X$; i.e.~the variance term is zero.  By Equations \eqref{eq:veto-value} and \eqref{eq:uijk-simple} the corresponding utilities are given by 

\begin{align}\label{eq:m1}
\begin{aligned}
\Ex[U^x_k|M^x=k,T^x=t]=
\begin{cases}
-(3+g_r)(0-y)^2 &\text{if } t=\vzero \\
-(3+g_r)(2^{-k-1}-2^{-K}-y)^2 &\text{if } t=00\vone \\
-(3+g_r)(2^{-k}-2^{-K}-y)^2 &\text{if } t=0\vone \\
-(3+g_r)(2^{-k}-y)^2 &\text{if } t=1\vzero \\
-(3+g_r)(2^{-k}+2^{-k-1}-y)^2 &\text{if } t=11\vzero \\
-(3+g_r)(2^{-k+1}-2^{-K}-y)^2 &\text{if } t=\vone \\
\end{cases}
\pm \tO(\epsilon_T).
\end{aligned}
\end{align}

We denote $\Ex[U^x_k]=v$. We assume by way of contradiction that $\Prob[X_k=0]>0$ and we consider two cases.

\paragraph{Case (c): $2^{-k}\leq y \leq 2^{-k}+2^{-k-1}$.} By Equation \eqref{eq:m1} the two relevant vetos with $m^x=k$ are either $(k,1\vzero)$ or $(k,11\vzero)$.

By the indifference principle, since $\Prob[X_k=0]>0$, we know that $\Ex[U^x_k|X_k=0]=v$. Since $v$ is also the utility of the veto player in equilibrium, we know that $\Ex[U^x_k|(M^x,T^x)=(k,1\vzero)]\leq v$ and $\Ex[U^x_k|(M^x,T^x)=(k,1\vzero)]\leq v$. We use the shorter notations $\Ex[U^x_k|(M^x,T^x)=(k,1\vzero)]=\Ex[U^x_k|1\vzero]$ and $\Ex[U^x_k|(M^x,T^x)=(k,11\vzero)]=\Ex[U^x_k|11\vzero]$. We show the following inequalities
\begin{align}
&\Ex[U^x_k|X_k=0,M^x>k]<\Ex[U^x_k|1\vzero] \leq v \label{eq:2.1} \\
&\Ex[U^x_k|X_k=0,M^x=k,T^x=1\vzero]<\Ex[U^x_k|1\vzero] \leq v \text{ and }\label{eq:2.2}\\
&\Ex[U^x_k|X_k=0,M^x=k,T^x=11\vzero]<\Ex[U^x_k|1\vzero] \leq v\label{eq:2.3}
\end{align}
which lead to a contradiction because $v=\Ex[U^x|X_k=0]$ is a convex combination the LHS of Equations \eqref{eq:2.1}, \eqref{eq:2.2} and \eqref{eq:2.3} (recall that by lemma premise, $\Prob[M^x_{r}<k]=0$). 

For Inequality~\eqref{eq:2.1} we focus on the imitation term only. For $\Ex[U^x_k|1\vzero]$ the imitation term is
$-(3+g_r)(y-2^{-k})^2$ while the imitation term of $\Ex[U^x_k|X_k=0,M^x>k]$ is at most $-(3+g_r)(y-(2^{-k}-2^{-K}))^2$ because if $X_k=0$ then $X\leq 2^{-k}-2^{-K}$. 
The difference between these two imitation terms is at least $\Omega(2^{-2K})\gg \tO(\epsilon_T)$ and hence $\Ex[U^x_k|1\vzero]$ is higher. 

To show Inequality~\eqref{eq:2.2} we first observe that it is equivalent to 
\begin{align*}
\Ex[U^x_k|X_k=0,(M^x,T^x)=(k,1\vzero)]<\Ex[U^x_k|X_k=1,(M^x,T^x)=(k,1\vzero)]
\end{align*}
because $\Ex[U^x_k|1\vzero]$ is a convex combination of these two terms. 
Conditional on the $k$-th bit being vetoed, all utility terms of $U^x_k$ are identical for $X_k=0$ and $X_k=1$ except the $x_k$-bit imitation term. For $\Ex[U^x_k|X_k=0,(M^x,T^x)=(k,1\vzero)]$ the bit imitation term is $\epsilon_T y^2$ while for $\Ex[U^x_k|X_k=1,(M^x,T^x)=(k,1\vzero)]$ the bit imitation term is $\epsilon_T (y-2^{-k})^2$. Namely the latter is strictly better.

To show Inequality~\eqref{eq:2.3} we first observe that it is equivalent to 
\begin{align*}
\Ex[U^x_k|X_k=0,(M^x,T^x)=(k,11\vzero)]<\Ex[U^x_k|X_k=1,(M^x,T^x)=(k,11\vzero)]
\end{align*}
because $\Ex[U^x_k|11\vzero]$ is a convex combination of these two terms. 
Similarly to the arguments for Inequality~\eqref{eq:2.2}, we shall focus on the $x_k$-bit imitation only. 
For $\Ex[U^x_k|X_k=0,(M^x,T^x)=(k,11\vzero)]$ the bit imitation term is $\epsilon_T (y-2^{-k-1})^2$ while for $\Ex[U^x_k|X_k=1,(M^x,T^x)=(k,1\vzero)]$ the bit imitation term is $\epsilon_T (y-(2^{-k}+2^{-k-1}))^2$. Since $y>2^{-k}$, $y$ is closer to $2^{-k}+2^{-k-1}$ than to $2^{-k-1}$; namely the latter $x_k$-bit imitation is better.

\paragraph{Case (d): $2^{-k}-2^{-k-1} \leq y \leq 2^{-k}+2^{-k-1}$.} By Equation \eqref{eq:m1} the two relevant vetos with $m^x=k$ are either $(k,11\vzero)$ or $(k,\vone)$.

By the indifference principle, since $\Prob[X_k=0]>0$, we know that $\Ex[U^x_k|X_k=0]=v$. Since $v$ is also the utility of the veto player in equilibrium, we know that $\Ex[U^x_k|(M^x,T^x)=(k,11\vzero)]\leq v$ and $\Ex[U^x_k|(M^x,T^x)=(k,\vone)]\leq v$. We use the shorter notations $\Ex[U^x_k|(M^x,T^x)=(k,11\vzero)]=\Ex[U^x_k|11\vzero]$ and $\Ex[U^x_k|(M^x,T^x)=(k,\vone)]=\Ex[U^x_k|11\vone]$. We show the following inequalities
\begin{align}
&\Ex[U^x_k|X_k=0,M^x>k]<\Ex[U^x_k|11\vzero] \leq v \label{eq:3.1} \\
&\Ex[U^x_k|X_k=0,M^x=k,T^x=11\vzero]<\Ex[U^x_k|11\vzero] \leq v \text{ and }\label{eq:3.2}\\
&\Ex[U^x_k|X_k=0,M^x=k,T^x=\vone]<\Ex[U^x_k|1\vone] \leq v\label{eq:3.3}
\end{align}
which lead to a contradiction because $v=\Ex[U^x|X_k=0]$ is a convex combination the tree left-hard sides of Equations \eqref{eq:3.1}, \eqref{eq:3.2} and \eqref{eq:3.3}.

For Inequality~\eqref{eq:3.1} we focus on the imitation term only. For $\Ex[U^x_k|11\vzero]$ the imitation term is
$-(3+g_r)(y-2^{-k}-2^{-k-1})^2$ while the imitation term of $\Ex[U^x_k|X_k=0,M^x>k]$ is at most $-(3+g_r)(y-(2^{-k}-2^{-K}))^2$ because if $X_k=0$ then $X\leq 2^{-k}-2^{-K}$. 
The difference between these two imitation terms is at least $\Omega(2^{-2k})\gg \tO(\epsilon_T)$ and hence $\Ex[U^x_k|11\vzero]$ is higher. 

Inequality \eqref{eq:3.2} is identical to \eqref{eq:2.3} and the arguments of its proof are valid for Case (d) as well. 

To show Inequality~\eqref{eq:3.3} we first observe that it is equivalent to 
\begin{align*}
\Ex[U^x_k|X_k=0,(M^x,T^x)=(k,\vone)]<\Ex[U^x_k|X_k=1,(M^x,T^x)=(k,\vone)]
\end{align*}
because $\Ex[U^x_k|11\vzero]$ is a convex combination of these two terms. 
Again, we shall focus on the $x_k$-bit imitation only. 
For $\Ex[U^x_k|X_k=0,(M^x,T^x)=(k,\vone)]$ the bit imitation term is $\epsilon_T (y-(2^{-k}-2^{-K}))^2$ while for $\Ex[U^x_k|X_k=1,(M^x,T^x)=(k,\vone)]$ the bit imitation term is $\epsilon_T (y-(2^{-k+1}-2^{-K}))^2$. Since $y>2^{-k}+2^{-k+1}-2^{-K}$, $y$ is closer to $2^{-k+1}-2^{-K}$ than to $2^{-k}-2^{-K}$; namely the latter $x_k$-bit imitation is better.  
\end{proof}

\begin{lemma}[$\oox$-far-imitation Lemma]\label{lem:grid-far}
Under Assumption \ref{as:hy}, if $\oox_{r}$ is $\oep_C$-far from the $2^{-N_{in}-1}$-grid then team-$x_{r}$ strongly perfectly imitates $\hy_{r}$.
\end{lemma}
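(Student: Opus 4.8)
The plan is to expose the bits of $\oX_{r}$ one at a time, from most to least significant, bootstrapping the tools already established (Lemmata~\ref{lem:no-mix},~\ref{lem:oox},~\ref{lem:imitation-No-circuit}); the far-from-grid hypothesis is precisely what does the work of Lemma~\ref{lem:imitation-No-circuit} on the first $N_{in}$ bits, where the circuit term is in play. The first step locates $\hy_{r}$. Consider the deviation of the $x_{r}$-team in which the veto player plays $m^x_{r}=K+1$ and each bit player $x_{r,k}$ plays the $k$-th digit of the binary expansion of $\hy_{r}$. Reading off the decomposition~\eqref{eq:uij}: under this deviation the imitation term is $-(3+g_r)(\oX_{r}-\hy_{r})^2$ with $(\oX_{r}-\hy_{r})^2\le 2^{-2K}$ and zero variance, the circuit, $x$-bit, and $y$-bit terms change by at most $\tO(\epsilon_C)$, and the veto term only improves; hence the deviation value is at least $-(3+g_r)2^{-2K}-\tO(\epsilon_C)$. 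Comparing with $\Ex[U^x_{r}]=-(3+g_r)\bigl((\oox_{r}-\hy_{r})^2\pm O(\oep_C^2)\bigr)$ from Lemma~\ref{lem:oox}, and using $\epsilon_C,\,2^{-2K}\ll\oep_C^2$, yields $(\oox_{r}-\hy_{r})^2=O(\oep_C^2)$. Since $\oep_C\ll 2^{-N_{in}-1}$ and $\oox_{r}$ is $\oep_C$-far from the $2^{-N_{in}-1}$-grid, $\oox_{r}$ and $\hy_{r}$ lie well inside the same cell of that grid; in particular $\hy_{r}$ has unambiguous first $N_{in}+1$ binary digits, and these coincide with the first $N_{in}+1$ digits of $\oox_{r}$.

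The second step rules out any veto on the first $N_{in}$ bits. Let $m:=\min\supp(M^x_{r})$ and suppose $m\le N_{in}$. Then $m<k_C$, so by definition~\eqref{eq:oox} we have $\oox_{r}=(\ox_{r})_m^<+v(m,t^\ast)$ for the most-likely suffix type $t^\ast$, where $(\ox_{r})_m^<$ --- the value of the $m-1$ most significant bit players, which are pure by Lemma~\ref{lem:no-mix} --- is an integer multiple of $2^{-m+1}$. Inspecting~\eqref{eq:veto-value}, every $v(m,t)$ lies within $2^{-K}$ of an integer multiple of $2^{-m-1}$; since $m\le N_{in}$, such a multiple is also a multiple of $2^{-N_{in}-1}$, so $\oox_{r}$ is within $2^{-K}\ll\oep_C$ of the $2^{-N_{in}-1}$-grid, contradicting the hypothesis. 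Hence $\Prob[M^x_{r}>N_{in}]=1$, the first requirement of strong imitation. Because $N_{in}<k_C\le\min\{\supp(M^x_{r})\cup\{k_C\}\}$, Lemma~\ref{lem:no-mix} now makes each of the first $N_{in}$ bit players pure, and these pure bits are exactly the first $N_{in}$ digits of $\oox_{r}$; by the first step they agree with $\hy_{r}$, so $\Prob[X_{r,k}=\hy^b_{r,k}]=1$ for every $k\in[N_{in}]$ --- the second requirement of strong imitation.

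The third step upgrades this to perfect imitation, $\Ex[(\oX_{r}-\hy_{r})^2]=O(2^{-2K})$. With the base case in hand (bits $1,\dots,N_{in}$ pure, equal to $\hy^b_{r}$, and unvetoed), I would iterate the Correct-bit-no-circuit Lemma (Lemma~\ref{lem:imitation-No-circuit}) for $k=N_{in}+1,\dots,K-4$: at each step it either forces $X_{r,k}=\hy^b_{r,k}$ (so this bit is pure) or reports that $\hy_{r}$ is indifferent at $k$, i.e.~$(\hy_{r})_{\ge k}\in[2^{-k}-2^{-K},2^{-k}]$, which already pins down the tail of $\hy_{r}$ to within $2^{-K}$ (and the veto menu~\eqref{eq:veto-value} contains suffix values $2^{-k}$ and $2^{-k}-2^{-K}$, so the team can match it that closely). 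The vetoed suffix, beginning at $m^\ast:=\min\supp(M^x_{r})\in[N_{in}+1,K]$ whenever $M^x_{r}\neq K+1$, is controlled by a square-loss/indifference argument exactly in the spirit of the proof of Lemma~\ref{lem:no-mix}: the indifferent veto player must select a suffix value $v(m^\ast,\cdot)$ within $O(2^{-K})$ of $(\hy_{r})_{\ge m^\ast}$, for otherwise playing $m^x_{r}=K+1$ (or a less significant veto) would strictly improve the common utility. Putting these together, only bits $K-3,\dots,K$ of $\oX_{r}$ fail to be pinned, contributing $O(2^{-K})$ to $|\Ex[\oX_{r}]-\hy_{r}|$ and $O(2^{-2K})$ to $\Var[\oX_{r}]$, so $\Ex[(\oX_{r}-\hy_{r})^2]=(\Ex[\oX_{r}]-\hy_{r})^2+\Var[\oX_{r}]=O(2^{-2K})$.

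I expect the third step to be the crux. Lemma~\ref{lem:imitation-No-circuit} only forbids a \emph{wrong} value of an \emph{unvetoed} bit, so the induction has to be threaded through two branches it does not itself close: the possibility that the veto player vetoes somewhere in $(N_{in},K]$ --- handled by the menu-arithmetic/indifference argument above, which additionally must track the $x$-bit-imitation term that drives a vetoed bit player toward the digit of $\hy_{r}$ --- and the ``indifferent'' branch where $\hy_{r}$ has two valid binary representations. Throughout, one has to verify that the full budget of auxiliary terms in~\eqref{eq:u} ($\epsilon_C$, $\epsilon_T$, $\epsilon_T\epsilon_G$, and the $\tO(\epsilon_C)$ slacks) stays strictly below $2^{-2K}$, which is exactly the role of the constant hierarchy~\eqref{eq:constants}. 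Steps one and two, by contrast, are short and use only Lemmata~\ref{lem:oox} and~\ref{lem:no-mix} together with the grid-arithmetic of the veto menu~\eqref{eq:veto-value}.
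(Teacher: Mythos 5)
There is a genuine gap in your first step. You bound $(\oox_r-\hy_r)^2$ by comparing the equilibrium utility against ``the deviation of the $x_r$-team in which the veto player plays $m^x_r=K+1$ and each bit player plays the $k$-th digit of $\hy_r$'' --- but that is a \emph{joint} deviation of all $K+1$ players of the team, and a Nash equilibrium only rules out \emph{unilateral} deviations. In an identical-interest game a profile can be an equilibrium even though a coordinated deviation strictly improves the common utility (this is exactly the ``stuck at $011\ldots1$'' phenomenon that motivated introducing the veto player in the first place). No single player can realize the deviation you describe: the veto menu at $m=1$ has granularity $2^{-2}$, and a single bit player can move $\oX_r$ by only one power of two. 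Since your second and third steps both invoke the conclusion of step one (that the first $N_{in}$ bits of $\oox_r$ agree with $\hy^b_r$), the argument as written does not close. There is also a smaller quantitative slip in the same step: $(\oox_r-\hy_r)^2=O(\oep_C^2)$ gives $|\oox_r-\hy_r|\le C\oep_C$ for some constant $C$ possibly larger than $1$, which together with $\oox_r$ being only $\oep_C$-far from the grid does not by itself place $\hy_r$ in the same grid cell.

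The paper obtains the bit-agreement you need by a \emph{unilateral} deviation instead: if some bit $x_k\neq\hy^b_{r,k}$ for $k\le N_{in}$, then $\oox_r$ and $\hy_r$ lie in different cells of the $2^{-N_{in}}$-grid, and the veto player alone can play $(k,\vzero)$ or $(k,\vone)$ to move $\oox_r$ onto (or $2^{-K}$-close to) the separating grid point in the direction of $\hy_r$; since $\oox_r$ is $\oep_C$-far from the grid, this improves the imitation loss $(\oox_r-\hy_r)^2$ by at least $\oep_C^2$, which by the $\oox$-utility-approximation Lemma dominates every other term the veto player touches. With that replacement, your second step (ruling out vetoes on the first $N_{in}$ bits via the grid-arithmetic of the veto menu, then invoking Lemma~\ref{lem:no-mix}) and your third step (inducting with Lemma~\ref{lem:imitation-No-circuit}, handling the indifferent branch by a unilateral veto achieving $O(2^{-2K})$ loss, and deferring a matching veto by one bit via the $\epsilon_T$ veto utility) are essentially the paper's argument.
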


\begin{proof}
For simplicity of notations we omit the indices $r$ in the notations, which remain fixed throughout the proof of this lemma.
First, we know that the $\Prob[M^x\leq N_{in}]=0$, because such a veto results in $\oox$ being $2^{-K}$-close to the $2^{-N_{in}-1}$-grid; see Equation \eqref{eq:oox}. By the Vetoed-mix-significant-bits Lemma (Lemma \ref{lem:no-mix}) all bit players $(X_k)_{k\in [N_{in}]}$ are pure and hence can be denoted as $(x_k)_{k\in [N_{in}]}$. 
We argue that $x_k=\hy^b_k$ for every $k\in [N_{in}]$, namely that these pure bits match the binary representation of $\hy$. The set of actions $\{(k,\vzero),(k,\vone):k\in [N_{in}+1]\}$ are able to move $\oox$ to (both) closest $2^{-N_{in}}$-grid points (or $2^{-K}$-close to it in case the veto is $(k,\vone)$). If by way of contradiction there exists a bit $x_k\neq \hy^b_k$ then the veto player will improve the utility by moving $\oox$ to the closest $2^{-N_{in}}$-grid point in the direction of $\hy$ (i.e., if $\oox<a2^{-N_{in}}\leq \hy$, veto player should impose a veto of $\vone$ on the corresponding bit; if $\oox>a2^{-N_{in}}\geq \hy$, veto should impose a veto of $\vzero$ on the corresponding bit). Such a deviation improves $(\oox-y)^2$ by at least $\oep_C^2$ and hence by the $\oox$-utility-approximation Lemma (Lemma \ref{lem:oox}) strictly improves the total utility.\footnote{Note that the possible loss in the circuit computation is negligible with respect to $\oep_C^2$.}  So far we established: the $N_{in}$ first bits match the binary representation of $\hy$ and the veto player does not impose a veto on these bits. Hence, it suffices to prove that team-$x$ perfectly imitates $\hy$ (and ``strongly'' would follow from previous sentence).

Assume by contradiction that the $x$-team does not perfectly imitate $\hy$. We will use the Correct-bit-no-circuit Lemma (Lemma~\ref{lem:imitation-No-circuit}) to argue by induction over $k \in [N_{in}+1,K-4]$ that $X_{k}=\hy^b_{k}$   w.p.~$1$, and also veto player does not veto the $k$-th bit. But this would imply that team-$x$ {\em does} perfectly imitate $\hy$ - hence a contradiction.

The Correct-bit-no-circuit Lemma (Lemma \ref{lem:imitation-No-circuit}) allows for two conclusion. In the first conclusion, $2^{-k}-2^{-K}\leq (\hy)_{\geq  k} \leq 2^{-k}$ or equivalently $\ox_{<k} + 2^{-k}-2^{-K}\leq \hy \leq \ox_{<k} + 2^{-k} $. Thus veto's action $(N_{in}+1,1\vzero)$ ensures an imitation loss of $O(2^{-2K})$ and hence in equilibrium we must have $(\Ex[\oX]-\hy)^2\leq O(2^{-2K})$  because otherwise the veto player will prefer to impose such a veto (note that such a veto has no effect on the first $N_{in}$ bits and hence does not affect the circuit utility). But then team-$x$ perfectly imitates $\hy$ - a contradition.

Assume by induction that the conditions of the Correct-bit-no-circuit Lemma (Lemma \ref{lem:imitation-No-circuit}) hold for the $k$-th bit. Consider the second conclusion of the Correct-bit-no-circuit Lemma (Lemma \ref{lem:imitation-No-circuit}), i.e.~we know that $X_{k}=\hy^b_{k}$  w.p.~1. We will argue that also the $k$-th bit player is not vetoed ($\Prob[M^x=k]=0$), hence the conditions of the Correct-bit-no-circuit Lemma (Lemma \ref{lem:imitation-No-circuit}) hold also for the $(k+1)$-th bit. To see this, note that if the veto player vetos the $k$-th bit (i.e.~$M^x = k$ --- recall that $M^x \ge k$ by our inductive hypothesis), it's dominant strategy is to pick a veto $\ox^b_{k}=\hy^b_{k}$ that correctly matches the imitation target. The last sentence follows from Equation~\eqref{eq:m1} and the fact that we're in the case $\hy \notin [\ox_{<k} + 2^{-k}-2^{-K}, \ox_{<k} + 2^{-k}]$.
Therefore, veto player's assignment to the $k$-th bit agrees with the bit's player assignment. Hence, due to the $\epsilon_T$ incentive from (veto utility), veto player's action that vetoes the $k$-th bit is dominated by the analogous action that begins the veto with the $k+1$-th bit.
\end{proof}

The following lemma is similar to the Correct-bit-no-circuit Lemma (Lemma \ref{lem:imitation-No-circuit}). It states an identical statement: if all more significant bits have matched the binary representation of $\hy_{i,j}$ and if the veto player didn't impose veto on these more significant bits, then the $k$-th bit player also matches the binary representation of $\hy$. The difference is in the conditions under which we prove it. The Correct-bit-no-circuit Lemma (Lemma \ref{lem:imitation-No-circuit}) stated it for bits $k>N_{in}$. In the following Lemma we state that it is true also for bits $k\leq N_{in}$ in case where the guide player does not choose sample $i$.

\begin{lemma}(Correct-bit-no-guide Lemma)\label{lem:imitation-No-guide}
Under Assumption \ref{as:hy}, for every $r\in R$, and every $k<K-3$ if an equilibrium satisfies $g_r=0$, $\Prob[M^x_{r}<k]=0$ and $\Prob[X_{r,k'}=\hy^b_{r,k'}]=1$ for every $k'<k$ then either $2^{-k}-2^{=K}\leq (\hy_{r})_{\geq k} \leq 2^{-k}$ or $\Prob[X_{r,k}=\hy^b_{r,k}]=1$.
\end{lemma}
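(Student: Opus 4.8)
The plan is to run the proof of the Correct-bit-no-circuit Lemma (Lemma~\ref{lem:imitation-No-circuit}) almost verbatim. The statement to be proved is word-for-word the same as that of Lemma~\ref{lem:imitation-No-circuit}, and the only place in that earlier argument where the hypothesis $N_{in}<k$ is invoked is the passage from Equation~\eqref{eq:uij} (equivalently Equation~\eqref{eq:uijk}) to the simplified Equation~\eqref{eq:uijk-simple}, where the circuit utility term is discarded because, for $k>N_{in}$, it is untouched by the $k$-th bit, by any less significant bit, and by a veto that starts at position~$k$. So it suffices to verify that, under the extra hypothesis $g_r=0$, the circuit term may still be discarded when $k\le N_{in}$; granting this, Cases (a)--(d) of the proof of Lemma~\ref{lem:imitation-No-circuit} --- Case (b) giving the ``$\hy_r$ indifferent'' conclusion $2^{-k}-2^{-K}\le(\hy_r)_{\ge k}\le 2^{-k}$, and Cases (a), (c), (d) deriving a contradiction from $\Prob[X_{r,k}\neq\hy^b_{r,k}]>0$ via the indifference principle for the bit player and for the veto player together with the veto menu~\eqref{eq:veto-value} --- transfer unchanged (and, as there, they use Assumption~\ref{as:hy} only to talk about the binary representation $\hy^b_{r}$).

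First I would fix $r$ and drop it from the notation. The circuit contribution to $\Ex[U^x_r]$ in Equation~\eqref{eq:uij} is $-\epsilon_C\sum_{p\in P(r),\,l\in\cL_{in}}(\epsilon_T+g^p)\Prob[W_{p,l}]$. Since $g_r=\sum_{p\in P(r)}g^p=0$, we have $g^p=0$ for every $p\in P(r)$, so the coefficient of each wrong-computation indicator is exactly $\epsilon_C\epsilon_T$, independently of the strategy profile. As $|P(r)|=3$ and $|\cL_{in}|=O(N_{in})$, the whole circuit term has absolute value $O(\epsilon_C\epsilon_T N_{in})$ in \emph{every} profile that arises in the argument --- both at the putative equilibrium and after each veto deviation considered in the proof of Lemma~\ref{lem:imitation-No-circuit} (such a deviation to $m^x=k\le N_{in}$ merely changes the circuit input, hence only perturbs this already negligible quantity). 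By the parameter hierarchy~\eqref{eq:constants}, $\epsilon_C\epsilon_T N_{in}\ll\epsilon_T\ll 2^{-2K}$, so the circuit term is dominated by every slack term ($\pm\tO(\epsilon_T)$, $\Omega(2^{-2K})$, $\Omega(2^{-2k})$) that appears in that proof.

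Hence Equations~\eqref{eq:uijk} and~\eqref{eq:uijk-simple} are valid for every $k<K-3$ once the step ``omit the circuit term'' is reread as ``absorb the $O(\epsilon_C\epsilon_T N_{in})$ circuit term into the $\pm\tO(\epsilon_T)$ error,'' and Inequalities~\eqref{eq:2.1}--\eqref{eq:2.3} together with their Case (d) analogues then go through exactly as written. The main --- indeed essentially the only --- point requiring care is confirming that no deviation invoked in Lemma~\ref{lem:imitation-No-circuit}'s proof secretly inflates the circuit term; this is immediate because $g_r=0$ pins its coefficient to $\epsilon_C\epsilon_T$ uniformly over all profiles, so the whole matter reduces to the bookkeeping inequality $\epsilon_C\epsilon_T N_{in}\ll\epsilon_T$, which holds since $\epsilon_C$ is exponentially small.
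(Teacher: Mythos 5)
Your overall strategy is exactly the paper's: rerun the proof of the Correct-bit-no-circuit Lemma (Lemma~\ref{lem:imitation-No-circuit}), observing that $g_r=0$ reduces the weight of the circuit term to $\epsilon_C\epsilon_T$ (this is Equation~\eqref{eq:uijkg} in the paper). However, your justification for why the circuit term can then be ignored has a genuine quantitative gap. You absorb the circuit term, bounded by $O(\epsilon_C\epsilon_T N_{in})$, into the $\pm\tO(\epsilon_T)$ slack and claim Inequalities~\eqref{eq:2.1}--\eqref{eq:2.3} ``go through exactly as written.'' This works for Inequality~\eqref{eq:2.1}, whose gain is $\Omega(2^{-2K})\gg\tO(\epsilon_T)$, but not for Inequalities~\eqref{eq:2.2}, \eqref{eq:2.3} (and their Case~(d) analogues \eqref{eq:3.2}, \eqref{eq:3.3}). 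Those comparisons are made \emph{conditional on the $k$-th bit being vetoed}, where the only term distinguishing $X_k=0$ from $X_k=1$ is the $x_k$-bit imitation term, itself of weight $\epsilon_T$; the strict gain there is $\epsilon_T\cdot 2^{-k}(2y-2^{-k+1})$ in \eqref{eq:2.3}, which for $y$ just above $2^{-k}$ can be far smaller than $\epsilon_T$ --- and in particular far smaller than your circuit bound $O(\epsilon_C\epsilon_T N_{in})$, since $\epsilon_C\gg 2^{-K}$ in the hierarchy~\eqref{eq:constants}. So a uniform magnitude bound on the circuit term cannot close these cases.

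The correct (and one-sentence) fix, which is the one the paper uses, is structural rather than quantitative: conditional on $M^x=k$, the bit $\ox^b_{r,k}$ fed to the circuit is determined by the veto type $t$ and not by $X_k$, so the bit player's action does not affect the circuit utility at all in those conditional comparisons --- the circuit term cancels \emph{exactly} between $X_k=0$ and $X_k=1$, and the $\epsilon_T$-scale bit-imitation comparison decides as before. You should replace your ``bookkeeping inequality $\epsilon_C\epsilon_T N_{in}\ll\epsilon_T$'' step with this exact-cancellation observation for the vetoed-bit comparisons, keeping your magnitude argument only for the comparisons (like \eqref{eq:2.1}) whose gain is $\Omega(2^{-2K})$.
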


The proof is very similar to the proof of the Correct-bit-no-circuit Lemma (Lemma \ref{lem:imitation-No-circuit}) once we observe that the circuit utility has very low weight of $\epsilon_T$ in case $g_r=0$. In the Correct-bit-no-circuit Lemma (Lemma \ref{lem:imitation-No-circuit}) this term simply did not appear because the relevant bit $k$ was not part of the circuit's input.

\begin{proof}
We omit the indices $i,j$ which remain fixed.
As in the Correct-bit-no-circuit Lemma (Lemma \ref{lem:imitation-No-circuit}), we observe that the bits $k'<k$ cancel out in the terms $(\oX-\hy)^2$ and in $(\tX-\hy)^2$.
We denote $X=\oX_{\geq k} \in [0,2^{-k+1}-2^{K}]$ and $y=\hy_{\geq k}\in [0,2^{-k+1}]$. We can rewrite the expected common utility (see Equation \eqref{eq:uij}) similarly to how we did in the Correct-bit-no-circuit Lemma (Lemma \ref{lem:imitation-No-circuit}); see the corresponding Equation \eqref{eq:uijk}.
\begin{align}\label{eq:uijkg}
\begin{aligned}
\Ex[U^x_k]=&-3 \left[ \left( \Ex[X]- y \right)^2 + \Var [X] \right] &&\text{(imitation)}\\
&-\epsilon_C \epsilon_T \sum_{p\in P(r), l\in \cL_{in}}  \Prob [W_{p,l}] &&\text{(circuit)}\\
&-\epsilon_T \left[ \left( \Ex[\tX_{\geq k}]-y \right)^2 + \Var [\tX_{\geq k}] \right] &&\text{($x_k$-bit imitation)}\\
&-\epsilon_T \Ex [U^x_{-k \text{ bit}}(X)] &&\text{($x_{-k}$-bit imitation)}\\
&-\epsilon_T \Ex [U^y_{\text{bit}}(X)] &&\text{($y$-bit imitation)}\\
&+\epsilon_T \Ex[M^x] &&\text{(veto)}.\\
\end{aligned}
\end{align}
The only difference is in the second term, but its weigh is only $\epsilon_C \epsilon_T$. From here we can repeat the arguments of the Correct-bit-no-circuit Lemma (Lemma \ref{lem:imitation-No-circuit}) one by one and verify that they hold for this utility too. The only places in the proof of the Correct-bit-no-circuit Lemma (Lemma \ref{lem:imitation-No-circuit}) that use $\epsilon_T$-low utility terms is in the considerations of the bit player is conditional on being vetoed. But in case she is vetoed she does not affect the circuit utility.
\end{proof}

As a corollary from the Correct-bit-no-guide Lemma (Lemma \ref{lem:imitation-No-guide}) we get the following lemma.

\begin{lemma}[No-guide-imitation Lemma]\label{lem:no-krentel->perfect}
Under Assumption \ref{as:hy}, for every $r\in R$ such that $g_r=0$ the $x_{r}$-team perfectly imitates $\hy_{r}$.
\end{lemma}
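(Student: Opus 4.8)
The plan is to obtain this lemma from the Correct-bit-no-guide Lemma (Lemma~\ref{lem:imitation-No-guide}) by exactly the inductive scheme already used in the proof of the $\oox$-far-imitation Lemma (Lemma~\ref{lem:grid-far}), with the simplification that, because $g_r=0$, the inductive step is now available for \emph{every} bit $1\le k<K-3$ rather than only for $k>N_{in}$. Fix $r\in R$ with $g_r=0$; by Assumption~\ref{as:hy} we have $0\le\hy_r<1$, so its binary expansion $(\hy^b_{r,k})_{k\in[K]}$ is meaningful. Since $g^p=0$ for every $p\in P(r)$, the circuit term in the expression~\eqref{eq:uij} for $\Ex[U^x_r]$ has weight only $\epsilon_C\epsilon_T$; together with the $x$-bit imitation, $y$-bit imitation and veto terms it therefore contributes at most $\poly\cdot\epsilon_T$, which by~\eqref{eq:constants} is negligible relative to $2^{-2K}$. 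Consequently, up to an additive $O(2^{-2K})$ the equilibrium utility of the $x_r$-team equals $-(3+g_r)\bigl[(\Ex[\oX_r]-\hy_r)^2+\Var[\oX_r]\bigr]$, and any deviation of the veto player that pins the $x_r$-team to a deterministic value $2^{-K}$-close to $\hy_r$ secures it utility at least $-(3+g_r)\cdot O(2^{-2K})$ --- whether or not that deviation changes the (now negligibly weighted) circuit input.

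I would then prove by induction on $k=1,\dots,K-4$ the invariant ``$\Prob[M^x_r<k]=0$ and $\Prob[X_{r,k'}=\hy^b_{r,k'}]=1$ for every $k'<k$''. The base case $k=1$ is vacuous since always $M^x_r\ge 1$. For the inductive step, apply Lemma~\ref{lem:imitation-No-guide} at bit $k$ (its hypotheses are precisely the invariant together with $g_r=0$). In its first alternative, $(\hy_r)_{\ge k}\in[2^{-k}-2^{-K},2^{-k}]$, the veto $(k,1\vzero)$ pins the $x_r$-team to the deterministic value $(\ox_r)_{<k}+2^{-k}$, which is $2^{-K}$-close to $\hy_r$; by the previous paragraph no equilibrium can then have imitation loss exceeding $O(2^{-2K})$, so team-$x_r$ perfectly imitates $\hy_r$ and we are finished. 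In the second alternative we get $\Prob[X_{r,k}=\hy^b_{r,k}]=1$, and it remains to show $\Prob[M^x_r=k]=0$; this is the ``uncancel the veto'' step from Lemma~\ref{lem:grid-far}: since we are not in the indifferent case, Equation~\eqref{eq:m1} shows that the utility-maximising veto type with $m^x_r=k$ already sets $\ox^b_{r,k}=\hy^b_{r,k}$, so it agrees with the (pure) bit player $x_{r,k}$ and is dominated by the analogous veto with $m^x_r=k+1$, which produces the same $\oX_r$ but gains $\epsilon_T$ from the veto term. This re-establishes the invariant at $k+1$.

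Finally, if the induction runs through all of $k=1,\dots,K-4$ without ever entering the first alternative, then the $K-4$ most significant bits of $\oX_r$ are pure and agree with those of $\hy_r$ and none of them is vetoed, so every realization of $\oX_r$ lies within $2^{-(K-4)}=O(2^{-K})$ of $\hy_r$ and hence $\Ex[(\oX_r-\hy_r)^2]=O(2^{-2K})$, i.e.\ team-$x_r$ perfectly imitates $\hy_r$. I do not expect a genuine obstacle here: the one delicate point --- a possibly mixed bit player $x_{r,k}$ interacting with a possibly mixed veto player --- has already been discharged inside Lemma~\ref{lem:imitation-No-guide} (which in turn mirrors Lemma~\ref{lem:imitation-No-circuit}), so what remains is the bookkeeping of the induction and the now-routine ``not vetoed'' argument imported from Lemma~\ref{lem:grid-far}.
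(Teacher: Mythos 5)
Your proposal is correct and follows essentially the same route as the paper's proof: an induction starting from the most significant bit using the Correct-bit-no-guide Lemma, terminating either at an ``indifferent'' bit where a veto of type $1\vzero$ secures imitation loss $O(2^{-2K})$, or after all bits, with the not-vetoed step handled by the same domination argument (via the $\epsilon_T$ veto utility) used in the $\oox$-far-imitation Lemma. The paper presents this only as a sketch referring back to Lemmata~\ref{lem:grid-far} and~\ref{lem:imitation-No-guide}; your write-up fills in the same steps explicitly.
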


\begin{proof}
The proof uses very similar arguments to those we have applied in (the end of) the $\oox$-far-imitation Lemma (Lemma \ref{lem:grid-far}). The only differences are that we initialize the arguments from the first bit (rather than from the $(N_{in}+1)$-th bit), and the  corresponding Correct-bit lemma that we utilize is the Correct-bit-no-guide Lemma \ref{lem:imitation-No-guide} (rather than the Correct-bit-no-circuit Lemma \ref{lem:imitation-No-circuit}).

The conditions for the Correct-bit-no-guide Lemma (Lemma \ref{lem:imitation-No-guide}) are satisfied for the bit $k=1$. If $\frac{1}{2}-2^{-K}\leq (\hy_{r})_{\geq k} \leq \frac{1}{2}$ then the action $(1,1\vzero)$ ensures a utility of $O(2^{-2K})$, see Equation \eqref{eq:uijkg} and hence in equilibrium we must have $(\Ex[\oX]-\hy)^2\leq 2^{-2K}$ because otherwise the veto player will prefer to impose such a veto. In this case the proof is complete. In the second option of the Correct-bit-no-guide Lemma (Lemma \ref{lem:imitation-No-guide}) we know that $x_1=y^b_1$ w.p.~1. Moreover, one can verify (see Equation~\eqref{eq:m1}) that the veto player applies with positive probability only vetoes that match $y^b_1$ (if any): vetoing the most significant bit is dominated since it would agree with the corresponding bit player's action anyway. Therefore $\Prob[M^x=1]=0$. We proceed with the argument inductively until we reach a bit $k$ for which we have $2^{-k}-2^{-K}\leq (\hy_r)_{\geq k} \leq 2^{-k}$. For such a bit we have that the imitation loss is at most $O(2^{-2K})$. 
\end{proof}

\subsubsection{Mild imitation of all $x$-teams}

Now our goal is to prove that \emph{all} $x_r$-teams mildly imitate their target.
To obtain this goal we consider the interaction from the perspective of the guide player (see Equation \ref{eq:ug}). We show in the Perfect-imitation-sample Lemma (Lemma \ref{lem:perfect-im-exist}) the existence of a prefect imitating sample (see Definition \ref{def:perfect-imitation}). This Lemma 
utilizes the shifted structure of the sampling  
by arguing that the guide player cannot be indifferent between all samples if all of them are located far from their imitation target. Once the guide player has the option to choose a prefect imitating sample we deduce that \emph{all} samples imitate well their target because if this is not the case, the guide player will not choose them and then we can apply the No-guide-imitation Lemma (Lemma \ref{lem:no-krentel->perfect}) to argue that the imitation at this sample is perfect.

We proceed with the formal description of the analysis by considering the interaction from the perspective of the guide player. By Equation \eqref{eq:eu} the utility terms that she affects are given by

\begin{align*}
\begin{aligned}
\Ex[U^g]=&-\sum_{p\in P, j\in J} g^p \left[ \left( \Ex[\oX^p_{j}]-\Ex[\oY_j] -i(p,j)\epsilon_S \right)^2 + \Var [\oX^p_{j}] + \Var [\oY_j] \right] 
&&\text{(imitation)} \\
&-\epsilon_C \sum_{p\in P,l\in \cL} g^p 2^{-2d(c_{i,l})} \Prob [W_{p,l}] &&\text{(circuit)}\\
&+\epsilon_P \sum_{p\in P} g^p  \Ex[\Phi^p]    &&\text{(potential)}\\
&\pm g^p \tO(\epsilon_G),
\end{aligned}
\end{align*}
where the last term $\pm g^p \tO(\epsilon_G)$ includes the gradient, the $y$-bit imitation, and the bit potential utilities. We denote by

\begin{align}\label{eq:ug}
\begin{aligned}
u^{g,p}:=&-\sum_{j\in J} \left[ \left( \Ex[\oX^p_{j}]-\Ex[\oY_j] -i(p,j)\epsilon_S \right)^2 + \Var [\oX^p_{j}] + \Var [\oY_j] \right] 
&&\text{(imitation)} \\
&-\epsilon_C \sum_{l\in \cL} 2^{-2d(c_{i,l})} \Prob [W_{p,l}] &&\text{(circuit)}\\
&+\epsilon_P   \Ex[\Phi^p]    &&\text{(potential)}\\
&\pm \tO(\epsilon_G),
\end{aligned}
\end{align}
the utility of the guide player from choosing the action $g=p$.
Note that the guide player puts a weight of $0$ on a sample whose sum of imitation, circuit, and potential utilities is worse by $\tilde{\omega}(\epsilon_G)$ than some other sample.
 
The following lemma states that there exists a sample where all teams perfectly imitate the corresponding $\hy$.

\begin{lemma}[Perfect-imitation-sample Lemma]\label{lem:perfect-im-exist}
Under Assumption \ref{as:hy}, there exists a sample with perfect imitation (see Definition \ref{def:perfect-imitation}).
\end{lemma}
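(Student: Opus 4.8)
The plan is to argue entirely from the guide player's point of view, using $u^{g,p}$ from~\eqref{eq:ug}, and to assume for contradiction that no sample has perfect imitation (Definition~\ref{def:perfect-imitation}). First I would rewrite $u^{g,p}$ cleanly: applying the algebraic identity used in the proof of the $\oox$-utility-approximation Lemma (Lemma~\ref{lem:oox}) to each summand $(\Ex[\oX^p_j]-\Ex[\oY_j]-i(p,j)\epsilon_S)^2+\Var[\oX^p_j]$ of the imitation term, and discarding the $p$-independent constant $-\sum_{j}\Var[\oY_j]$, we obtain $u^{g,p}=-\sum_{j\in J} d_{i(p,j),j}\pm O(\oep_C^2)$, where $d_{i,j}:=(\oox_{i,j}-\hy_{i,j})^2$, since the circuit, potential and gradient terms all carry weight $O(\epsilon_C)\ll\oep_C^2$. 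Writing $m_j:=\min_{i\in I}d_{i,j}$, the guide can place positive weight only on a sample $p=(i_1,i_2)$ with $d_{i_1,1}+d_{i_2,2}\le m_1+m_2+O(\oep_C^2)$, which (each summand being at least its own minimum) forces $d_{i_1,1}\le m_1+O(\oep_C^2)$ and $d_{i_2,2}\le m_2+O(\oep_C^2)$.

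Next, a combinatorial reduction. Let $S\subseteq P$ be the support of the guide's mixed action $G$, and let $S_j\subseteq I$ be the set of $j$-th coordinates occurring in $S$. If $S_1\neq I$ and $S_2\neq I$, choose $i_1\in I\setminus S_1$ and $i_2\in I\setminus S_2$; then $g_{(i_1,1)}=g_{(i_2,2)}=0$, so by the No-guide-imitation Lemma (Lemma~\ref{lem:no-krentel->perfect}) both teams $x_{i_1,1}$ and $x_{i_2,2}$ perfectly imitate, i.e.\ the sample $(i_1,i_2)$ has perfect imitation --- a contradiction. Hence $S_1=I$ or $S_2=I$; assume $S_1=I$. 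Then for every $i\in I$ the guide weights some sample with first coordinate $i$, so by the bound above $d_{-1,1},d_{0,1},d_{1,1}$ all lie within $O(\oep_C^2)$ of one another.

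Finally, the shifted structure yields a contradiction. If $\oox_{i,1}$ is $\oep_C$-far from the $2^{-N_{in}-1}$-grid for some $i\in I$, then by the $\oox$-far-imitation Lemma (Lemma~\ref{lem:grid-far}) $d_{i,1}=O(2^{-2K})$, hence $d_{i',1}=O(2^{-2K})$ for all $i'\in I$, i.e.\ all three teams $x_{i',1}$ perfectly imitate; iterating in the second coordinate --- using the combinatorial argument above if $S_2\neq I$, or the same dichotomy if $S_2=I$ --- then produces either a perfect-imitation sample or the contradiction below, so we may assume every $\oox_{i,1}$ is $\oep_C$-close to the grid. A best-response analysis of the $x_{i,1}$-veto player (who can place $\oox_{i,1}$ onto $2^{-N_{in}}$-multiples by vetoing bit $N_{in}+1$ with type $\vzero$ or $\vone$, and onto the intermediate $2^{-N_{in}-1}$-multiples with $m^x_{i,1}=N_{in}$) pins $\oox_{i,1}$ to within $O(2^{-K})$ of the grid point nearest $\hy_{i,1}$, so $d_{i,1}=\mathrm{dist}(\hy_{i,1},2^{-N_{in}-1}\Z)^2\pm O(2^{-K})$. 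But $\hy_{-1,1},\hy_{0,1},\hy_{1,1}$ are three points equally spaced by $\epsilon_S$, whereas $x\mapsto\mathrm{dist}(x,2^{-N_{in}-1}\Z)$ is the sawtooth of period $2^{-N_{in}-1}\gg\epsilon_S$; a short case analysis on where $\Ex[\oY_1]$ lies relative to the nearest grid point and midpoint shows these three squared distances cannot all agree up to $O(\oep_C^2)$ --- some pair differs by $\Omega(\epsilon_S^2)$ or by $\Omega(2^{-N_{in}}\epsilon_S)$, and by~\eqref{eq:constants} both quantities are $\gg\oep_C^2$. This contradiction proves the lemma.

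The main obstacle is the last paragraph: the delicate point is establishing that a ``stuck'' team's location $\oox_{i,1}$ is pinned to within $O(2^{-K})$ of the grid point \emph{nearest $\hy_{i,1}$} (rather than to some far grid point), which requires combining the guide's near-indifference across the three shifted samples with the $\oox$-far-imitation Lemma, and then carrying out the sawtooth estimate while correctly enumerating the degenerate positions of $\Ex[\oY_1]$ (near a grid point, near a midpoint) --- which is precisely where the parameter gaps $2^{-N_{in}}\gg\epsilon_S\gg\oep_C$ of~\eqref{eq:constants} are used.
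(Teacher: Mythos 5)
Your overall skeleton matches the paper's: assume no perfect-imitation sample, use the contrapositive of the $\oox$-far-imitation Lemma to place every stuck team $\oep_C$-close to the $2^{-N_{in}-1}$-grid, exploit the guide player's near-indifference across the $\epsilon_S$-shifted samples, and close the loop with the No-guide-imitation Lemma. Your combinatorial reduction via the support sets $S_1,S_2$ is a clean reorganization of the paper's framing (the paper instead fixes $j=1$ and assumes all three teams in that coordinate fail), and the deduction that $S_1=I$ forces $d_{-1,1},d_{0,1},d_{1,1}$ to agree up to $O(\oep_C^2)$ is correct.

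The gap is in the final step, and it is exactly the point you flag as ``delicate'' but then assert rather than prove: that each stuck $\oox_{i,1}$ is pinned to within $O(2^{-K})$ of the grid point \emph{nearest} $\hy_{i,1}$, so that $d_{i,1}=\mathrm{dist}(\hy_{i,1},2^{-N_{in}-1}\Z)^2$. At this stage of the argument nothing of the sort is available: all that is known is that $\oox_{i,1}$ is $\oep_C$-close to \emph{some} grid point $\beta_i$. The veto player can only reach the handful of values $x_{<m}+v(m,t)$ determined by the current pure bit prefix --- she cannot relocate $\oox$ to an arbitrary grid point --- and any relocation that flips one of the first $N_{in}$ bits costs circuit utility, so when $\hy_{i,1}$ sits near the midpoint between two grid points the equilibrium need not select the nearer one. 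In particular your sawtooth computation silently excludes the configuration in which $\beta_{-1},\beta_0,\beta_1$ are three \emph{distinct} grid points; the paper's case analysis is built precisely to cover this (Cases 1.4 and 2.2), by arguing about arbitrary $\beta_i$'s: two of them lie on the same side of $\hy_{0,1}$ and are $\ge 2^{-N_{in}-1}$ apart, so the unsquared distances $d_i=|\beta_i-\hy_{i,1}|$ already differ by $2^{-N_{in}-1}-2\epsilon_S\ge\epsilon_S$, which is all that is needed. A secondary slip: in your ``some $\oox_{i,1}$ is far from the grid'' branch, the $\oox$-utility-approximation Lemma only gives $d_{i,1}=O(\oep_C^2)$, not $O(2^{-2K})$, and even if all three $d_{i',1}$ were tiny this would not yield \emph{perfect} imitation of the other two teams (a statement about $\Ex[(\oX_{r}-\hy_r)^2]$, not about $\oox$); since you only need one perfectly-imitating team per coordinate, which Lemma~\ref{lem:grid-far} already hands you in that branch, this is fixable, but the inference as written is wrong.
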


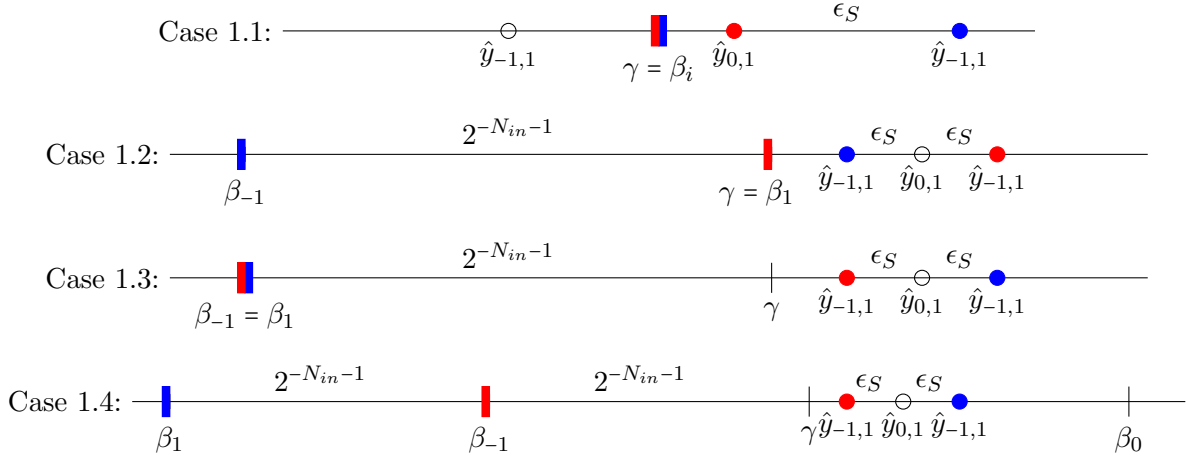
\begin{figure}
\caption{The choice of pairs $(\hy_{i,1},\beta_i)$ in Cases (1.1)-(1.4) in the proof of Lemma \ref{lem:perfect-im-exist}. The pair with larger distance $d_i$ appears in blue. The pair with smaller distance $d_i$ appears in red.}
\begin{center}
\begin{tikzpicture}
\node[left] at (0,0) {Case 1.1:};
\draw (0,0)--(10,0);
\draw (5,-0.1)--(5,0.1);
\node[below] at (5,-0.2) {$\gamma=\beta_i$};
\draw (3,0) circle(0.1);
\filldraw[red] (6,0) circle(0.1);
\filldraw[blue] (9,0) circle(0.1);
\node[below] at (3,0) {$\hy_{-1,1}$};
\node[below] at (6,0) {$\hy_{0,1}$};
\node[below] at (9,0) {$\hy_{-1,1}$};
\node[above] at (7.5,0) {$\epsilon_S$};

\filldraw[blue] (5,-0.2) rectangle (5.1,0.2);
\filldraw[red] (5,-0.2) rectangle (4.9,0.2);

\end{tikzpicture}

\vspace*{2mm}

\begin{tikzpicture}
\node[left] at (-3,0) {Case 1.2:};
\draw (-3,0)--(10,0);
\draw (5,-0.1)--(5,0.1);
\node[below] at (4.8,-0.2) {$\gamma=\beta_1$};
\filldraw[blue] (6,0) circle(0.1);
\draw (7,0) circle(0.1);
\filldraw[red] (8,0) circle(0.1);
\node[below] at (6,0) {$\hy_{-1,1}$};
\node[below] at (7,0) {$\hy_{0,1}$};
\node[below] at (8,0) {$\hy_{-1,1}$};
\node[above] at (7.5,0) {$\epsilon_S$};
\node[above] at (6.5,0) {$\epsilon_S$};
\draw (-2,-0.1)--(-2,0.1);
\node[above] at (1.5,0) {$2^{-N_{in}-1}$};
\node[below] at (-2,-0.2) {$\beta_{-1}$};
\filldraw[red] (5,-0.2) rectangle (4.9,0.2);
\filldraw[blue] (-2,-0.2) rectangle (-2.1,0.2);
\end{tikzpicture}

\vspace*{2mm}

\begin{tikzpicture}
\node[left] at (-3,0) {Case 1.3:};
\draw (-3,0)--(10,0);
\draw (5,-0.2)--(5,0.2);
\node[below] at (5,-0.2) {$\gamma$};
\filldraw[red] (6,0) circle(0.1);
\draw (7,0) circle(0.1);
\filldraw[blue] (8,0) circle(0.1);
\node[below] at (6,0) {$\hy_{-1,1}$};
\node[below] at (7,0) {$\hy_{0,1}$};
\node[below] at (8,0) {$\hy_{-1,1}$};
\node[above] at (7.5,0) {$\epsilon_S$};
\node[above] at (6.5,0) {$\epsilon_S$};
\draw (-2,-0.1)--(-2,0.1);
\node[above] at (1.5,0) {$2^{-N_{in}-1}$};
\node[below] at (-2,-0.2) {$\beta_{-1}=\beta_1$};
\filldraw[blue] (-2,-0.2) rectangle (-1.9,0.2);
\filldraw[red] (-2,-0.2) rectangle (-2.1,0.2);
\end{tikzpicture}

\vspace*{2mm}

\begin{tikzpicture}
\node[left] at (-4,0) {Case 1.4:};
\draw (-4,0)--(10,0);
\draw (5,-0.2)--(5,0.2);
\node[below] at (5,-0.2) {$\gamma$};

\filldraw[red] (5.5,0) circle(0.1);
\draw (6.25,0) circle(0.1);
\filldraw[blue] (7,0) circle(0.1);
\node[above] at (-1.5,0) {$2^{-N_{in}-1}$};
\node[above] at (2.75,0) {$2^{-N_{in}-1}$};

\node[below] at (5.5,0) {$\hy_{-1,1}$};
\node[below] at (6.25,0) {$\hy_{0,1}$};
\node[below] at (7,0) {$\hy_{-1,1}$};

\node[above] at (5.8,0) {$\epsilon_S$};
\node[above] at (6.6,0) {$\epsilon_S$};
\draw (-3.5,-0.1)--(-3.5,0.1);
\node[below] at (-3.5,-0.2) {$\beta_1$};
\filldraw[blue] (-3.5,-0.2) rectangle (-3.6,0.2);

\draw (0.75,-0.1)--(0.75,0.1);
\node[below] at (0.75,-0.2) {$\beta_{-1}$};
\filldraw[red] (0.75,-0.2) rectangle (0.65,0.2);

\draw (9.25,-0.2)--(9.25,0.2);
\node[below] at (9.25,-0.2) {$\beta_0$};
\end{tikzpicture}
\end{center}
\end{figure}
 
\begin{proof}
We prove that for both dimensions $j=1,2$ there exists $i=i(j)$ such that $x_{i,j}$ perfectly imitates $\hy_{i,j}$ and then the sample $(i(1),i(2))$ is a sample with perfect imitation.

Assume by way of contradiction that this is not the case for $j=1$ (the arguments for $j=2$ are identical). Namely, all teams $x_{i,1}$ for $i\in I$ do not perfectly imitate $\hy_{i,1}$. By the $\oox$-far Lemma (Lemma \ref{lem:grid-far}) this implies that $\oox_{i,1}$ is $\oep_C$-close to the $2^{-N_{in}-1}$-grid for every\footnote{This is the only place in the proof where we use Assumption \ref{as:hy}. Indeed, if $\hy_{i,1}\notin [0,1)$ we have not proved yet that non-perfect imitation implies that $\oox_{i,1}$ must be located close to the grid.\label{foot:assumption-use}} $i\in I$.  We denote by $\beta_i$ the $2^{-N_{in}-1}$-grid point to which $\oox_{i,1}$ is close.

Let $d_i:=|\beta_i-\hy_{i,1}|$. We argue that there exist two indexes $\oi,\ui$ such that $d_{\oi}-d_{\ui} \geq \epsilon_S$. To prove it we consider several cases. 

\paragraph{Case 1: $\hy_{0,1}$ is $2\epsilon_S$-close to the $2^{-N_{in}-1}$-grid.} We denote this grid point by $\gamma$ and we consider three sub-cases.

\paragraph{Case 1.1: $\beta_i=\gamma$ for every $i\in I$.} Among the three points $\hy_{-1,1}=\hy_{0,1}-\epsilon_S$, $\hy_{0,1}$, and $\hy_{1,1}=\hy_{0,1}+\epsilon_S$ at least two are located from the same side of $\beta$ (i.e., both belong to $[0,\beta_i]$ or both belong to $[\beta_i,1]$). By choosing $\oi$ to be the farther point and $\ui$ to be closer point we obtain $d_{\oi}-d_{\ui} = \epsilon_S$.

\paragraph{Case 1.2: For some $i,i'$, $\beta_i = \gamma$ and $\beta_{i'} \ne \gamma$.} By choosing $\oi=i$ and $\ui=i'$ we obtain $d_{\ui}-d_{\oi} \geq 2^{-N_{in}-1} - 2 \epsilon_S \geq \epsilon_S$.

\paragraph{Case 1.3: $\beta_i \neq \gamma$ for every $i\in I$, and there exists a pair $i,i'$ with $\beta_i=\beta_{i'}$.}  We consider the two points $\hy_{i,1}$ and $\hy_{i',1}$ and set $\oi$ ($\ui$) to be the index of the point that is farther from (closer to) $\beta_i$. Notice that both $\hy_{i,1}$ and $\hy_{i',1}$ are located $3\epsilon_S$-close to $\gamma$ and hence both are located from the same side of $\beta_i=\beta_{i'}$ (recall that $\gamma$ and $\beta_i$ are $2^{-N_{in}}$-far from each other and $2^{-N_{in}}> 3\epsilon_S$). The shifts are of size $\epsilon_S$, so we get $d_{\oi}-d_{\ui} \geq \epsilon_S$.

\paragraph{Case 1.4: $\beta_i \neq \gamma$ for every $i\in I$, but the $\beta_i$'s are all different.}
Consider the locations of the three points $\beta_{-1},\beta_0,\beta_1$ with respect to the location of $\hy_{0,1}$. Two of them are located from the same side of $\hy_{0,1}$ and are at least $2^{-N_{in}-1}$ apart. Let $\oi$ be the index of the farther $\beta_i$, and $\ui$ let be the index of the closer $\beta_i$. We get that $d_{\oi}-d_{\ui}\geq |\beta_{\oi} - \hy_{0,1}|-\epsilon_S - |\beta_{\ui} - \hy_{0,1}| - \epsilon_S \geq 2^{-N_{in}-1}-2\epsilon_S \geq \epsilon_S$.

\paragraph{Case 2: $\hy_{0,1}$ is $2\epsilon_S$-far the $2^{-N_{in}-1}$-grid.} 
We consider two sub-cases.

\paragraph{Case 2.1: There exists a pair $i,i'$ with $\beta_i=\beta_{i'}$.} This case is treated similarly to Case 1.3.

\paragraph{Case 2.2: All $\beta_i$'s are different.} This case is treated similarly to Case 1.4. \\

Once we know the existence of $\ui,\oi$ with $d_{\oi}-d_{\ui}  \geq \epsilon_S$ we deduce that $|\oox_{\oi,1}-\hy_{\oi,1}|-|\oox_{\ui,1}-\hy_{\ui,1}|\geq \epsilon_S-2\oep_C \geq \frac{1}{2}\epsilon_S$ because $\oox_{i,1}$ is $\oep_C$-close to $\beta_i$. This implies that $(\oox_{\oi,1}-\hy_{\oi,1})^2-(\oox_{\ui,1}-\hy_{\ui,1})^2\geq  \frac{1}{4}\epsilon^2_S \gg \oep^2_C$. By the $\oox$-utility-approximation Lemma (Lemma \ref{lem:oox}) the imitation loss of a sample $(\oi,i_2)$ is $\Omega(\epsilon_S^2)$ greater than the imitation loss of a sample $(\ui,i_2)$ for every $i_2 \in I$. Therefore, the guide player chooses each of the samples $\{(\oi,i_2):i_2 \in I\}$  with probability 0 (because the sample $(\ui,i_2)$ is superior in the $j=1$ coordinate and is identical in the $j=2$ coordinate). Hence $g_{\oi,1}=0$, and by the No-guide-imitation Lemma (Lemma \ref{lem:imitation-No-guide}) we deduce that the team $x_{\oi,1}$ perfectly imitates $\hy_{\oi,1}$ which leads to a contradiction.
\end{proof}
 
The following lemma states that all $x_r$-teams perform reasonable imitation; namely, up to $\oep_C$
\begin{lemma}[Mild-imitation Lemma]\label{lem:mild-imitation}
For every $r\in R$ we have
$(\oox_{r}-\hy_{r})^2\leq \oep_C^2$.
\end{lemma}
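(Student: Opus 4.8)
We argue by contradiction. Fix $r\in R$ and suppose $(\oox_{r}-\hy_{r})^2>\oep_C^2$; by the symmetry of the two coordinates we may write $r=(i,1)$.

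Consider first the case $g_r=0$. By the No-guide-imitation Lemma (Lemma~\ref{lem:no-krentel->perfect}) the team $x_r$ perfectly imitates $\hy_r$, i.e.\ $\Ex[(\oX_r-\hy_r)^2]=(\Ex[\oX_r]-\hy_r)^2+\Var[\oX_r]=O(2^{-2K})$. Combining this with the $\oox$-utility-approximation Lemma (Lemma~\ref{lem:oox}), which gives $(\Ex[\oX_r]-\hy_r)^2+\Var[\oX_r]=(\oox_r-\hy_r)^2\pm O(\oep_C^2)$, we obtain $(\oox_r-\hy_r)^2=O(2^{-2K})+O(\oep_C^2)=O(\oep_C^2)$, since $2^{-2K}\ll\oep_C^2$ by~\eqref{eq:constants} --- contradicting our assumption.

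Now consider the main case $g_r>0$, so the guide player assigns positive probability to some sample $p=(i,i_2)\in P(r)$. By the Perfect-imitation-sample Lemma (Lemma~\ref{lem:perfect-im-exist}) perfect imitation occurs in some sample, in particular there is an index $\ui$ such that team $x_{\ui,1}$ perfectly imitates $\hy_{\ui,1}$; by the computation of the previous paragraph (Lemma~\ref{lem:oox}) this gives $(\oox_{\ui,1}-\hy_{\ui,1})^2=O(\oep_C^2)$. If $\ui=i$ this already contradicts our assumption, so assume $\ui\neq i$ and compare the guide's payoff from $p=(i,i_2)$ with her payoff from $p'=(\ui,i_2)$. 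Expanding $u^{g,p}$ and $u^{g,p'}$ via~\eqref{eq:ug}, using Lemma~\ref{lem:oox} to replace the imitation terms by their $\oox$-surrogates, and observing that the $j=2$ coordinate and the terms $\Var[\oY_j]$ are identical for $p$ and $p'$, we get
\[
u^{g,p}-u^{g,p'}=-\big[(\oox_{i,1}-\hy_{i,1})^2-(\oox_{\ui,1}-\hy_{\ui,1})^2\big]\pm O(\epsilon_C|\cL|)\pm O(\epsilon_P)\pm \tO(\epsilon_G)\pm O(\oep_C^2),
\]
where $O(\epsilon_C|\cL|)$ bounds the difference of circuit penalties (each wire is weighted by $2^{-2d_l}\le 1$), $O(\epsilon_P)$ bounds the difference of potential terms (as $\Phi^{\cdot}\in[0,1]$), and $\tO(\epsilon_G)$ absorbs the gradient, $y$-bit-imitation and bit-potential contributions. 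By the parameter hierarchy~\eqref{eq:constants}, $\oep_C^2$ dominates each of $\epsilon_C|\cL|$, $\epsilon_P$ and $\epsilon_G$, while the bracket is at least $\oep_C^2-O(\oep_C^2)=\Omega(\oep_C^2)$; hence $u^{g,p}-u^{g,p'}<-\tilde\omega(\epsilon_G)$. This contradicts $g^p>0$, since the guide puts zero mass on any sample that is worse by $\tilde\omega(\epsilon_G)$ than another (see the remark following~\eqref{eq:ug}). Therefore $(\oox_r-\hy_r)^2\le\oep_C^2$ for every $r\in R$.

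The only delicate point is the bookkeeping in the second case: one must verify that, after substituting the $\oox$-surrogates, every lower-order term (circuit, potential, gradient, bit-imitation) is genuinely of smaller order than the imitation gap $\Omega(\oep_C^2)$ --- which is exactly what the chain~\eqref{eq:constants} is engineered to ensure --- and that choosing the competitor sample $p'$ to agree with $p$ in the untouched coordinate makes the comparison telescope down to the single coordinate $j=1$.
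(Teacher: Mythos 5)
Your proposal is correct and follows essentially the same route as the paper: the existence of a perfectly imitating sample (Lemma~\ref{lem:perfect-im-exist}) forces the guide player to put zero weight on any sample involving a team whose imitation loss exceeds $\oep_C^2$ (since that gap dominates the circuit, potential, and gradient terms by~\eqref{eq:constants}), whence $g_r=0$ and the No-guide-imitation Lemma (Lemma~\ref{lem:no-krentel->perfect}) yields the contradiction. Your choice of a competitor sample $p'=(\ui,i_2)$ sharing the untouched coordinate, rather than the paper's direct comparison with the fully perfect sample $p^*$, is only a cosmetic variation of the same argument.
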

The lemma is stated using the real values $(\oox_{r})$, which by the $\oox$-utility-approximation Lemma (Lemma \ref{lem:oox}) captures up to $O(\oep_C^2)$ both the common utility and the imitation term (see Equation \eqref{eq:uoox}).

\begin{proof}
By the Perfect-imitation-sample Lemma (Lemma \ref{lem:perfect-im-exist}) there exists $p^*=(i^*_1,i^*_2)\in P$ such that both teams $x_{i^*_1,1}$ and $x_{i^*_2,2}$ perform perfect imitation.
The total imitation loss (of both teams) is $-\tO(2^{-K})$. Hence the utility of the guide player from choosing the sample $p^*$ (i.e., $u^{g,p^*}$) is at least $-\tO(\epsilon_C)$ (see Equation \eqref{eq:ug}). If, by way of contradiction, there exists an $\oox_{r}$ with $(\oox_{r}-\hy_{r})^2 > \oep_C^2$ then the guide player will choose every sample $p\in P(r)$ with zero probability (because her imitation loss from the $r$-th real number exceeds $\oep_C^2$ while she has a better sample $p^*$). Namely, we get that $g_r=0$ and by the No-guide-imitation Lemma (Lemma \ref{lem:no-krentel->perfect}) this implies that  $x_{r}$ performs perfect imitation which contradicts the fact that $(\oox_{r}-\hy_{r})^2 > \oep_C^2$. 
\end{proof}

The Mild-imitation Lemma (Lemma \ref{lem:mild-imitation}) allows us to rephrase the $\oox$-far-imitation Lemma (Lemma \ref{lem:grid-far}).

\begin{corollary}[Far-Perfect Corollary]\label{cor:far->perfect}
If $\hy_{r}$ is $2\oep_C$-far from the $2^{-N_{in}-1}$-grid then the $x_{r}$-team strongly perfectly imitates $\hy_{r}$.
\end{corollary}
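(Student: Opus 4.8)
The plan is to obtain this corollary as an immediate consequence of the Mild-imitation Lemma (Lemma~\ref{lem:mild-imitation}) together with the $\oox$-far-imitation Lemma (Lemma~\ref{lem:grid-far}). First I would invoke the Mild-imitation Lemma, which gives $(\oox_{r}-\hy_{r})^2 \le \oep_C^2$, hence $|\oox_{r}-\hy_{r}| \le \oep_C$; that is, the surrogate location $\oox_{r}$ sits within distance $\oep_C$ of the imitation target $\hy_{r}$.

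Next, using the hypothesis that $\hy_{r}$ is at distance at least $2\oep_C$ from every point of the $2^{-N_{in}-1}$-grid, the triangle inequality yields that $\oox_{r}$ is at distance at least $2\oep_C - \oep_C = \oep_C$ from that grid. I would then feed this into Lemma~\ref{lem:grid-far}: since $\oox_{r}$ is $\oep_C$-far from the $2^{-N_{in}-1}$-grid, the $x_{r}$-team strongly perfectly imitates $\hy_{r}$, which is exactly the claimed conclusion. As with the other lemmata of this subsection, the statement implicitly carries Assumption~\ref{as:hy}, inherited through the cited lemmata; the boundary case $\hy_{r}\notin[0,1)$ is handled by the complementary lemmata of Section~\ref{sec:boundary}.

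There is no serious obstacle here --- all the work has already been done in Lemmata~\ref{lem:mild-imitation} and~\ref{lem:grid-far}, and the corollary is essentially a one-line triangle-inequality rephrasing. The only point I would double-check is the bookkeeping of constants: the threshold $2\oep_C$ in the hypothesis is chosen precisely so that, after absorbing the $\oep_C$-slack coming from the Mild-imitation Lemma, what remains still clears the $\oep_C$-far requirement of the $\oox$-far-imitation Lemma; a weaker hypothesis (say, $\hy_{r}$ merely $\oep_C$-far from the grid) would not suffice to run this deduction.
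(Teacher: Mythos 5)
Your proposal is correct and takes exactly the same route as the paper: the paper's one-line proof likewise uses the Mild-imitation Lemma to place $\oox_{r}$ within $\oep_C$ of $\hy_{r}$, applies the triangle inequality to conclude that $\oox_{r}$ is $\oep_C$-far from the $2^{-N_{in}-1}$-grid, and then invokes the $\oox$-far-imitation Lemma (Lemma~\ref{lem:grid-far}). Your remark about the factor of $2$ in the hypothesis being exactly the slack needed is also the right bookkeeping.
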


The corollary simply follows from the fact that if $\hy_r$ is $2\oep_C$-far from the $2^{-N_{in}-1}$-grid then, by the Mild-imitation Lemma (Lemma \ref{lem:mild-imitation}), $\oox_r$ is $\oep_C$-far from the $2^{-N_{in}-1}$-grid.

So far we have assigned a real value $\oox_r$ to a random variable $\oX_r$ that captures ``the location'' of $\oX_r$. Now it will be convenient to refer to (the seemingly more natural) number $\Ex [\oX_r]$ as the location of $\oX_r$. The following lemma states that these numbers are indeed close to each other.

\begin{lemma}\label{lem:ex-oox}
For every $r\in R$ we have $|\oox_r-\Ex [\oX_r]|\leq 2\oep_C$.
\end{lemma}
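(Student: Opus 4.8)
The plan is to bound $|\oox_r - \Ex[\oX_r]|$ by the triangle inequality through the imitation target $\hy_r$, writing $|\oox_r - \Ex[\oX_r]| \le |\oox_r - \hy_r| + |\hy_r - \Ex[\oX_r]|$, and to show that each term on the right is at most $\oep_C$ up to a negligible correction. First I would split on the two cases in the definition~\eqref{eq:oox} of $\oox_r$, according to $m := \min\{\supp(M^x_r)\}$. If $m \ge k_C$, then $\oox_r = \Ex[\oX_r]$ by definition and the claim is immediate. So the case that requires work is $m < k_C$, where $\oox_r$ is the deterministic value the $x_r$-team takes when the veto player plays the most probable action $(m,t)$ at level $m$; here the Vetoed-mix-significant-bits Lemma (Lemma~\ref{lem:no-mix}) guarantees that the bits more significant than $m$ are pure, so $\oox_r$ is a well-defined deterministic real and $(m,t)$ lies in the support of the $x_r$-veto player.

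For the first term, $|\oox_r - \hy_r| \le \oep_C$ is exactly the Mild-imitation Lemma (Lemma~\ref{lem:mild-imitation}). For the second term, I would invoke the $\oox$-utility-approximation Lemma (Lemma~\ref{lem:oox}): equating its two expressions for $\Ex[U^x_r]$ and dividing by $-(3+g_r)$ gives
\begin{align*}
(\Ex[\oX_r]-\hy_r)^2 + \Var[\oX_r] \;=\; (\oox_r - \hy_r)^2 \,\pm\, \tO(\epsilon_C),
\end{align*}
where in the regime $m < k_C$ the error term comes only from the circuit, bit-imitation, and veto contributions to~\eqref{eq:uij} and is $\tO(\epsilon_C) \ll \oep_C^2$ by the parameter hierarchy~\eqref{eq:constants} (recall $\oep_C \gg \sqrt{\epsilon_C}$). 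Combined with $(\oox_r - \hy_r)^2 \le \oep_C^2$ from Mild-imitation, this forces both $\Var[\oX_r] \le \oep_C^2(1+o(1))$ and $(\Ex[\oX_r] - \hy_r)^2 \le \oep_C^2(1+o(1))$, hence $|\hy_r - \Ex[\oX_r]| \le \oep_C(1+o(1))$. Feeding the two bounds into the triangle inequality yields $|\oox_r - \Ex[\oX_r]| \le 2\oep_C$, with the lower-order term absorbed by the slack built into $\gg$ in~\eqref{eq:constants}.

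I do not anticipate any genuine obstacle: the lemma merely records that the two surrogates for ``the location of $\oX_r$'' --- the deterministic value $\oox_r$ and the mean $\Ex[\oX_r]$ --- agree to within $2\oep_C$, and each was already pinned to within $\oep_C$ of $\hy_r$ by the earlier lemmas. The one point that needs a little care is the bookkeeping of the error term in Lemma~\ref{lem:oox}: one must confirm it is genuinely $o(\oep_C^2)$ in the case $m < k_C$ (so that $\sqrt{\oep_C^2 + (\text{error})}$ exceeds $\oep_C$ by much less than $\oep_C$, keeping the final constant at $2$), which follows at once from $\epsilon_C \ll \oep_C^2$ in~\eqref{eq:constants}.
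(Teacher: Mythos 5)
Your proof is correct and follows essentially the same route as the paper's: bound both $|\oox_r-\hy_r|$ and $|\Ex[\oX_r]-\hy_r|$ by (roughly) $\oep_C$ using the Mild-imitation Lemma together with the $\oox$-utility-approximation Lemma and the nonnegativity of $\Var[\oX_r]$, then apply the triangle inequality through $\hy_r$. Your explicit case split on $m\ge k_C$ versus $m<k_C$ and the bookkeeping of the $\tO(\epsilon_C)$ error term are slightly more careful than the paper's one-line argument, but the substance is identical.
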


\begin{proof}
By the $\oox$-utility-approximation Lemma (Lemma \ref{lem:oox}) $(\oox_r-\hy_r)^2$ captures the imitation loss.
On the other hand, by Fact~\ref{fact:var} the imitation loss can  also be written as $(\Ex[\oX_r]-\hy_r)^2 + \Var[\oX_r]$. By the Mild-imitation Lemma (Lemma \ref{lem:mild-imitation}) we know that $(\oox_r-\hy_r)^2\leq \oep_C^2$ and also $(\Ex[\oX_r]-\hy_r)^2\leq \oep_C^2$ (because the variance is positive) which implies that $|\oox_r-\Ex [\oX_r]|\leq 2\oep_C$. 
\end{proof}

Lemma \ref{lem:ex-oox} allows us to replace in all the analysis so far the term $\oox_r$ by $\Ex[\oX_r]$. In particular, the $\oox$-utility-approximation Lemma (Lemma \ref{lem:oox}) and the Mild-imitation Lemma (Lemma \ref{lem:mild-imitation})
are valid when we replace $\oox_r$ by $\Ex[\oX_r]$, and in particular we get the following corollary:

\begin{corollary}[Mild-imitation Corollary]\label{cor:mild}
For every $r\in R$ the $x_r$-team mildly imitates $\hy_r$.
\end{corollary}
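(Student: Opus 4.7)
The proof plan is essentially to combine the $\oox$-utility-approximation Lemma (Lemma~\ref{lem:oox}) with the Mild-imitation Lemma (Lemma~\ref{lem:mild-imitation}) and Lemma~\ref{lem:ex-oox}. Recall that by Definition~\ref{def:perfect-imitation}, we need to show $\Ex[(\oX_r - \hy_r)^2] \leq O(\oep_C)$, which is actually the standard notion of the imitation loss (rather than the surrogate quantity $(\oox_r - \hy_r)^2$ that appears in the previous lemmas).

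First I would invoke Fact~\ref{fact:var} (applied with $A \equiv 1$ or just the variance decomposition) to rewrite the imitation loss as
\[
\Ex[(\oX_r - \hy_r)^2] = (\Ex[\oX_r] - \hy_r)^2 + \Var[\oX_r].
\]
Next, the $\oox$-utility-approximation Lemma (Lemma~\ref{lem:oox}) tells us that this quantity is within $O(\oep_C^2)$ of $(\oox_r - \hy_r)^2$. Then the Mild-imitation Lemma (Lemma~\ref{lem:mild-imitation}) bounds $(\oox_r - \hy_r)^2 \leq \oep_C^2$. Chaining these two bounds gives $\Ex[(\oX_r - \hy_r)^2] \leq \oep_C^2 + O(\oep_C^2) = O(\oep_C^2)$, and since $\oep_C < 1$ this is certainly $\leq O(\oep_C)$, which is the definition of mild imitation.

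The paper itself frames this as a corollary rather than a lemma, which reflects the fact that there is no real obstacle here: all of the structural work has already been done in establishing Lemmas~\ref{lem:oox}, \ref{lem:mild-imitation}, and \ref{lem:ex-oox}. The only content of the corollary is a change of viewpoint from the surrogate location $\oox_r$ (which was introduced because mixed strategies on veto/bit players induce distributions that are awkward to reason about directly) back to the natural imitation loss $\Ex[(\oX_r - \hy_r)^2]$. Lemma~\ref{lem:ex-oox} justifies this translation by showing $|\oox_r - \Ex[\oX_r]| \leq 2\oep_C$, so everything we proved in terms of $\oox_r$ transfers (up to constant factors in $\oep_C$) to the random variable $\oX_r$. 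The main ``trick,'' if any, is simply noticing that $O(\oep_C^2)$ is stronger than the $O(\oep_C)$ we need, so the conclusion follows immediately without any further work.
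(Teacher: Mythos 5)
Your proposal is correct and follows essentially the same route the paper takes: the paper derives the corollary by noting that Lemma~\ref{lem:oox} lets one replace $(\oox_r-\hy_r)^2$ by the true imitation loss $(\Ex[\oX_r]-\hy_r)^2+\Var[\oX_r]=\Ex[(\oX_r-\hy_r)^2]$ up to $O(\oep_C^2)$, and then applies the bound of Lemma~\ref{lem:mild-imitation}. Your observation that the resulting $O(\oep_C^2)$ bound is even stronger than the $O(\oep_C)$ required by the definition of mild imitation is also consistent with the paper.
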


\subsubsection{Perfect Imitation of the $y$-teams}\label{sec:y-team}

In this subsection our goal is to show that the $y_j$-teams perfectly imitate the real number $\hx_j + \frac{\epsilon_G\Ex [\Delta_j]}{20}$ which combines the imitation target $x_j$ and the desired shift in the gradient direction $\triangle_j$. See the Perfect-$y$-imitation Lemma (Lemma \ref{lem:y-team}). To show it we consider the interaction from the perspective of a $y_j$-team and we slightly rewrite it (see Equations \eqref{eq:uy*} and \eqref{eq:uy}). We observe that the Correct-bit-no-guide Lemma (Lemma \ref{lem:imitation-No-guide}) is applicable for the $y_j$-team (without any modifications).

We start our formal discussion with isolating the utility terms of Equation \eqref{eq:eu} that depend on the choice of the $y_j$-team.

\begin{align*}
\begin{aligned}
\Ex[U^y_{j}]=&-\sum_{p\in  P} (1+g^p) \left[ \left( \Ex[\oX^p_j]-\Ex[\oY_j]-i(p,j)\epsilon_S \right)^2 + \Var [\oY_j] \right] 
&&\text{(imitation)} \\
&+\epsilon_G \Ex[\oY_{j}] \Ex[\Delta_j]  &&\text{(gradient)}\\
&-\epsilon_T \sum_{p\in P,k\in [K]}  (1+g^p) \left[\left( \Ex[\oX^p_j]-\Ex[\tY_{j,k}] - i(p,j) \epsilon_S \right)^2 + \Var [\tY_{j,k}] \right]  &&(y\text{-bit imitation)}\\
&-\epsilon_T \Ex [U^x_{\text{bit}}(\oY_j)] &&(x\text{-bit imitation)}\\
&+\epsilon_T\epsilon_G \sum_{k\in [K]} \Ex[\tY_{j,k}] \Ex[\Delta_j] &&\text{(bit gradient)} \\
&+\epsilon_T \Ex[M^y_j] &&\text{(veto)}.
\end{aligned}
\end{align*}

\begin{definition}\label{def:xj}
We define the variable $\hX_j$ that equals $\Ex[\oX^p_{j}]-i(p,j)\epsilon_S$ with probability $\frac{1+g^p}{10}$ for every $p\in P$, and we denote $\hx_j = \Ex[\hX_j]$.
\end{definition}
With this notation we can rewrite
\begin{align}\label{eq:uy*}
\begin{aligned}
\Ex[U^y_j]=\text{\footnotemark}&-10 \left[ \left(\Ex[\oY_j]-\hx_j \right)^2 + \Var [\oY_j] \right] 
&&\text{(imitation)} \\
&+\epsilon_G \Ex[\oY_{j}] \Ex[\Delta_j]  &&\text{(potential)}\\
&-\epsilon_T \sum_{p\in P,k\in [K]}  10 \left[\left( \Ex[\tY_{j,k}] - \hx_j \right)^2 + \Var [\tY_{j,k}] \right]  &&(y\text{-bit imitation)}\\
&-\epsilon_T \Ex [U^x_{\text{bit}}(\oY_j)] &&(x\text{-bit imitation)}\\
&+\epsilon_T\epsilon_G \sum_{k\in [K]} \Ex[\tY_{j,k}] \Ex[\Delta_j] &&\text{(bit gradient)} \\
&+\epsilon_T \Ex[M^y_j] &&\text{(veto)}.
\end{aligned}
\end{align}
\footnotetext{Omitting the term $10\Var [\hX_j]$ which doesn't depend on $y$-players from both: the imitation utility and the $y$-bit imitation utility.} 
By merging the imitation utility with the gradient utility and merging the $y$-bit utility with the bit gradient utility we obtain
\begin{align}\label{eq:uy}
\begin{aligned}
\Ex[U^y_{j}]=\text{\footnotemark}&-10 \left[ \left(\Ex[\oY_j]-\hx_j -\frac{\epsilon_G\Ex[\Delta_j]}{20} \right)^2 + \Var [\oY_j] \right] \\
&-10 \epsilon_T  \left[ \sum_{k\in [K]}  \left( \Ex[\tY_{j,k}] -\hx_j -\frac{\epsilon_G\Ex[\Delta_j]}{20} \right)^2 + \Var [\tY_{j,k}] \right]  \\
&-\epsilon_T \Ex [U^x_{\text{bit}}(\oY_j)] \\
&+\epsilon_T \Ex[M^y_j].
\end{aligned}
\end{align}
\footnotetext{Up to terms that depend on $\hx$ and $\Ex[\Delta_j]$ but not on $\oY$ and $\tY$. Note that those are not affected by the $y_j$-team. 
}
Using the representation of Equation \eqref{eq:uy} for the common utility of the $y_j$-team we deduce the following lemma.

\begin{lemma}[Perfect-$y$-imitation Lemma]\label{lem:y-team}
For every $j\in J$, the $y_{j}$-team perfectly imitates $\hx_j + \frac{\epsilon_G \Ex[\Delta_j] }{20}$.
\end{lemma}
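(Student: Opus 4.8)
The plan is to mimic the structure of the analysis already developed for the $x_r$-teams, now applied to the $y_j$-team, exploiting the fact that after merging the imitation and gradient terms (Equation~\eqref{eq:uy}) the $y_j$-team faces essentially the same optimization problem as an $x_r$-team: minimize a squared imitation loss against a fixed target $\hx_j + \frac{\epsilon_G\Ex[\Delta_j]}{20}$, plus a tiny $y$-bit imitation term against the same target, plus the even-tinier $x$-bit and veto utilities. I would first observe (as the statement preceding the lemma already notes) that the Correct-bit-no-guide Lemma (Lemma~\ref{lem:imitation-No-guide}) applies verbatim to the $y_j$-team, since the circuit utility never directly constrains $\oY_j$ — the only place $y$'s bits enter the circuit is through the circuit \emph{inputs} $\ox^p_j$, not $\oy_j$ itself, so from the $y_j$-team's point of view there is no ``input circuit'' obstruction analogous to Challenge~\ref{challenge:input-circuit}. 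Concretely, the same inductive argument from the most-significant bit downward — first showing (via the Vetoed-mix-significant-bits-style reasoning on the square loss) that sufficiently significant non-vetoed bits are pure, then applying the Correct-bit lemma to propagate ``the $k$-th bit matches the target'' — goes through, with the target now being $\hx_j + \frac{\epsilon_G\Ex[\Delta_j]}{20}$ instead of $\hy_r$.

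The key steps, in order, are: (1) invoke Equation~\eqref{eq:uy} so that the $y_j$-team's expected utility is, up to terms independent of $\oY_j$, exactly $-10[(\Ex[\oY_j] - \htarget)^2 + \Var[\oY_j]] - 10\epsilon_T[\sum_k (\Ex[\tY_{j,k}] - \htarget)^2 + \Var[\tY_{j,k}]] - \epsilon_T\Ex[U^x_{\text{bit}}(\oY_j)] + \epsilon_T\Ex[M^y_j]$, where I write $\htarget := \hx_j + \frac{\epsilon_G\Ex[\Delta_j]}{20}$; (2) check that $\htarget \in [0,1)$ up to a negligible amount, so that the target is a legitimate number to imitate — here I would use that $\hx_j \in [0,1]$ by definition of $\hX_j$ as a convex combination of $\Ex[\oX^p_j] - i(p,j)\epsilon_S$ values, and that $\epsilon_G\Ex[\Delta_j]$ is exponentially tiny, handling the boundary cases $\hx_j$ near $0$ or $1$ exactly as Assumption~\ref{as:hy}/Section~\ref{sec:boundary} handles them for the $x$-teams (or noting the over-shoot stays within $[0,1]$); (3) run the Vetoed-mix-significant-bits argument (Lemma~\ref{lem:no-mix}'s proof, which only used the imitation square loss and the veto menu) to get that bits $k < k_C$ that are unvetoed are pure, and define the surrogate $\ooy_j$ analogously to $\oox_r$; (4) run the inductive argument from the No-guide-imitation Lemma (Lemma~\ref{lem:no-krentel->perfect}), which uses only the Correct-bit-no-guide Lemma and the veto player's dominant-strategy behavior, starting from the first bit and proceeding until either we hit a bit where $(\htarget)_{\ge k} \in [2^{-k} - 2^{-K}, 2^{-k}]$ (giving imitation loss $O(2^{-2K})$ directly via a veto action) or we have matched all $K-3$ bits; either way $\Ex[(\oY_j - \htarget)^2] = O(2^{-2K})$, which is perfect imitation.

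**The main obstacle** I anticipate is step (2): making sure the $y_j$-team analysis does not secretly depend on Assumption~\ref{as:hy} in a way that breaks, and making sure $\htarget$ is genuinely a number the $y$-team can and wants to imitate. The subtlety is that the gradient term pushes $\oy_j$ to \emph{over-imitate} $\hx_j$ (by $\frac{\epsilon_G\Ex[\Delta_j]}{20}$), which is exactly the mechanism described in Figure~\ref{fig:sam}(b) and in the informal overview — so if $\hx_j$ is already at the boundary $1$ and $\Ex[\Delta_j] > 0$, the target exceeds $1$ and the $y$-team can only go up to $1 - 2^{-K}$; but then the imitation-loss bound degrades to $O(\epsilon_G^2)$ rather than $O(2^{-2K})$, which is still fine for ``perfect imitation'' as defined (the definition allows $O(2^{-2K})$; an $O(\epsilon_G^2)$ bound would only give approximate imitation). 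I would resolve this either by checking that $\htarget$ never actually exits $[0, 1-2^{-K}]$ by more than $2^{-K}$ (because $\Ex[\Delta_j]$ and the boundary cooperate — if $\hx_j = 1$ then $x$ sits at the right boundary, and the reduction's design via Equation~\eqref{eq:delta-reduction} means we don't need over-imitation to succeed there), or by strengthening the statement in exactly those boundary cases to ``approximately perfectly imitates,'' which the downstream lemmas should tolerate. A secondary, more routine check is that the $x$-bit imitation term $\Ex[U^x_{\text{bit}}(\oY_j)]$ and the veto term, both of weight $\epsilon_T$, are genuinely dominated by the leading $\Var[\oY_j]$-scale effects whenever they would matter — this follows from the parameter hierarchy~\eqref{eq:constants} in the same way it did for the $x$-teams.
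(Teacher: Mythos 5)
Your proposal is correct and follows essentially the same route as the paper's own (much terser) proof sketch: rewrite the $y_j$-team's utility as in Equation~\eqref{eq:uy}, observe that it is structurally identical to an $x_r$-team's imitation problem with target $\hx_j + \frac{\epsilon_G\Ex[\Delta_j]}{20}$ and factor $10$ in place of $3+g_r$, and then rerun the Correct-bit-no-guide induction of the No-guide-imitation Lemma. The boundary concern you raise in step (2) is exactly the point the paper disposes of with its one-line claim that the target necessarily lies in $[0,1)$, so your extra care there is a refinement rather than a divergence.
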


\begin{proof}[Proof Sketch]
The arguments of the proof are very similar to those of the No-guide-imitation Lemma (Lemma \ref{lem:no-krentel->perfect}). 
The first term in Equation \eqref{eq:uy} plays the role of the imitation utility.
The second term in Equation \eqref{eq:uy} plays the role of the bit imitation utility.
The factor $3+g_r$ is replaced by the factor $10$ which does not affect the arguments; the real number $\hy_r$ is replaced by the real number $\hx_j + \frac{\epsilon_G \Ex[\Delta_j] }{20}$ which again does not affect the arguments.

Note also that we argued the validity of the lemma \emph{without} imposing Assumption \ref{as:hy}. This is because the imitation target $\hx_j + \frac{\epsilon_G \Ex[\Delta_j] }{20}$ necessarily belongs to $[0,1)$.
\end{proof}

\subsubsection{Mutual imitation}

In this subsection we combine our observation regarding the imitation of the $x_r$-teams and the $y_j$-teams from which we deduce a stronger bound on the mutual imitation of the groups. The Far-Perfect Corollary (Corollary \ref{cor:far->perfect}) and the Mild-imitation Corollary (Corollary \ref{cor:mild}) allow us to argue that in each dimension there exists at most one $x_r$-team that does not perfectly imitate $\hy_r$. We call such a team \emph{problematic} and we denote its corresponding imitation error by $E_j$; see Definition \ref{def:error}. Our observations about perfect imitation of groups allow us to provide a formula for the imitation error $E_j$. This formula improves the imitation bound: we obtain that all groups approximately perfectly imitate their target (this is stronger than mild imitation from the Mild-imitation Corollary (Corollary \ref{cor:mild})).

Since the shifts of the samples are of size $\epsilon_S$ and $2^{-N_{in}} \gg \epsilon_S \gg \oep_C$ we know that for every $j\in J$ there exists at most one $i\in I$ such that $\hy_{i,j}$ lands $\oep_C$-close to the $2^{-N_{in}-1}$-grid. Therefore, by the Mild-imitation Corollary (Corollary \ref{cor:mild}) and the Far-Perfect Corollary (Corollary \ref{cor:far->perfect}) we deduce that among the $x_{i,j}$ teams for $i\in I$ there exists at most one team that does not perfectly imitate $\hy_{i,j}$. This observation leads us to the following definitions and notations.

\begin{definition}[Problematic teams and imitation error]\label{def:error}\hfill

For every $j\in J$ the unique $i$ such that $\ox_{i,j}$ does not perfectly imitate $\hy_{i,j}$ is called a \emph{problematic} team and is denoted by $i^*_j$. 
If all $\hy_{i,j}$ are $\oep_C$-far from the $2^{-N_{in}-1}$-grid then we say that \emph{a problematic team does not exist}. 
We denote by $E_j:=\Ex[\oX_{i^*_j,j}]-\hy_{i^*_j,j}$ the \emph{imitation error} of the problematic sample.\footnote{Note that we define the imitation error as a \emph{signed} difference. The sign will play a substantial role latter.} If a problematic team does not exist we set $E_j=0$.
A sample $p=(i_1,i_2)$ is called \emph{problematic} if at least one of the teams $x_{i_1,1}$ or $x_{i_2,2}$ is problematic. 
\end{definition}


\begin{lemma}[Error-formula Lemma]\label{lem:error-gradient}
Under Assumption \ref{as:hy}, for every $j\in J$ we have $$E_j=-\frac{\epsilon_G}{6+2g_{r^*(j)}} \Ex[\Delta_j]\pm \tO(2^{-K}).$$
\end{lemma}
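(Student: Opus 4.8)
The plan is to analyze the equilibrium behavior of the problematic $x_{i^*_j,j}$-team together with the $y_j$-team, using the fact (Perfect-$y$-imitation Lemma, Lemma~\ref{lem:y-team}) that the $y_j$-team perfectly imitates $\hx_j + \frac{\epsilon_G\Ex[\Delta_j]}{20}$, and then combining this with the utility of the problematic team to solve for $E_j$. Let me denote $r^* = r^*(j) = (i^*_j, j)$ the problematic team. First I would observe that since every $x_{i,j}$-team for $i \ne i^*_j$ perfectly imitates $\hy_{i,j}$, the random variable $\hX_j$ of Definition~\ref{def:xj} is, up to $\tO(2^{-K})$, supported (with appropriate weights $\frac{1+g^p}{10}$) on the values $\Ex[\oY_j]$ (coming from the non-problematic teams) and on $\Ex[\oX_{r^*,j}] - i^*_j \epsilon_S = \Ex[\oY_j] + E_j$ (coming from the problematic team). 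Hence $\hx_j = \Ex[\hX_j] = \Ex[\oY_j] + \theta_j E_j \pm \tO(2^{-K})$, where $\theta_j = \frac{1}{10}\sum_{i_2 \in I}(1 + g^{(i^*_j, i_2)})$ is the total weight the guide player's-adjusted distribution places on samples containing the problematic team; note $\theta_j = \frac{3 + g_{r^*}}{10}$ by the definition of $g_r$ in Equation~\eqref{eq:gr}.

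Next I would turn to the problematic team's own optimization. From Equation~\eqref{eq:uoox} (the $\oox$-utility-approximation Lemma), the imitation-relevant part of the problematic team's utility is $-(3 + g_{r^*})\big((\Ex[\oX_{r^*}] - \hy_{r^*})^2\big) \pm O(\oep_C^2)$, i.e.\ up to lower-order terms it is $-(3+g_{r^*}) E_j^2$. But $\hy_{r^*} = \Ex[\oY_j] + i^*_j\epsilon_S$ depends on $\Ex[\oY_j]$, which is itself determined by the $y_j$-team's best response to $\hx_j$, which in turn depends on $E_j$ through the previous paragraph. The key mechanism is that the problematic team is pinned at (or very near) the $2^{-N_{in}-1}$-grid point $\beta = \beta_{i^*_j}$ — it cannot move past the grid because of the circuit utility (Far-Perfect Corollary, Corollary~\ref{cor:far->perfect}, in its contrapositive form) — so $\Ex[\oX_{r^*,j}] = \beta \pm \tO(2^{-K})$ is essentially fixed by the grid, while $\hy_{r^*}$, hence $E_j$, is determined by where the $y_j$-team settles. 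I would then use the Perfect-$y$-imitation Lemma: $\Ex[\oY_j] = \hx_j + \frac{\epsilon_G\Ex[\Delta_j]}{20} \pm \tO(2^{-K}) = \Ex[\oY_j] + \theta_j E_j + \frac{\epsilon_G\Ex[\Delta_j]}{20} \pm \tO(2^{-K})$. This cancels $\Ex[\oY_j]$ on both sides and gives $0 = \theta_j E_j + \frac{\epsilon_G\Ex[\Delta_j]}{20} \pm \tO(2^{-K})$, so $E_j = -\frac{\epsilon_G \Ex[\Delta_j]}{20 \theta_j} \pm \tO(2^{-K}) = -\frac{\epsilon_G\Ex[\Delta_j]}{2(3+g_{r^*})} \pm \tO(2^{-K})$. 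That matches the claimed formula $E_j = -\frac{\epsilon_G}{6 + 2g_{r^*(j)}}\Ex[\Delta_j] \pm \tO(2^{-K})$.

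There is a subtlety I glossed over: the equation $\Ex[\oY_j] = \hx_j + \frac{\epsilon_G\Ex[\Delta_j]}{20}$ from Lemma~\ref{lem:y-team} is only guaranteed up to the perfect-imitation error $O(2^{-2K})$, and more importantly the relation $\hx_j = \Ex[\oY_j] + \theta_j E_j$ requires that the guide-weights $g^p$ entering Definition~\ref{def:xj} are consistent with the $g_{r^*}$ in the target formula — this is fine because both refer to the same equilibrium's guide strategy. I also need to handle the degenerate case $E_j = 0$ (no problematic team): then the claim reads $0 = -\frac{\epsilon_G}{6+2g_{r^*}}\Ex[\Delta_j] \pm \tO(2^{-K})$, which I would need to justify separately — presumably by showing that when all $\hy_{i,j}$ are far from the grid, the $y_j$-team and all $x$-teams have reached a configuration where $\Ex[\Delta_j]$ is correspondingly tiny, or this case is simply excluded from the lemma's intended scope and handled elsewhere. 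The main obstacle is making rigorous the claim that the problematic team's location $\Ex[\oX_{r^*,j}]$ is essentially frozen at the grid point $\beta$ independent of the small gradient-driven perturbations — this requires carefully invoking that the circuit utility ($\epsilon_C 2^{-2D}$ weight when chosen by the guide, much larger than $\epsilon_G$) dominates any incentive to cross the grid, while simultaneously the imitation incentive (weight $O(1)$) dominates within the grid cell; threading the inequality chain~\eqref{eq:constants} through both regimes is the delicate part, and it is exactly the kind of case-by-case best-response analysis carried out in Lemmata~\ref{lem:imitation-No-circuit} and~\ref{lem:grid-far}, which I would invoke rather than redo.
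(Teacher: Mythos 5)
Your proposal is correct and follows essentially the same route as the paper's proof: express $\hx_j$ as the guide-weighted average of the $\Ex[\oX_{i,j}]-i\epsilon_S$ (which equal $\Ex[\oY_j]$ up to $O(2^{-K})$ for non-problematic teams and $\Ex[\oY_j]+E_j$ for the problematic one), plug this into the Perfect-$y$-imitation identity $\Ex[\oY_j]=\hx_j+\frac{\epsilon_G\Ex[\Delta_j]}{20}\pm O(2^{-K})$, and cancel to get $0=\frac{3+g_{r^*}}{10}E_j+\frac{\epsilon_G\Ex[\Delta_j]}{20}\pm O(2^{-K})$. The ``main obstacle'' you flag (showing $\Ex[\oX_{r^*,j}]$ is frozen at the grid) is not actually needed here --- the lemma is a pure accounting identity given the definition of $E_j$ --- and the degenerate case $E_j=0$ is handled by the very same cancellation, which then just asserts $|\Ex[\Delta_j]|\leq\tO(2^{-K}/\epsilon_G)$ as in the Perfect-dimension Corollary.
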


\begin{proof}
We fix $j\in J$. 
We note that for every $i\in I$ we have
\begin{align}\label{eq:ii1}
\begin{aligned}
\Ex[\oX_{i,j}]&=\Ex[\oY_{j}]+i\epsilon_S \pm O(2^{-K}) &&\text{ for every } i\neq i^*_1 \\
\Ex[\oX_{i,j}]&=\Ex[\oY_{j}]+i\epsilon_S +E_j &&\text{ for } i=i^*_1.
\end{aligned}
\end{align}
The equations in the first line follow from perfect imitation. The equality in the second line follows from the definition of $E_j$. 
By the fact that the $y_{j}$-team perfectly imitates $\hx_j + \frac{\epsilon_G \Ex[\Delta_j] }{20}$ (Perfect-$y$-imitation Lemma \ref{lem:y-team}) and the definition of $\hx_{j}$ we have
\begin{align}\label{eq:ii2}
\Ex[\oY_{j}]=\sum_{i\in I} \frac{3+g_{i,j}}{10} \left( \Ex[\oX_{i,j}] -i\epsilon_S \right) + \frac{\epsilon_G \Ex[\Delta_j] }{20} \pm O(2^{-K}) 
\end{align}
By plugging the Equations of \eqref{eq:ii1} for every $i\in I$ in \eqref{eq:ii2} the terms $i\epsilon_S$ as well as term $\Ex[\oY_{j}]$ cancel out and we obtain
\begin{align*}
0=\frac{3+g_{r^*(j)}}{10}E_j +\frac{\epsilon_G \Ex[\Delta_j] }{20} \pm O(2^{-K}),
\end{align*}
and the lemma follows.
\end{proof}

The Error-formula Lemma (Lemma \ref{lem:error-gradient}) has two important consequences. First, the Error-formula Lemma significantly improves the imitation bound of the Mild-imitation Corollary (Corollary \ref{cor:mild}). While the Mild-imitation Corollary states that the imitation is bounded by $\tO(\oep_C^2)$, in the Error-formula Lemma \ref{lem:error-gradient} we see that the imitation is bounded by $\tO(\epsilon_G^2)$. In particular, so far the driving force of most arguments we applied was the imitation term. The Error-formula Lemma shows that this term is in fact negligible in every equilibrium since it is bounded by $\tO(\epsilon_G^2)$. Circuit's computations and the potential utility, which have the weights of $\epsilon_C\gg \epsilon_G^2$ and $\epsilon_P\gg \epsilon_G^2$ will be the driving forces of the upcoming arguments. The second important consequence is that $E_j$ and $\Ex[\Delta_j]$ have opposite signs. Namely, whenever $\Ex[\Delta_j]>0$ the team $x_{r^*(j)}$ \emph{under-estimates} the target $\hy_{r^*(j)}$. Whenever $\Ex[\Delta_j]<0$ the team $x_{r^*(j)}$ \emph{over-estimates} the target $\hy_{r^*(j)}$. This consequence will play a significant role in the final argument; see Lemma \ref{lem:final}.

The following corollary follows immediately from the Error-formula Lemma (Lemma \ref{lem:error-gradient}).

\begin{corollary}[Perfect-dimension Corollary]\label{cor:perfect->zero-gradient}
For every $j\in J$ if the $x_{i,j}$-teams perfectly imitate $\hy_{i,j}$ for every $i\in I$ then we have $|\Ex[\Delta_j]|\leq \tO(\frac{2^{-K}}{\epsilon_G})$. 
\end{corollary}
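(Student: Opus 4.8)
The plan is to specialize the Error-formula Lemma (Lemma~\ref{lem:error-gradient}) to the situation in which \emph{every} $x_{i,j}$-team perfectly imitates $\hy_{i,j}$. First I would observe that, by Definition~\ref{def:error}, when all the $x_{i,j}$-teams perfectly imitate their targets there is no problematic team in dimension $j$, and hence by definition $E_j=0$. Plugging $E_j=0$ into the conclusion of Lemma~\ref{lem:error-gradient}, namely $E_j=-\frac{\epsilon_G}{6+2g_{r^*(j)}}\Ex[\Delta_j]\pm\tO(2^{-K})$, yields $\frac{\epsilon_G}{6+2g_{r^*(j)}}\Ex[\Delta_j]=\pm\tO(2^{-K})$. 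Since $0\le g_{r^*(j)}\le 1$ the coefficient $\frac{1}{6+2g_{r^*(j)}}$ is bounded between $\frac18$ and $\frac16$, so it can be absorbed into a constant, giving $\epsilon_G\,|\Ex[\Delta_j]|\le\tO(2^{-K})$, i.e.\ $|\Ex[\Delta_j]|\le\tO(2^{-K}/\epsilon_G)$, which is exactly the claim.

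One subtlety worth checking is the edge case in Definition~\ref{def:error} where ``a problematic team does not exist'' — this happens precisely when all $\hy_{i,j}$ are $\oep_C$-far from the $2^{-N_{in}-1}$-grid, and in that case $E_j$ is set to $0$ by fiat and $r^*(j)$ is not even defined. Here I would note that Lemma~\ref{lem:error-gradient}'s statement still makes sense if we interpret $g_{r^*(j)}$ as $0$ (or any value in $[0,1]$) in that degenerate case, since the proof of Lemma~\ref{lem:error-gradient} — which goes through Equations~\eqref{eq:ii1} and~\eqref{eq:ii2} with $E_j=0$ substituted in — still produces the identity $0=\frac{3+g_{r^*(j)}}{10}E_j+\frac{\epsilon_G\Ex[\Delta_j]}{20}\pm O(2^{-K})$ with $E_j=0$, hence $\frac{\epsilon_G\Ex[\Delta_j]}{20}=\pm O(2^{-K})$. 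Either way the bound on $|\Ex[\Delta_j]|$ follows, so this case causes no real difficulty.

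I do not anticipate a genuine obstacle here: the corollary is essentially an immediate algebraic consequence of Lemma~\ref{lem:error-gradient} together with the definitional fact that perfect imitation of all teams forces $E_j=0$. The only thing one must be slightly careful about is to confirm that ``all $x_{i,j}$ perfectly imitate $\hy_{i,j}$'' genuinely forces $E_j=0$ under Definition~\ref{def:error} — in particular that the definition's problematic-team case and no-problematic-team case both yield $E_j=0$ under this hypothesis — and that the division by $6+2g_{r^*(j)}$ is harmless because this quantity is $\Theta(1)$. Given those two observations, the proof is one line of substitution.
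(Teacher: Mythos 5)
Your proposal is correct and takes essentially the same route as the paper's one-line proof: both specialize the Error-formula Lemma (Lemma~\ref{lem:error-gradient}) to the case where the imitation error $E_j$ vanishes and then divide by the $\Theta(1)$ coefficient $\frac{\epsilon_G}{6+2g_{r^*(j)}}$. The only cosmetic difference is that the paper observes perfect imitation gives $|E_j|\leq O(2^{-K})$ rather than $E_j=0$ exactly (since ``perfect imitation'' only means $\Ex[(\oX_r-\hy_r)^2]=O(2^{-2K})$, and Definition~\ref{def:error} ties the ``no problematic team'' case to the grid-distance condition rather than to perfect imitation), but this is absorbed by the $\pm\tO(2^{-K})$ slack in the lemma and changes nothing.
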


\begin{proof}
Perfect imitation implies $E_j \leq O(2^{-K})$ and hence $|\Ex[\Delta_j]|\leq \tO(\frac{2^{-K}}{\epsilon_G})$.
\end{proof}

In order that the Perfect-dimension Corollary will be applicable for the reduction we need to have two additional properties of an equilibrium which have not been proved yet.

\begin{enumerate}
\item All $x_{r}$-teams should perfectly imitate $\hy_{r}$. So far, we know it only for $x_{r}$-teams such that $\hy_{r}$ is sufficiently far from the $2^{-N_{in}-1}$-grid (see the Far-Perfect Corollary \ref{cor:far->perfect}) and for teams such that $g_r=0$ (the No-guide-imitation Lemma \ref{lem:no-krentel->perfect}).  
\item The term $|\Ex[\Delta_j]|$ should be equal to (or approximate) the actual terms $\triangle_j (z)$ for some point $z$ that is near to the points $\{(x^p),y\}$. So far, we have proved neither that these points are actually close to each other nor that the computations in the circuits are done correctly. 
\end{enumerate}

In the remainder of the proof we show that these two properties are indeed satisfied in every equilibrium.

\subsubsection{Correct circuit computations}

In this subsection we show several results on the correctness of computations inside the circuits (i.e., by the circuit players). We start with several definitions of correct computation. 
We prove the existence of two samples where the computations are correct. The Pure-input Lemma (Lemma \ref{lem:pure-L-input}) allows us to define (yet another) notion that captures the ``location of the team $\ox_r$'': $x^c_r$, see Definition \ref{def:xcr}, which looks at the actions of the circuit's input players rather than on the group $x_r$. Then our goal is to show that samples that are chosen by the guide player have approximately correct computation (see Definition \ref{def:cor-comp}). To do so, we formulate (yet another) Correct-bit Lemma \ref{lem:imitation-matched-input} and apply it inductively in the Strong-perfect-imitation Lemma (Lemma \ref{lem:same-bin}). 

We recall that $W^p_l$ denotes the indicator of wrong computation of the player $l$ in the circuit $C^p$ is wrong. 

\begin{definition}\label{def:cor-comp}
We say that \emph{the computations at the sample $p\in P$ are correct} if $\Pr[W^p_l=1]=0$ for every $l\in \cL$. 
The computations are \emph{weakly correct} if $\sum_{l\in \cL} \Pr[W^p_l=1]\leq \frac{\epsilon_P}{\epsilon_C}$.
\end{definition}

Note that correct computations is a demanding definition for correctness of circuit's computations. In particular, it requires that all the relevant bits $\ox^b_r$ in both dimensions will be pure because otherwise the input players $l^p$ for $l\in \cL_{in}$ have a positive probability of mismatching $\ox^b_r$. Moreover, it requires that all circuit players $l^p$ (including the outputs $\phi^p_k$ and $\triangle^p_{j,k}$) will play a pure and correct action.

We say that a sample $p\in P$ is \emph{perfect} if strongly perfect imitation occurs at the sample $p$ (see Definition \ref{def:perfect-imitation}) and the computations at the sample $p$ are correct (see Definition \ref{def:cor-comp} above).

\begin{lemma}[Perfect-imitation+computation-sample Lemma]\label{lem:perfect-sample}
There exists a perfect sample $p\in P$.
\end{lemma}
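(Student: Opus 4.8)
The plan is to bootstrap the Perfect-imitation-sample Lemma (Lemma~\ref{lem:perfect-im-exist}) to additionally control circuit correctness, by iterating the guide player's preference over samples. First recall that Lemma~\ref{lem:perfect-im-exist} gives, for each dimension $j\in J$, an index $i(j)\in I$ such that $x_{i(j),j}$ perfectly imitates $\hy_{i(j),j}$. By the Far-Perfect Corollary (Corollary~\ref{cor:far->perfect}) and the Mild-imitation Corollary (Corollary~\ref{cor:mild}), at most one team per dimension is problematic; so for each dimension we may in fact choose $i(j)$ so that $\hy_{i(j),j}$ is $2\oep_C$-far from the $2^{-N_{in}-1}$-grid, and hence by Corollary~\ref{cor:far->perfect} the team $x_{i(j),j}$ \emph{strongly} perfectly imitates $\hy_{i(j),j}$. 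The candidate perfect sample is then $p^*=(i(1),i(2))$: strongly perfect imitation occurs at $p^*$ by construction. It remains to show that the computations at $p^*$ are correct.

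For this, I would argue about the circuit players of the sample $p^*$. Since strong perfect imitation holds, for each coordinate $j$ the veto player of the $x_{i(j),j}$-team does not veto the first $N_{in}$ bits and those bit players purely match the binary representation of $\hy_{i(j),j}$; in particular the $N_{in}$ input bits to the circuit $C^{p^*}$ are deterministic. Now proceed by induction on the depth $d_l$ of the circuit wires. At depth $0$ (the input wires $l\in\cL_{in}$), the input player $l^{p^*}$ faces a pure target (the bit $\ox^b$ it must copy), so by the exponential-decay structure of the circuit utility (the incentive to match predecessors is $4$ times the incentive coming from successors, using $\epsilon_C 2^{-2d_l}$ weights and $\1_{[g=p^*]}$ — but here we may need $g^{p^*}>0$; see the obstacle below), the unique best reply of $l^{p^*}$ is the correct bit, so $\Pr[W^{p^*}_l=1]=0$. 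Inductively, if all wires feeding into the gate at wire $l$ are pure and correct, then $l^{p^*}$ observes pure correct inputs and its best reply is the correct Boolean output of the gate, again because its incentive to match its (now-pure, correct) predecessors strictly dominates any gain from influencing its successors (factor $4$ per level of depth, against at most two successors). Hence $\Pr[W^{p^*}_l=1]=0$ for every $l\in\cL$, which is exactly the definition of correct computations at $p^*$; combined with strong perfect imitation, $p^*$ is perfect.

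The main obstacle I anticipate is the role of the guide player's indicator $\1_{[g=p^*]}$ in the circuit-utility weight: the incentive for circuit players of $p^*$ to compute correctly is $\epsilon_C(\epsilon_T+\1_{[g=p^*]})2^{-2d_l}$, which collapses to a negligible $\epsilon_C\epsilon_T 2^{-2d_l}$ if $g^{p^*}=0$. So the argument needs to first establish that $g^{p^*}>0$, i.e.\ that the guide player chooses $p^*$ with positive probability. This should follow from the analysis already present: at $p^*$ both teams strongly perfectly imitate their (grid-far) targets, so the imitation loss at $p^*$ is $\tO(2^{-K})$, and by the reasoning in the proof of the Mild-imitation Lemma (Lemma~\ref{lem:mild-imitation}) the guide player's utility $u^{g,p^*}$ from choosing $p^*$ is within $\tO(\epsilon_C)$ of optimal; one must check that no other sample beats $p^*$ by more than $\tilde\omega(\epsilon_G)$ in the combined imitation+circuit+potential utility — and since every sample has imitation loss $\tO(\oep_C^2)$ by Corollary~\ref{cor:mild}, circuit loss $\ge -\tO(\epsilon_C)$, and potential in $[0,1]$ scaled by $\epsilon_P\ll\epsilon_C$, while $p^*$ loses only $\tO(2^{-K})$ on imitation and can only be beaten on the potential term by $O(\epsilon_P)$, the net is that $p^*$ is within $O(\epsilon_P)=o(\epsilon_C)$ of optimal and hence $g^{p^*}>0$. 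Once $g^{p^*}>0$, the circuit weight at $p^*$ is the full $\epsilon_C 2^{-2d_l}$ and the inductive correctness argument goes through. A secondary technical point is ruling out mixed behavior of the circuit players when a predecessor is pure — this is where the precise $4\times$ factor from the exponential weight decay is essential, and I would cite the already-established convention that gates have fan-out $2$ so that the total downstream incentive at depth $d_l$ is at most $2\cdot\epsilon_C 2^{-2(d_l+1)} = \tfrac12\epsilon_C 2^{-2d_l}$, strictly less than the $\epsilon_C 2^{-2d_l}$ upstream incentive.
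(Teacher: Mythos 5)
Your selection of the sample (both coordinates grid-far, hence strongly perfectly imitating, via the Far-Perfect Corollary) and the induction on circuit depth are exactly the paper's argument. The genuine problem is the detour through establishing $g^{p^*}>0$, which is both unnecessary and incorrectly justified. It is unnecessary because the factor $(\epsilon_T+\1_{[g=p]})$ multiplies \emph{both} sides of the comparison that drives the induction: a circuit player at depth $d$ pays $\epsilon_C(\epsilon_T+g^p)2^{-2d}$ for her own wrong computation, while the most she can lose downstream by computing correctly is $2\epsilon_C(\epsilon_T+g^p)2^{-2(d+1)}=\tfrac12\epsilon_C(\epsilon_T+g^p)2^{-2d}$; the only term not carrying this common factor is the output players' effect on the potential (and gradient) utilities, which is $O(g^p\epsilon_P)\ll\epsilon_C(\epsilon_T+g^p)2^{-2D}$ by the parameter hierarchy in Equation~\eqref{eq:constants}. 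So the induction closes even when $g^{p^*}=0$ and the circuit weight is only $\epsilon_C\epsilon_T2^{-2d_l}$: that weight is tiny in absolute terms, but the circuit players of $p^*$ affect essentially nothing else, so only the relative comparison between a gate's own penalty and its successors' penalties matters.

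Moreover, the argument you give for $g^{p^*}>0$ is a non sequitur. In an exact Nash equilibrium (or a well-supported $\epsilon'_N$-equilibrium with $\epsilon'_N\ll\epsilon_T^2$) the guide player places positive probability only on samples that are optimal up to $\epsilon'_N$, and being within $O(\epsilon_P)$ of optimal --- with $\epsilon_P\gg\epsilon'_N$ --- guarantees nothing: another sample could have strictly higher computed potential and absorb all of the guide's mass. Indeed the whole design of the guide gadget relies on her sometimes refusing to choose particular samples, so you cannot assume she chooses $p^*$. Drop that paragraph, run the induction with the weight $\epsilon_C(\epsilon_T+g^{p^*})2^{-2d_l}$ for an arbitrary $g^{p^*}\in[0,1]$, and your proof coincides with the paper's.
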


\begin{proof}
For every coordinate $j\in J$ there exists at most one problematic team $x_{i,j}$ (by the Far-Perfect Corollary~\ref{cor:far->perfect}) and hence two non-problematic teams; i.e., teams $x_r$ where the real number $\hy_r$ is $\oep_C$-far from the $2^{-N_{in}-1}$-grid. In such a non-problematic team by the Far-Perfect Corollary (Corollary \ref{cor:far->perfect}), the imitation in all these teams is strongly perfect, and in particular all bits $\ox^b_{r,k}$ for $k\in [N_{in}]$ are pure.
We let $p\in P$ be a point with both coordinates be non-problematic (we have at least four such points). We show that  correct computation happens in the sample $p$.

Each circuit player $l^p$ interacts with her predecessors and with at most two successors.
We proceed inductively on the depth of the circuit to argue that $\Prob[W^p_{l}=1]=0$. 
Consider a player $l^p$ that is located at depth $d=d_l$.
We argue that by choosing  the action that performs correct computations of the predecessors, player  $l^p$ ensures a utility of $\geq -2\epsilon_C (\epsilon_T+g^p) 2^{-2(d+1)}$, see Equation \eqref{eq:eu}. 
If this player is not an output player, her maximal loss is twice the penalty for wrong computation of her successors who are located in depth of at least $d+1$.
For an output player her affect on $\Ex[\Phi]$ is bounded by $g^p$  and hence her utility from playing the correct computation bit is at least $-\epsilon_P g^p \gg -2\epsilon_C (\epsilon_T+g^p) 2^{-2(d+1)}$. On the other hand, if player $l^p$ chooses the wrong bit her utility will be at most $-\epsilon_C (\epsilon_T+g^p) 2^{-2d}+O(g^p \epsilon_P)$. Since $2^{-2d} > 2^{-2d+1}+O(\frac{\epsilon_P}{\epsilon_C})$ the result follows. 
\end{proof}

The following lemma states that the input players of every circuit $C^p$ that is chosen by the guide player are pure. Moreover, for circuits $C^{p'}$ of other samples that share a coordinate with $p$, the input bits corresponding to this coordinate agree.
\begin{lemma}[Pure-input Lemma]\label{lem:pure-L-input}
Let $g^p=g^{i_1,i_2}>0$. Then, all circuit-input players $l^p$ for $l\in \cL_{in}$ are playing pure. Moreover, for every $p' = (i'_1,i_2)$ and every $l\in \cL_{in}$ that corresponds to an input from $x_{i_2,2}$ the circuit-input player $l^{p'}$ is playing pure and $l^{p'}=l^p$  (i.e., their pure actions match). 
Similarly, for every $p' = (i_1,i'_2)$ and every $l\in \cL_{in}$ that corresponds to an input from $x_{i_1,1}$ the circuit-input player $l^{p'}$ is playing pure and $l^{p'}=l^p$.
\end{lemma}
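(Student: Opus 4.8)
\emph{Plan.} Writing $r_1=(i_1,1)$ and $r_2=(i_2,2)$ for the two $x$-teams whose most-significant bits feed the circuit $C^p$, the plan is to use the hypothesis $g^p>0$ to first force $C^p$ to be almost error-free, and then to upgrade ``almost error-free'' to ``exactly pure'' for the input players, using the exponential decay of circuit incentives (Remark~\ref{rem:decreasing-weights}). First I would bound the guide player's payoff at $p$: by the Perfect-imitation$+$computation-sample Lemma (Lemma~\ref{lem:perfect-sample}) there is a sample $p^*$ where imitation is strongly perfect and all computations are correct, so its imitation$+$circuit$+$potential utility (the three relevant terms of Equation~\eqref{eq:ug}) is $-\tO(2^{-K})$; since $g^p>0$, the guide player would not put positive weight on $p$ if its corresponding sum were $\tilde{\omega}(\epsilon_G)$ below this, so that sum is $\ge -\tO(\epsilon_G)$ at $p$ as well. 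Because the imitation and circuit terms are $\le 0$ and the potential term is at most $\epsilon_P$, this gives $\epsilon_C\sum_{l\in\cL}2^{-2d_l}\Prob[W^p_l]\le \epsilon_P+\tO(\epsilon_G)=\tO(\epsilon_P)$, hence $\Prob[W^p_l]\le\tO(\epsilon_P/\epsilon_C)\ll\tfrac14$ for every input line $l\in\cL_{in}$ (for which $2^{-2d_l}=\Theta(1)$). Since $l^p$ and the circuit-input bit $\ox^b_{r,k}$ it is meant to copy (with $r\in\{r_1,r_2\}$) are controlled by disjoint players, they are independent, and $\Prob[W^p_l]\ge\min\{\Prob[\ox^b_{r,k}=0],\Prob[\ox^b_{r,k}=1]\}$; so the minority probability $\beta_{r,k}$ of every circuit-input bit is at most $\tO(\epsilon_P/\epsilon_C)\ll\tfrac14$.

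Next I would upgrade this to exact purity of the input players. An input player $l^p$ appears in Equation~\eqref{eq:u} only through her own wrong-computation indicator $w^p_l$, of weight $\epsilon_C(\epsilon_T+\1_{[g=p]})2^{-2d_l}$, and the indicators of her at most two successor gates, each of strictly larger depth and hence of weights summing to at most half of hers. Switching $l^p$ from the majority value of $\ox^b_{r,k}$ to the minority value raises her expected own-penalty by $(1-2\beta_{r,k})$ times her weight while changing her total successor-penalty by at most half her weight; as $1-2\beta_{r,k}>\tfrac12$, the majority value is her strict best response, so $l^p$ is pure. Running this over all $l\in\cL_{in}$ proves the first assertion.

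For the cross-circuit agreement, fix $p'=(i'_1,i_2)$. Its coordinate-$2$ input lines are meant to copy the \emph{same} bits $\ox^b_{r_2,k}$ as those of $C^p$, whose minority probabilities $\beta_{r_2,k}<\tfrac14$ were already bounded above. The best-response computation of the previous paragraph applies to $l^{p'}$ verbatim; the only change is that the common weight $\epsilon_C(\epsilon_T+\1_{[g=p']})2^{-2d_l}$ may drop by a factor $\epsilon_T$ when $g^{p'}=0$, but this rescales the own-penalty and the successor-penalties alike, leaving the ratio — and hence the conclusion that $l^{p'}$ plays the majority value of $\ox^b_{r_2,k}$, which is exactly $l^p$ — unchanged. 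The case $p'=(i_1,i'_2)$ is symmetric.

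\emph{The main obstacle} I anticipate is in the first step: one must make sure the circuit penalty inside the guide's payoff at $p$ is not masked by a large imitation-variance term. This is exactly why the lemma comes after the mild-imitation results and the Error-formula Lemma (Lemma~\ref{lem:error-gradient}): by that point every imitation loss — variance included — is $\tO(\epsilon_G^2)\ll\epsilon_C$, so the only competitor of the circuit term in the guide's payoff is the potential term, which is harmlessly $O(\epsilon_P)\ll\epsilon_C$. The second, more clerical, subtlety is the depth bookkeeping of the second step, namely that an input player's own incentive strictly outweighs her successors' combined incentive (a consequence of fan-out $2$ and the exponential decay of weights with depth).
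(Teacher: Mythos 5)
Your proposal is correct and follows essentially the same route as the paper's proof: bound the circuit loss at $p$ by comparing the guide player's payoff at $p$ with that at the perfect sample from Lemma~\ref{lem:perfect-sample}, deduce that each circuit-input bit is a Bernoulli with minority probability $O(\epsilon_P/\epsilon_C)$, and conclude that matching the majority realization is the unique best reply of $l^p$ (and likewise of $l^{p'}$). Your explicit successor-weight bookkeeping in the purity step is a slightly more detailed rendering of what the paper leaves implicit, but the argument is the same.
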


\begin{proof}
We first argue that the circuit loss of the $i$-th sample cannot exceed $\epsilon_P$:
There exists of a perfect sample (see the Perfect-imitation+computation-sample Lemma \ref{lem:perfect-sample}) with $O(2^{-2K})$-imitation loss and no loss at all in the circuit computation; combined with the fact  that $g^p>0$ we have that if circuit loss of the $i$-th sample was larger, the guide player would strictly prefer the perfect sample.
 
The bound on the total circuit loss implies weakly correct computations, because each computation mistake is multiplied by a factor of $\epsilon_C$. 
In particular, the computation loss of each input line $\Prob [W^p_{l}=1] $ for $l\in \cL_{in}$ should not exceed $\frac{\epsilon_P}{\epsilon_C}$. Namely, the corresponding bit $\oX_{r,k}$ should match the action $L^p$ with probability $1-\frac{\epsilon_P}{\epsilon_C}$. This can happen only if $\oX_{r,k}$ is a $Bernoulli(q)$ random variable with $q\in [0,\frac{\epsilon_P}{\epsilon_C}]\cup [1-\frac{\epsilon_P}{\epsilon_C},1]$. In case $q\in [0,\frac{\epsilon_P}{\epsilon_C}]$ the unique best-reply of player $l^p$ is $0$. In case $q\in [1-\frac{\epsilon_P}{\epsilon_C},1]$ the unique best-reply of player $l^p$ is $1$.

To see the moreover part, we 
use the fact that $\oX_{r,k}$ is almost pure (i.e., a $Bernoulli(q)$ random variable). Therefore, similarly to the argument for $l^p$, the unique best reply of the input-circuit player $l^{p'}$ is to also match the closest pure realization of $\oX_{r,k}$.
\end{proof}

The Pure-input Lemma (Lemma \ref{lem:pure-L-input}) provides us another useful notion of ``location'' of the team $x_r$ (so far we had $\oox_r$ and $\Ex[\oX_r]$ to capture it). 
\begin{definition}\label{def:xcr}
For $r\in R$ such that there exists $p\in P(r)$ with $g^p>0$ we define $x^c_r \in [0,1]_{2^{-N_{in}}}$ to be the (pure) real number of precision $2^{-N_{in}}$ which is induced by the input players $(l^p)_{l\in \cL_{in}}$ that correspond to the input of $x_r$.
For $p$ such that $g^p>0$ we denote by $x^{c,p}$ the (pure) point of precision $2^{-N_{in}}$ which is induced by the input players $(l^p)_{l\in \cL_{in}}$
\end{definition}
 Note that by the moreover part of the Pure-input Lemma (Lemma \ref{lem:pure-L-input}) this definition is well defined (i.e., does not depend on the choice of $p\in P(r)$). Although $x^c_r$ is defined through the actions of the \emph{circuit} players (rather than by actions of the $x_r$-team players) this notion captures well the ``location'' of $x_r$ since the input bits $\oX^b_{r,k}$ for $k\in [N_{in}]$ match the bits of $x^c_r$ with high probability.

We present, yet another, analogue of a Correct-bit lemma.
This time, the lemma refers to bits $k\leq N_{in}$ that \emph{serve} as an input for the circuits (unlike the Correct-bit-no-circuit Lemma \ref{lem:imitation-No-circuit}), the lemma is valid for every choice of the guide player (unlike the Correct-bit-no-guide Lemma \ref{lem:imitation-No-guide}), but it assumes that all corresponding circuit input players are playing purely some string $x^c_r$ and that the imitation target $\hy_r$ is located very close to the segment $[x^c_r, x^c_r+2^{-N_{in}})$ which corresponds to the set of real numbers whose binary representation is $x^c_r$.

\begin{lemma}[Correct-bit-matched-input Lemma]\label{lem:imitation-matched-input}
Let $r\in R$, $1\leq k\leq N_{in}$. Suppose that all of the following hold:
\begin{description}
\item[$\hy_r$ agrees:] $\hy_r \in \Big(x^c_r - \tO (\sqrt{\epsilon_P}), x^c_r +2^{-N_{in}}+ \tO (\sqrt{\epsilon_P})\Big)$; and
\item[No veto:] $\Prob[M^x_{r}<k]=0$; and
\item[More significant bits agree:] $\Prob[X_{r,k'}=x^c_{r,k}]=1$ for every $k'<k$.
\end{description}
Then also $\Prob[X_{r,k}=x^c_{r,k}]=1$.
\end{lemma}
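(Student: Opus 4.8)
The structure will mirror the earlier Correct-bit lemmata (Lemma~\ref{lem:imitation-No-circuit} and Lemma~\ref{lem:imitation-No-guide}): I argue that if the bit player $x_{r,k}$ placed positive probability on the wrong bit, then either the bit player or the veto player would have a strictly profitable deviation. The novelty here is that bit $k \le N_{in}$ \emph{does} feed the circuit, so the circuit utility term cannot simply be ignored. The key quantitative observation is that the hypothesis ``$\hy_r$ agrees'' pins $\hy_r$ to lie within the single grid cell $[x^c_r, x^c_r+2^{-N_{in}})$ (up to a $\tO(\sqrt{\epsilon_P})$ slop), and since the more significant bits $k'<k$ already agree with $x^c_{r,k'}$ with probability $1$, the value $x^c_{r,k}$ is exactly the bit consistent with staying in that cell — equivalently, $(\hy_r)_{\geq k}$ lies (up to the slop) in the half of $[0, 2^{-k+1})$ selected by $x^c_{r,k}$, bounded strictly away from the midpoint $2^{-k}$ by roughly $2^{-N_{in}} \gg \sqrt{\epsilon_P}$ when $k < N_{in}$, and at least $\tO(\sqrt{\epsilon_P})$ away when $k = N_{in}$ (but on the correct side).

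\textbf{Step 1: Set up the conditional utility.} Fixing $r$ and dropping it from notation, I substitute the agreed high-order bits and rewrite $\Ex[U^x_k]$ as in Equation~\eqref{eq:uijk}, in terms of $X := \oX_{\geq k}$ and $y := \hy_{\geq k}$, but now retaining the circuit term. Because flipping bit $k$ changes the circuit input from $x^c_r$ to a different grid point, the circuit term is \emph{not} invariant under $X_k = 0$ vs.\ $X_k = 1$: on the agreeing bit it can be zero (correct computation), on the disagreeing bit each input line feeding that coordinate's bit-$k$ gate is wrong, contributing a penalty of order $\epsilon_C \cdot g^p \cdot 2^{-2 d_l}$ summed over the relevant input lines — i.e., of order at least $\epsilon_C\, g_r$ if some $p \in P(r)$ has $g^p > 0$, and at least $\epsilon_C \epsilon_T$ otherwise. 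Crucially this penalty is \emph{against} playing the wrong bit, so it only helps the argument; I record that the circuit term, if anything, makes the agreeing bit strictly more attractive.

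\textbf{Step 2: Rule out a mixed/wrong bit player while not vetoed.} Suppose $\Prob[X_k \neq x^c_{r,k}] > 0$. By indifference, the conditional utility on the wrong realization equals the equilibrium value $v$. I compare with the deviation of the \emph{veto player} to $(k, t)$ for the veto type $t$ that produces the deterministic value closest to $\hy_r$ among the six options in Equation~\eqref{eq:veto-value} — exactly as in the proof of Lemma~\ref{lem:imitation-No-circuit}, using the fact that, since $\hy_r$ lies well inside the cell and strictly away from the bit-$k$ midpoint $\ox_{<k}+2^{-k}$ by at least $\tO(\sqrt{\epsilon_P})$, one of $1\vzero$, $11\vzero$, or $\vone$ (or their left-side analogues) imitates $\hy_r$ strictly better than the wrong realization of $X_k$ does. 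The imitation gap is $\Omega(\min\{2^{-2N_{in}},\ \sqrt{\epsilon_P}\cdot 2^{-k}\}) \gg \tO(\epsilon_C) \supseteq$ circuit-term fluctuations, so the veto deviation strictly improves the total utility — contradiction. The slight twist versus Lemma~\ref{lem:imitation-No-circuit} is that here the circuit term is live, but since the veto deviation $(k,t)$ and the conditioning events all concern bit $k$ which feeds the circuit, I must note that a veto to $(k,t)$ with $t$ on the \emph{correct} side of the grid produces input $x^c_r$, hence \emph{zero} additional circuit loss; so the comparison is only more favorable to the deviation. Once $\Prob[M^x < k] = 0$ is combined with this, the bit player cannot be vetoed, and then exactly as in the inductive arguments I conclude by dominance (via the $x_k$-bit imitation term plus the circuit term, both favoring $x^c_{r,k}$) that $\Prob[X_k = x^c_{r,k}] = 1$.

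\textbf{Main obstacle.} The delicate point is bookkeeping the three competing weights at a bit $k \le N_{in}$: the imitation term ($\Theta(1)$ but with tiny lever arm $\sim 2^{-2N_{in}}$ near the grid), the circuit term ($\Theta(\epsilon_C)$ or $\Theta(\epsilon_C \epsilon_T)$), and the potential/gradient terms ($\Theta(\epsilon_P), \Theta(\epsilon_G)$, which enter only through output players and are invariant under the relevant deviations here). I need the inequality chain $\min\{2^{-2N_{in}}, \sqrt{\epsilon_P} \cdot 2^{-N_{in}}\} \gg \epsilon_C$ (which follows from $2^{-1.5N_{in}} \gg \sqrt{\epsilon_C}$, i.e.\ $2^{-3N_{in}} \gg \epsilon_C$, already implied by Inequality~\eqref{eq:constants}) together with $\sqrt{\epsilon_P} \gg \oep_C$ so that the ``$\hy_r$ agrees'' window of width $\tO(\sqrt{\epsilon_P})$ is wide enough to contain the honest imitation error (of order $\oep_C$ by the Mild-imitation Corollary~\ref{cor:mild}) yet narrow enough that $\hy_r$ stays bounded away from the bit-$k$ midpoints for all $k < N_{in}$. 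Verifying that the hypotheses indeed force $\hy_r$ strictly into one half-cell at every such $k$ — and handling the borderline $k = N_{in}$ where the available margin shrinks to $\tO(\sqrt{\epsilon_P})$ but $x^c_{r,N_{in}}$ is still on the correct side — is where the real care is needed; everything else is a transcription of the earlier Correct-bit proofs.
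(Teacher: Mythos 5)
Your overall scaffolding (rerun the Correct-bit argument of Lemma~\ref{lem:imitation-No-circuit} with the circuit term kept live, and note that the circuit term pushes toward $x^c_{r,k}$) matches the paper's intent, but there is a genuine gap in how you read the ``$\hy_r$ agrees'' hypothesis. That hypothesis only places $\hy_r$ in the \emph{enlarged} interval $\big(x^c_r - \tO(\sqrt{\epsilon_P}),\, x^c_r + 2^{-N_{in}} + \tO(\sqrt{\epsilon_P})\big)$; it does \emph{not} force $\hy_r$ into the cell $[x^c_r, x^c_r+2^{-N_{in}})$, nor does it bound $\hy_r$ away from the bit-$k$ midpoint $\ox_{<k}+2^{-k}$ on the side selected by $x^c_{r,k}$. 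For instance, if $(x^c_r)_{>k}=0$ then that midpoint equals $x^c_r$ itself, and $\hy_r$ may sit up to $\tO(\sqrt{\epsilon_P})$ \emph{below} it; then the binary representations of $\hy_r$ and $x^c_r$ disagree at bit $k$, and the imitation term actively prefers the bit $1-x^c_{r,k}$ (or a veto such as $0\vone$ landing just below the grid boundary). This configuration is exactly why the conclusion is $X_{r,k}=x^c_{r,k}$ rather than $X_{r,k}=\hy^b_{r,k}$, and why the $\hy$-imitation Lemma~\ref{lem:same-bin} only claims $\Ex[\oX_r]\in[x^c_r,x^c_r+2^{-N_{in}})$, reserving perfect imitation for the case $\hy_r\in[x^c_r,x^c_r+2^{-N_{in}}]$. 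Your Step 2 and ``Main obstacle'' paragraph assume this configuration away (``bounded strictly away from the midpoint\dots on the correct side''), so the proposal never confronts the one case the lemma is really about.

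The correct resolution runs the tradeoff in the opposite direction from the one you set up: when matching $x^c_{r,k}$ conflicts with imitating $\hy_r$, the extra imitation loss is only $\tO(\epsilon_P)$ (the distance from $\hy_r$ to the cell is at most $\tO(\sqrt{\epsilon_P})$ and the loss is quadratic), whereas flipping the circuit input incurs an $\epsilon_C$-weighted penalty, and $\epsilon_C\gg\epsilon_P$; so the circuit term overrides the imitation term near the boundary. Your required inequality $\min\{2^{-2N_{in}},\,\sqrt{\epsilon_P}\cdot 2^{-N_{in}}\}\gg\epsilon_C$ points the wrong way and is in fact false for its second term: from \eqref{eq:constants} we have $\epsilon_P\ll\epsilon_C^2 2^{-2D}$, hence $\sqrt{\epsilon_P}\cdot 2^{-N_{in}}\ll\epsilon_C$. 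So in the near-boundary case your imitation-dominates-circuit comparison fails, and the argument must instead be that the circuit incentive dominates the (bounded) imitation loss.
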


\begin{proof}
Intuitively, the fact that the corresponding input player in the circuit plays the imitating bit, only increases the incentives of the bit player to match it. Formally, one may verify that all the arguments in the proof of the Correct-bit-no-circuit Lemma (Lemma \ref{lem:imitation-No-circuit}) are valid for this case; instead of arguing that the circuit utility is irrelevant (as in the Correct-bit-no-circuit Lemma \ref{lem:imitation-No-circuit}) we argue that it is relevant but has better utility when $x_{r,k}=x^c_{r,k}$. Note that the possible imitation loss of $\tO(\epsilon_P)$ is negligible with respect to the gain from imitating correctly the circuit inputs $\epsilon_C$.
\end{proof}

The following lemma states that whenever $\hy_r$ has the same binary representation (up to the $N_{in}$-th bit) as $x^c_r$, the imitation of $x_r$ is perfect and there are no computation errors in the $x_r$ inputs of the circuit.

\begin{lemma}[$\hy$-imitation Lemma]\label{lem:same-bin}
For every $r\in R$, if\footnote{This lemma indirectly assumes that $g^p>0$ for some $p\in P(r)$ because the notion $x^c_r$ is defined only for such indices $r\in R$.} $\hy_r \in \Big(x^c_r - \tO (\sqrt{\epsilon_P}), x^c_r +2^{-N_{in}}+ \tO (\sqrt{\epsilon_P})\Big)$  then $\Ex[\oX_r] \in [x^c_r , x^c_r +2^{-N_{in}})$. Moreover, if $\hy_r \in [x^c_r , x^c_r + 2^{-N_{in}}]$ then the $x_r$ team perfectly imitates $\hy_r$. 
\end{lemma}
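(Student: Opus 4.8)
\emph{Proof plan.} The plan is to treat this lemma as the circuit-input analogue of the $\oox$-far-imitation Lemma (Lemma~\ref{lem:grid-far}) and the No-guide-imitation Lemma (Lemma~\ref{lem:no-krentel->perfect}): I would run a single induction over the bit index $k$, invoking the Correct-bit-matched-input Lemma (Lemma~\ref{lem:imitation-matched-input}) to handle the circuit-input bits $k\le N_{in}$ and the Correct-bit-no-circuit Lemma (Lemma~\ref{lem:imitation-No-circuit}) to handle the remaining bits $k>N_{in}$. Throughout, $x^c_r$ is the pure $2^{-N_{in}}$-precision string played by the circuit-input players, which is well-defined by the Pure-input Lemma (Lemma~\ref{lem:pure-L-input}) because the statement implicitly assumes $g^p>0$ for some $p\in P(r)$; indeed that lemma already forces $\Ex[\oX_r]$ within $O(\epsilon_P/\epsilon_C)$ of the cell $[x^c_r, x^c_r+2^{-N_{in}})$, and the present lemma sharpens this to exact membership.

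For the first claim I would prove, by strong induction on $k=1,\dots,N_{in}$, that (i) $\Prob[X_{r,k}=x^c_{r,k}]=1$ and (ii) the veto player never begins a veto at bit $k$, i.e.~$\Prob[M^x_{r}=k]=0$. The inductive step first feeds the induction hypothesis together with the standing assumption $\hy_r\in\big(x^c_r-\tO(\sqrt{\epsilon_P}),\,x^c_r+2^{-N_{in}}+\tO(\sqrt{\epsilon_P})\big)$ into Lemma~\ref{lem:imitation-matched-input}, whose three hypotheses (``$\hy_r$ agrees'', ``no veto on more significant bits'', ``more significant bits agree'') are exactly what is available, to obtain~(i). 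For~(ii) I would argue, as in Lemma~\ref{lem:grid-far}, that a veto action $(k,t)$ is dominated: if the suffix $t$ agrees with $x^c_{r}$ at position $k$, then since bit player $x_{r,k}$ is pinned to $x^c_{r,k}$ by~(i), the veto $(k+1,t')$ with the matching suffix on positions $\ge k+1$ produces the identical string $\ox^b_r$ but a strictly larger---hence better---veto term $\epsilon_T m^x_r$; while if $t$ disagrees at position $k$ it corrupts the $k$-th circuit-input line of a guide-chosen circuit (using the moreover part of Lemma~\ref{lem:pure-L-input}), and the resulting circuit loss of order $\epsilon_C$ outweighs the at most $\tO(\epsilon_P)$ of imitation such a veto could buy, since $\hy_r$ lies within $\tO(\sqrt{\epsilon_P})$ of the cell and $\sqrt{\epsilon_P}\ll\epsilon_C$. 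Once (i) and (ii) hold for every $k\le N_{in}$, the post-veto bits satisfy $\ox^b_{r,k}=x^c_{r,k}$ with probability one, so $\Ex[\oX_r]\in[x^c_r,x^c_r+2^{-N_{in}})$.

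For the ``moreover'' part, assume $\hy_r\in[x^c_r,x^c_r+2^{-N_{in}}]$; then the first $N_{in}$ bits of the binary representation of $\hy_r$ coincide with $x^c_r$, so by the first part the first $N_{in}$ bit players already match $\hy^b_r$ and are unvetoed, i.e.~$\Prob[M^x_r\le N_{in}]=0$. From bit $N_{in}+1$ onward the circuit utility is untouched by the $x_r$-team, so I would repeat verbatim the second half of the proof of Lemma~\ref{lem:grid-far}: apply Lemma~\ref{lem:imitation-No-circuit} inductively for $k=N_{in}+1,\dots,K-4$, concluding at each step that either $(\hy_r)_{\ge k}\in[2^{-k}-2^{-K},2^{-k}]$---in which case the veto action $(k,1\vzero)$ forces $(\Ex[\oX_r]-\hy_r)^2\le O(2^{-2K})$---or $X_{r,k}=\hy^b_{r,k}$ with probability one and, by the $\epsilon_T$ veto incentive exactly as in Lemma~\ref{lem:grid-far}, bit $k$ is not vetoed, so the induction continues. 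In either case the imitation loss is $O(2^{-2K})$, i.e.~the $x_r$-team perfectly imitates $\hy_r$.

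The main obstacle is part~(ii) of the first claim---ruling out vetoes on the circuit-input bits. Unlike in Lemma~\ref{lem:grid-far}, where such a veto was excluded because it would park $\oox_r$ next to the fine $2^{-N_{in}-1}$-grid, here one must directly weigh the $\Omega(\epsilon_C)$ circuit loss from corrupting an input line against the imitation gain a veto could provide, and this only closes because (a) staying inside the cell, every imitation target is already reachable by the free bits $k>N_{in}$ with no veto at all, so a cell-preserving veto merely adds veto cost, and (b) leaving the cell buys at most $\tO(\epsilon_P)\ll\epsilon_C$ of imitation since $\hy_r$ is $\tO(\sqrt{\epsilon_P})$-close to the cell. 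Making sure the corrupted line really lies in a circuit carrying guide weight---so the circuit loss is genuinely of order $\epsilon_C$ rather than $\epsilon_C\epsilon_T$---is the delicate point, and this is precisely where the moreover part of the Pure-input Lemma is needed.
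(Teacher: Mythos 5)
Your proposal follows essentially the same route as the paper's proof: induction over $k$ using the Correct-bit-matched-input Lemma for the circuit-input bits $k\le N_{in}$, the veto-shifting argument ($(k,t)\to(k+1,t')$ preserving the string but gaining $\epsilon_T$ in the veto term) to rule out vetoes along the way, and then the Correct-bit-no-circuit Lemma for $k>N_{in}$ exactly as in the $\oox$-far-imitation Lemma to get the moreover part. Your case (b) analysis---that a veto disagreeing with $x^c_{r,k}$ costs $\Omega(\epsilon_C)$ in circuit utility while buying at most $\tO(\epsilon_P)$ of imitation---makes explicit a step the paper leaves implicit, and is consistent with the paper's reasoning.
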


\begin{proof}
At a high level, we apply the Correct-bit-matched-input Lemma (Lemma \ref{lem:imitation-matched-input}) inductively on the bits $k=1,2,...,N_{in}$ to get that $(\ox^b_r,k)_{k\in [N_{in}]}$ matches $x^c_r$ bit-by-bit. This ensures that $\Ex[\oX_r] \in [x^c_r , x^c_r +2^{-N_{in}})$ irrespective of the bits $N_{in}+1,...,K$. To get the moreover part we apply the Correct-bit-no-circuit Lemma (Lemma \ref{lem:imitation-No-circuit}) inductively on the bits $k=N_{in}+1,...,K-3$. 

\paragraph{Base case:}
For $k=1$ the ``No veto'' and ``More significant bits agree'' conditions of the Correct-bit-matched Lemma (Lemma \ref{lem:imitation-matched-input}) are satisfied trivially. Therefore $X^b_{r,1}=x^c_{r,1}$ with probability 1. 

\paragraph{Inductive step ($k \le N_{in}$):} Now we apply the same argument as we did in the proof of the $\oox$-far-imitation Lemma (Lemma \ref{lem:grid-far}). If the player vetoes the bit $k=1$, she does it by a veto that matches the bit $x^c_{r,1}$ in the first bit of the veto (see Equation \eqref{eq:veto-value}). Since the bit player $x_{r,1}$ also matches $x^c_{r,1}$ the veto player will be better of by switching to the corresponding veto on bit $k=2$; i.e., the following deviations are profitable $(1,\vzero) \to (2,\vzero)$, $(1,00\vone) \to (2,0\vone)$, $(1,0\vone) \to (2,\vone)$, $(1,1\vzero) \to (2,\vzero)$, $(1,11\vzero) \to (2,1\vzero)$, and $(1,\vone) \to (2,\vone)$. Such a change does not affect any terms of the utility except for the veto utility which is improved by $\epsilon_T$. Now the conditions for bit $k=2$ of the Correct-bit-matched Lemma (Lemma \ref{lem:imitation-matched-input}) are satisfied and we may proceed to the next bit. We proceed so inductively till bit $N_{in}$. 

\paragraph{Inductive step ($N_{in} < k \le K-3$):} To prove the moreover part we proceed similarly by an inductive application of the Correct-bit-no-circuit Lemma (Lemma \ref{lem:imitation-No-circuit}) combined with no-veto argument as above.
\end{proof}

\subsubsection{Completing the proof}
Let $p=(i_1,i_2)$ be a sample with $g^p>0$. We argue that $|\triangle_j(x^{c,p})|\leq \epsilon_R$ for both $j=1,2$ (see Definition \ref{def:xcr} for $x^{c,p}$). This is sufficient for the reduction by Lemma \ref{lem:sufficient}.
\begin{lemma}[Final Lemma]\label{lem:final}
If $|\triangle_j(x^{c,p})|\geq \epsilon_R$ then all teams $x_r$ for $r\in R$ perfectly imitate their target. 
\end{lemma}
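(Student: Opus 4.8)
The plan is a proof by contradiction following the picture in Figure~\ref{fig:sam}: assuming some team fails to imitate its target perfectly, I exhibit a strictly profitable deviation — for the offending team, or for the guide player — contradicting that we are at an equilibrium. WLOG take $\triangle_j(x^{c,p})\ge\epsilon_R$; the case $\le-\epsilon_R$ is symmetric (swap ``left'' and ``right'' throughout).

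\emph{Setup and a key computational fact.} By the Far--Perfect Corollary~\ref{cor:far->perfect} (and $\epsilon_S\gg\oep_C$) there is at most one team per dimension that fails to perfectly imitate; call it \emph{problematic} and suppose one exists. By the No-guide-imitation Lemma~\ref{lem:no-krentel->perfect} a problematic team $x_{r^*}$ has $g_{r^*}>0$, so some $p^*\in P(r^*)$ has $g^{p^*}>0$. For \emph{any} sample $p$ with $g^p>0$ the Pure-input Lemma~\ref{lem:pure-L-input} makes the circuit-input players of $C^p$ pure, and then the depth-induction of Lemma~\ref{lem:perfect-sample} forces \emph{every} gate of $C^p$ to compute correctly on those pure inputs; hence $\Phi^p=\phi(x^{c,p})$ and $\Delta^p_j=\triangle_j(x^{c,p})$ \emph{exactly}, and the only nonzero circuit loss of $C^p$ comes from input players disagreeing with the (almost pure) team bits — of total weight $O(N_{in}\epsilon_P)$, and $0$ whenever the relevant team's first $N_{in}$ bits are pure (e.g.\ strongly perfectly imitating teams). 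Since every non-problematic team perfectly imitates a shift of the same $\Ex[\oY_{j'}]$, all guide-chosen samples occupy at most two adjacent $2^{-N_{in}}$-cells per coordinate, so, using $\lambda\ll\epsilon_R$, $\Ex[\Delta_j]=\sum_{p'}g^{p'}\triangle_j(x^{c,p'})=\triangle_j(x^{c,p})\pm O(\lambda)\ge\epsilon_R/2>0$. By the Error-formula Lemma~\ref{lem:error-gradient} this forces $E_j<0$; in particular the problematic team lies \emph{in dimension $j$} (else $E_j=0$), it \emph{under-estimates}, $\Ex[\oX_{r^*}]<\hy_{r^*}$, and $|E_j|\ge\tfrac{\epsilon_G\epsilon_R}{16}\gg 2^{-K}\gg\epsilon_T$. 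It remains to contradict the existence of such $x_{r^*}$, $r^*=(i^*,j)$.

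\emph{The two cases.} Let $\beta$ be the $2^{-N_{in}-1}$-grid point $\oep_C$-close to $\oox_{r^*}$ (Lemma~\ref{lem:grid-far}); since $|E_j|\ll\oep_C$, also $\Ex[\oX_{r^*}]$, $\hy_{r^*}$ and every dimension-$j$ target (up to $O(\epsilon_S)$) concentrate near $\beta$. If $\Ex[\oX_{r^*}]$ and $\hy_{r^*}$ lie in the same $2^{-N_{in}}$-cell (this covers $\beta$ in a cell interior and the subcase $\Ex[\oX_{r^*}]\ge\beta$, since then $\hy_{r^*}>\beta$), then inductively applying the Correct-bit-no-circuit Lemma~\ref{lem:imitation-No-circuit} over the bits $k>N_{in}$ together with the no-veto argument at the end of the proof of Lemma~\ref{lem:grid-far} shows that $x_{r^*}$ in fact perfectly imitates $\hy_{r^*}$ — contradicting ``problematic'' (Figure~\ref{fig:sam}(b)). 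Otherwise $\Ex[\oX_{r^*}]$ sits in the left cell $[\beta-2^{-N_{in}},\beta)$ while $\hy_{r^*}\ge\beta$ (hence $x^{c,p^*}$ has $j$-th coordinate $\beta-2^{-N_{in}}$), and $x_{r^*}$ is genuinely stuck: moving across $\beta$ would cost $\approx\epsilon_C$ in the circuit for an imitation gain of only $O(E_j^2)=O(\epsilon_G^2)\ll\epsilon_C$. Here the guide player deviates. Let $i'\ne i^*$ index a non-problematic dimension-$j$ team and let $p'$ be $p^*$ with its $j$-th team replaced by $x_{i',j}$. If $x_{i',j}$ lies in the left cell as well, then $C^{p'}$ and $C^{p^*}$ have identical (pure) inputs, hence identical $\Phi,\Delta_j$ and zero circuit loss, so $u^{g,p'}-u^{g,p^*}$ equals exactly the imitation improvement $\Omega(E_j^2)\gg\epsilon_T$ (the residual $y$-bit and bit-potential terms being $O(\epsilon_T)$). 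If instead $x_{i',j}$ lies in the right cell $[\beta,\beta+2^{-N_{in}})$ (where it is strongly perfectly imitating), then $\Phi^{p'}-\Phi^{p^*}=\triangle_j(x^{c,p^*})\ge\epsilon_R/2$, which dominates the $O(\epsilon_G\lambda+\epsilon_T)$ change in the remaining terms; either way $u^{g,p'}>u^{g,p^*}$. Since the two non-problematic dimension-$j$ teams each lie in the left or the right cell, such a $p'$ exists, and shifting guide-weight from $p^*$ to $p'$ (for every problematic sample) forces $g_{r^*}=0$, whence $x_{r^*}$ perfectly imitates by Lemma~\ref{lem:no-krentel->perfect} — contradiction (Figure~\ref{fig:sam}(a)). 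In every case we reach a contradiction, so no problematic team exists, and by the Far--Perfect Corollary~\ref{cor:far->perfect} all teams $x_r$ ($r\in R$) perfectly imitate $\hy_r$.

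\emph{Main obstacle.} The delicate point is the ``key computational fact'' — upgrading ``weakly correct'' to ``exactly correct given pure inputs'' for \emph{every} guide-chosen circuit, so that the guide's payoff is sensitive both to the potential (across samples with different cells) and to the $\Omega(E_j^2)$ imitation difference (across samples with the same cell) — together with the routine but lengthy verification, throughout, that the relevant lever ($\epsilon_R$ for the potential, $E_j^2$ for imitation) dominates \emph{all} of the accumulated residual error. This is exactly why the hierarchy~\eqref{eq:constants} is arranged as $2^{-N_{in}}\gg\epsilon_R\gg\epsilon_C\gg\epsilon_P\gg\epsilon_G\gg 2^{-K}\gg\epsilon_T$. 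Passing from the ``arranged'' configurations of Section~\ref{sec:br-path} to an arbitrary mixed profile — in particular pinning down which cell a given mixed team occupies, via the $\oox$- and $x^c$-surrogates — adds bookkeeping but no new idea.
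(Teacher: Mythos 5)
Your proof is correct and follows essentially the same route as the paper's: a contradiction via a strictly profitable guide-player deviation, with the sign of the imitation error pinned down by the Error-formula Lemma, the potential/imitation/circuit terms compared exactly as in Equation~\eqref{eq:ug}, and the ``same cell'' configuration dispatched by the $\hy$-imitation machinery --- your case split (same cell vs.\ straddling the boundary, then left/right position of the non-problematic team) is a relabeling of the paper's Cases 1--3 on $x^c_r$ vs.\ $x^c_{r'}$. One small citation slip: for the first $N_{in}$ bits in the same-cell case the relevant tool is the Correct-bit-matched-input Lemma~\ref{lem:imitation-matched-input} (via the $\hy$-imitation Lemma~\ref{lem:same-bin}), not the no-veto argument at the end of Lemma~\ref{lem:grid-far}, which requires $\oox_{r^*}$ to be far from the grid.
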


\begin{proof}
Assume by way of contradiction that $x_r$ does not perfectly imitate $\hy_r$. Assume without loss of generality that $r=(i,1)$. By the no-guide-imitation Lemma (Lemma \ref{lem:imitation-No-guide}) we know that $g_r>0$ and in particular for $p=(i,i_2)$ we have $g^{(i,i_2)}>0$ for some $i_2\in I$. We show that the guide player will improve her utility by choosing a different sample $p'=(i',i_2)$ for $i'\neq i$ instead of $p$. We analyse each of the three significant terms in the guide player utility (see Equation \eqref{eq:ug}) and show that it is better for the sample $p'$ (with a strict inequality in at least one term).

\paragraph{Imitation utility.} The imitation utility in dimension $j=2$ remains unchanged. For the dimension $j=1$ note that $\hy_{i,1}$ is $\oep_C$-close to the $2^{-N_{in}-1}$-grid by the Far-Perfect Corollary (Corollary \ref{cor:far->perfect}). Therefore, $\hy_{i',1}$ is $\Omega(\epsilon_S)$-far from the $2^{-N_{in}-1}$-grid (because $\hy_{i',1}$ is $\epsilon_S$-far from $\hy_{i,1}$). Thus, again by the Far-Perfect Corollary, we deduce that the imitation of the team $x_{i',1}$ is strongly perfect. This strictly improves upon the non-perfect imitation of the $x_{i,1}$-team.

\paragraph{Circuit utility.} For every $l\in \cL_{in}$ that corresponds to an input of the dimension $j=2$ the term $W^p_l$ remains unchanged because by the Pure-input Lemma (Lemma \ref{lem:pure-L-input}) the circuit's input players are playing pure and identically in the circuits of the samples $p$ and $p'$ circuits. Whereas, the mixed strategy of the $x_{i_2,2}$-teams remains unchanged.

For every $l\in \cL_{in}$ that corresponds to an input of the dimension $j=1$ we have $\mathbb{P}[W^{p'}_l]=0$. Hence these terms of the circuit utility are (weakly) superior at $p'$.

Finally, in both circuits, $C^p$ and $C^{p'}$, since the input bits are pure, by an inductive argument on the depth of the wires we deduce that $\mathbb{P}[W^{p}_l]=\mathbb{P}[W^{p'}_l]=0$ for every $l^p,l^{p'}\notin \cL_{in}$ and we have an indifference.

\paragraph{Potential utility.}
Note that $\phi^p=\phi(x^{c,p})$ and $\phi^{p'}=\phi(x^{c,p'})$; i.e., the actual potentials are computed at $p$ and $p'$ because the computations in both circuits ($C^p$ and $C^{p'}$) are done correctly.
We present the analysis for the case of 
$\triangle_1(x^{c,p})>\epsilon_R$. The arguments for $\triangle_1(x^{c,p})<-\epsilon_R$ are symmetric. We denote $r=(i,1)$ and $r'=(i',1)$. We consider three cases: 

\paragraph{Case 1: $x^c_r<x^c_{r'}$.} 
Since the potential is computed correctly in both points $x^{c,p}$ and $x^{c,p'}$ (see Pure-Input-Lemma \ref{lem:pure-L-input}) by definition of $\triangle$ we have $\phi(x^{c,p'})= \phi(x^{c,p})+ \triangle_1(x^{c,p}) >\phi(x^{c,p'})$.

\paragraph{Case 2: $x^c_r=x^c_{r'}$.} The potential remains unchanged. Note that here the gradient term remains unchanged too.

\paragraph{Case 3: $x^c_r>x^c_{r'}$.} 
We show that  this case is impossible by showing that every one of the following two sub-cases is impossible.

\paragraph{Case 3.1: $\Ex [\oX_r]\geq x^c_r + 2^{-N_{in}-1}$.} The distance $|\Ex [\oX_{r}] - \Ex [\oX_{r'}]|$ is bounded by $O(\epsilon_S + \oep_C)$ because both groups $x_{r}$ and $x_{r'}$ imitate up to $2\oep_C$ their targets $\hy_r$ and $\hy_{r'}$ (see the Mild-imitation Corollary (Corollary \ref{cor:mild})) and the distance of $\hy_r$ from $\hy_{r'}$ is at most $2\epsilon_S$ by the definition of the shifts. Therefore, $\Ex [\oX_{r'}]\geq x^c_r + 2^{-N_{in}-1} - O(\epsilon_S + \oep_C)$.
So we deduce that $x^c_{r}\leq x^c_{r'}$ (because the first $N_{in}$ bits of the $x_{r'}$ team are at least $x^c_r$). This contradicts the assumption of Case 3.

\paragraph{Case 3.2: $\Ex [\oX_r] \leq  x^c_r + 2^{-N_{in}-1}$.} We make two sub-claims
\begin{enumerate}
\item $x^c_r \leq \Ex [\oX_r]$.
\item $\Ex [\oX_r] \leq \hy_r$.
\end{enumerate}
By combining these two sub-claims we deduce that $x^c_r \leq  \hy_r$ and in particular 
$\hy_r \in [x^c_r,x^c_r +2^{-N_{in}})$.
But in this case, by the $\hy$-imitation Lemma (Lemma \ref{lem:same-bin}),  $x_r$-team strongly perfectly imitates $\hy_r$. 
It remains to prove the two sub-claims.

\paragraph{Sub-claim 1: $x^c_r \leq \Ex [\oX_r]$} Since $g^p>0$ we know that the bit of $\ox_{r,k}$ matches $x^c_{r,k}$ for every $k\in [N_{in}]$  with probability of at least $\frac{\epsilon_P}{\oep_C}$ (otherwise, a perfect sample whose existence has been proved in the Perfect-imitation+computation (Lemma \ref{lem:perfect-sample}) will be superior for the guide player). Therefore, $\Ex(\oX_r)\geq x^c_r - \tO(\frac{\epsilon_P}{\oep_C})$. Also, $|\Ex(\oX_r)-\hy_r|\leq O(\sqrt{\epsilon_P})$ again because otherwise the perfect sample would be superior. Therefore, we have $\hy_r \geq x^c_r - O(\sqrt{\epsilon_P})$. 
By the $\hy$-imitation Lemma (Lemma \ref{lem:same-bin}) we get that $x^c_r \leq \Ex [\oX_r]$.

\paragraph{Sub-claim 2: $\Ex [\oX_r] \leq \hy_r$} Recall that we have assumed that $\triangle(x^{c,p})>\epsilon_R$. By the $\lambda$-Lipschitness of $\triangle_j$ and the fact that $\triangle_j$ is computed correctly (see the Pure-Input Lemma \ref{lem:pure-L-input}) we deduce that 
\begin{gather}\label{eq:sub-claim-2}
\Ex[\Delta_1] = \triangle(x^{c,p})\pm O(\lambda) \ge \epsilon_R - O(\lambda) .
\end{gather}
Therefore,
\begin{align*}
\Ex[\oX_r]-\hy_r = E_1 & = \frac{-\epsilon_G}{6+g_r}\Ex[\Delta_1] \pm \tO(2^{-K}) && \text{(Error-formula Lemma (Lemma \ref{lem:error-gradient}))}\\
&\le \frac{-\epsilon_G}{6+g_r}\Big( \epsilon_R - O(\lambda) \Big) \pm \tO(2^{-K}) && \text{(Eq.~\eqref{eq:sub-claim-2})}\\
&\leq 0.
\end{align*}
\end{proof}

The Final Lemma \ref{lem:final} combined with the Perfect-dimension Corollary (Corollary \ref{cor:perfect->zero-gradient}) complete the proof because if $|\triangle_j(x^{c,p})|\leq \epsilon_R$ we are done. And if $|\triangle_j(x^{c,p})|\geq \epsilon_R$ we have perfect imitation for all $x_{i,j}$ groups for $i\in I$ and by the Perfect-dimension Corollary we have 
$|\triangle_j(x^{c,p})|\leq \tO(\frac{2^{-K}}{\epsilon_G})\ll \epsilon_R$.

\subsection{Equilibrium Analysis: Including Boundaries}\label{sec:boundary}
In this section we sketch the equilibrium analysis without imposing Assumptions \ref{as:hy}.

We start this section with two lemmata that demonstrate the behavior of the $x_r$ teams in cases where the corresponding imitation target $\hy_r$ is located from the wrong side of the boundary. 

\begin{lemma}\label{lem:hy-boundary}
If $1-2^{-N_{in}-1} \leq \hy_r < 1$ then the $x_r$ team strongly perfectly imitates $\hy_r$. If $\hy_r\geq 1$ then the $x_r$-team strongly perfectly imitates $1-2^{-K}$.

Similarly, if $0 \leq \hy_r < 2^{-N_{in}-1}$ then the $x_r$ team strongly perfectly imitates $\hy_r$. If $\hy_r<0$ then the $x_r$-team strongly perfectly imitates $0$.
\end{lemma}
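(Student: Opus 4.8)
The plan is to mimic, with the appropriate modifications, the inductive bit-by-bit arguments from the interior analysis (the $\oox$-far-imitation Lemma~\ref{lem:grid-far} and the No-guide-imitation Lemma~\ref{lem:no-krentel->perfect}), exploiting the fact that when $\hy_r$ lies at or beyond a boundary, the imitation target as ``seen'' by the feasible choices of the $x_r$-team effectively sits near the extreme grid point, which is far from the relevant $2^{-N_{in}-1}$-grid in the only direction the team can move. I will treat the four assertions; by the left-right symmetry of the construction it suffices to handle the two ``$\hy_r$ near/past $1$'' cases, the ``$\hy_r$ near/past $0$'' cases being identical with $0$ and $1$ interchanged and veto types $\vzero\leftrightarrow\vone$, $1\vzero\leftrightarrow 0\vone$, etc.

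\emph{Case $\hy_r\geq 1$.} Here the feasible choices of the $x_r$-team all satisfy $\ox_r\le 1-2^{-K}$, and the imitation loss $(\ox_r-\hy_r)^2$ is strictly decreasing in $\ox_r$ on $[0,1-2^{-K}]$, so the unique minimizer among deterministic choices is $\ox_r=1-2^{-K}$, achieved by the veto action $(1,\vone)$ (which yields $v(1,\vone)=1-2^{-K}$, see Eq.~\eqref{eq:veto-value}). I would first show that $\oox_r$ cannot be $\oep_C$-far from $1-2^{-K}$: otherwise the veto player strictly gains by deviating to $(1,\vone)$, using the $\oox$-utility-approximation Lemma~\ref{lem:oox} exactly as in Lemma~\ref{lem:grid-far} (the circuit loss incurred is negligible relative to the $\Omega(\oep_C^2)$ gain). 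Consequently $\oox_r$ is within $\oep_C$ of $1-2^{-K}$, which is $\oep_C$-far from the $2^{-N_{in}-1}$-grid; then Lemma~\ref{lem:grid-far}'s proof applies verbatim except that I replace ``the closest $2^{-N_{in}}$-grid point in the direction of $\hy$'' by ``the grid point $1-2^{-N_{in}}$'', to conclude all $N_{in}$ leading bits are pure and equal to those of $1-2^{-K}$ (i.e.\ all $1$'s) with no veto imposed on them. The induction on $k\in[N_{in}+1,K-4]$ via the Correct-bit-no-circuit Lemma~\ref{lem:imitation-No-circuit} then goes through unchanged to give $X_{r,k}=1$ w.p.\ $1$ and no veto on bit $k$, hence strongly perfect imitation of $1-2^{-K}$.

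\emph{Case $1-2^{-N_{in}-1}\le\hy_r<1$.} Now $\hy_r$ is a genuine target in $[0,1)$, but crucially it is already $2^{-N_{in}-1}$-far from the $2^{-N_{in}-1}$-grid on its left side, and the nearest grid point to its right, namely $1$, is \emph{unreachable} from below by any $\ox_r\le 1-2^{-K}$ except by overshooting past it. More precisely, among the veto values in Eq.~\eqref{eq:veto-value} the only ones landing in a $2^{-N_{in}-1}$-window of $\hy_r$ are those representing the segment $[1-2^{-N_{in}},1)$, all of which begin with the prefix $\underbrace{1\cdots1}_{N_{in}}$. So I argue, as in Lemma~\ref{lem:grid-far}: the Vetoed-mix-significant-bits Lemma~\ref{lem:no-mix} forces the leading bits (up to the first veto, or up to $k_C$) to be pure; if any leading bit disagreed with the all-$1$'s prefix the veto player would strictly improve $(\oox_r-\hy_r)^2$ by moving $\oox_r$ to $1-2^{-N_{in}}$ (veto $(k,\vone)$ at the offending bit), a gain of $\Omega(2^{-N_{in}})\gg$ any circuit loss; hence $\oox_r\in[1-2^{-N_{in}},1)$, which is $2^{-N_{in}-1}$-far from the $2^{-N_{in}-1}$-grid. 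From this point Lemma~\ref{lem:grid-far} applies: the first $N_{in}$ bits are pure and all equal $1$ (matching $\hy_r$'s binary representation since $\hy_r\ge 1-2^{-N_{in}-1}>1-2^{-N_{in}}$ except possibly in the degenerate double-representation window, which is handled by the ``$\hy_r$ indifferent'' branch of Lemma~\ref{lem:imitation-No-circuit}), no veto is imposed on them, and the inductive application of Lemma~\ref{lem:imitation-No-circuit} over $k>N_{in}$ yields strongly perfect imitation of $\hy_r$.

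\emph{Main obstacle.} The delicate point is the same one that made Lemma~\ref{lem:grid-far} involved: ruling out \emph{mixed} behavior of the bit players and the veto player near the boundary. Specifically, when $\hy_r$ is at distance roughly $2^{-k-1}$ from the would-be grid point $1$ near bit $k$, I must ensure that the argument in Lemma~\ref{lem:no-mix} still applies — that one of the six veto types at bit $k$ lands strictly closer to $\hy_r$ than a mixed bit — using that the relevant veto grid has spacing $2^{-k-1}$ around $\hy_r$ (Eq.~\eqref{eq:veto-value}) and that $\hy_r\ge 1-2^{-N_{in}-1}$ keeps us safely away from the left end $0$ where that grid would be truncated. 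I expect this to require only a careful but routine case check on which of $v(k,\vone)=2^{-k+1}-2^{-K}$, $v(k,11\vzero)$, $v(k,1\vzero)$ is nearest to $\hy_{\ge k}$, entirely parallel to Cases (c)–(d) in the proof of Lemma~\ref{lem:imitation-No-circuit}, and I would simply invoke that lemma rather than redo the computation.
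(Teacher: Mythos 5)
Your plan reaches the right conclusions but takes a noticeably heavier route than the paper. The paper's proof is a two-line dominance argument: since $\hy_r \ge 1-2^{-N_{in}-1}$, each bit player $x_{r,k}$ with $k\le N_{in}$ strictly prefers $x_{r,k}=1$ to $x_{r,k}=0$ \emph{conditional on any probability of being vetoed} --- playing $0$ in such a significant bit forces $\oX_r\le 1-2^{-k}$ and hence an imitation (and bit-imitation) loss of order $2^{-2N_{in}}\gg\epsilon_C$, which swamps any circuit consideration. This pins the first $N_{in}$ bits to $1$ and shows the veto player has no incentive to veto them, after which the paper either runs the Correct-bit-no-circuit Lemma~\ref{lem:imitation-No-circuit} inductively (when $\hy_r<1$) or just observes that playing $1$ stays dominant for all remaining bits (when $\hy_r\ge 1$). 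Your proposal instead first localizes $\oox_r$ via veto-player deviations and Lemma~\ref{lem:oox} and then re-runs the machinery of Lemma~\ref{lem:grid-far}; this works in spirit, but the paper's observation that one can argue bit-by-bit dominance \emph{directly from the hypothesis on $\hy_r$}, without first knowing where $\oox_r$ sits, is what makes the boundary case genuinely simpler than the interior one.

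Two steps in your write-up need repair. First, $1-2^{-K}$ is \emph{not} $\oep_C$-far from the $2^{-N_{in}-1}$-grid: it is $2^{-K}$-close to the grid point $1$, and $2^{-K}\ll\oep_C$. The reason this proximity is harmless is not distance from the grid but one-sidedness: the team cannot cross $1$ from below, so no circuit input bit can be flipped by moving toward the target; you should say this explicitly rather than invoke the hypothesis of Lemma~\ref{lem:grid-far}. Second, for $\hy_r\ge 1$ you cannot ``invoke Lemma~\ref{lem:imitation-No-circuit} rather than redo the computation'': that lemma is stated and proved under Assumption~\ref{as:hy}, and its case analysis assumes the truncated target $\hy_{\ge k}$ lies in $[0,2^{-k+1}]$, which fails when $\hy_r\ge 1$. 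The fix is exactly the paper's dominance remark (the target exceeds every reachable value, so bit $1$ and the maximal veto are best regardless), but it has to be stated as a separate observation, not as an application of the interior lemma. Finally, note that your claimed ``$\Omega(2^{-N_{in}})$'' gain for the veto deviation degenerates when $\hy_r$ is exactly $1-2^{-N_{in}-1}$ and the offending bit is $k=N_{in}$ (the gain there is only $O(2^{-N_{in}-K})$); the paper's own bound is equally loose at this extreme point, so this is a shared edge case rather than a defect specific to your argument, but it is worth being aware of.
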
 

\begin{proof}
We prove the first part of the lemma. The second part is proved symmetrically.

Since $1-2^{-N_{in}-1} \leq \hy_r$ for every $k$-th bit player $x_{r,k}$ for $k\in [N_{in}]$  it is easy to see that their action $x_{r,k}=1$ is strictly better than $x_{r,k}=0$ \emph{for every} probability of them being vetoed. This is because the imitation loss and the bit imitation losses after playing $x_{r,k}=0$ is at least $\Omega(2^{-2N_{in}})\gg \epsilon_C$. From here we can deduce that the veto player has no incentive to veto any of them; i.e., $\Prob [M^x_r>N_{in}]=1$.

For the case of $\hy_r < 1$ we can inductively repeat Imitation the Correct-bit-no-circuit Lemma (Lemma \ref{lem:imitation-No-circuit}) to deduce that perfect imitation occurs. For the case of $\hy_r \geq 1$ it remains dominant strategy for the bit players to play $x_{r,k}=1$ (irrespective of their probability for being vetoed).
\end{proof}

\begin{lemma}\label{lem:hy-one-violation}
For every $j\in J$ and every $i\in \{-1,0\}$ we have $\hy_{i,j} < 1$.  Moreover, if $\hy_{(1,j)}\geq 1$ then both other teams $x_{i,j}$ for $i\in \{-1,0\}$ perfectly imitate $\hy_{i,j}$.

Similarly, for every $j\in J$ and every $i\in \{0,1\}$ we have $\hy_{i,j} >0$.  Moreover, if $\hy_{(-1,j)}\leq 0$ then both other teams $x_{i,j}$ for $i\in \{0,1\}$ perfectly imitate $\hy_{i,j}$.
\end{lemma}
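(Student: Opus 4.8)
The plan is to derive Lemma~\ref{lem:hy-one-violation} as a short consequence of Lemma~\ref{lem:hy-boundary} together with elementary bounds on $\Ex[\oY_j]$. First I would recall that by definition $\hy_{i,j}=\Ex[\oY_j]+i\epsilon_S$, and that every realization of $\oY_j$ is a real number in $[0,1-2^{-K}]$, so $\Ex[\oY_j]\in[0,1-2^{-K}]$. This immediately gives the two ``main'' assertions of the lemma: for $i\in\{-1,0\}$ we have $\hy_{i,j}\le\Ex[\oY_j]\le 1-2^{-K}<1$; and for $i\in\{0,1\}$ we have $\hy_{i,j}\ge\Ex[\oY_j]\ge 0$, strictly positive when $i=1$ since then $\hy_{1,j}\ge\epsilon_S$. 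No equilibrium reasoning is needed for these parts.

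For the first ``moreover'' clause, suppose $\hy_{1,j}\ge 1$, i.e.\ $\Ex[\oY_j]\ge 1-\epsilon_S$. Then for each $i\in\{-1,0\}$ we get $\hy_{i,j}=\Ex[\oY_j]+i\epsilon_S\ge 1-2\epsilon_S$, and since $2\epsilon_S<2^{-N_{in}-1}$ by the parameter chain~\eqref{eq:constants}, this yields $\hy_{i,j}\ge 1-2^{-N_{in}-1}$; combined with $\hy_{i,j}<1$ from the previous paragraph we conclude $\hy_{i,j}\in[1-2^{-N_{in}-1},1)$. The first sentence of Lemma~\ref{lem:hy-boundary} then says the $x_{i,j}$-team strongly perfectly imitates $\hy_{i,j}$, which in particular means it perfectly imitates $\hy_{i,j}$ (strongly perfect imitation being, by Definition~\ref{def:perfect-imitation}, a strengthening of perfect imitation). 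The second ``moreover'' clause is entirely symmetric: assuming $\hy_{-1,j}\le 0$, i.e.\ $\Ex[\oY_j]\le\epsilon_S$, every $i\in\{0,1\}$ satisfies $0\le\hy_{i,j}\le 2\epsilon_S<2^{-N_{in}-1}$, so $\hy_{i,j}\in[0,2^{-N_{in}-1})$ and the second-to-last sentence of Lemma~\ref{lem:hy-boundary} applies.

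Since essentially all the content — showing that an $x_r$-team whose imitation target is pinned against a boundary must strongly perfectly imitate it — was already carried out in Lemma~\ref{lem:hy-boundary}, there is no genuine obstacle here; the proof is pure bookkeeping. The only points requiring a moment's care are (i) invoking the right inequality from~\eqref{eq:constants} to ensure $2\epsilon_S\ll 2^{-N_{in}-1}$, so that a target within $\epsilon_S$ of a boundary still lands in the ``extreme'' interval that Lemma~\ref{lem:hy-boundary} governs; and (ii) observing that the weak inequalities $\hy_{i,j}<1$ (resp.\ $\hy_{i,j}>0$) are immediate from the range $\oY_j\in[0,1-2^{-K}]$ and do not need the moreover hypothesis.
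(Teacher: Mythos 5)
Your handling of the two ``moreover'' clauses matches the paper's: both reduce to Lemma~\ref{lem:hy-boundary} after checking $2\epsilon_S \ll 2^{-N_{in}-1}$. For the main inequalities, however, you take a genuinely different route from the paper, and one half of your shortcut fails. The paper does \emph{not} derive $\hy_{0,j}<1$ (resp.\ $\hy_{0,j}>0$) from the range of $\oY_j$; it argues by contradiction through the equilibrium structure: if $\hy_{0,j}\ge 1$ then $\hy_{1,j}\ge 1+\epsilon_S$, so by Lemma~\ref{lem:hy-boundary} each signed imitation error $E_{i,j}=\hy_{i,j}-\Ex[\oX_{i,j}]$ is (approximately) nonnegative and $E_{1,j}\ge\epsilon_S$; since $\Ex[\oY_j]-\hx_j$ is the $\frac{3+g_{i,j}}{10}$-weighted average of the $E_{i,j}$, this forces $\Ex[\oY_j]-\hx_j\ge\frac{3}{10}\epsilon_S$, contradicting the Perfect-$y$-imitation Lemma~\ref{lem:y-team}. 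Your observation that the upper bound is already immediate from $\oY_j\le 1-2^{-K}$ is correct and legitimately bypasses that machinery for the first half of the statement.

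The gap is in the lower bound at $i=0$. The lemma asserts the strict inequality $\hy_{0,j}>0$, but $\hy_{0,j}=\Ex[\oY_j]$ and your range argument only yields $\Ex[\oY_j]\ge 0$: the profile in which every $y_j$-bit player deterministically plays $0$ has $\hy_{0,j}=0$ exactly, and no amount of bookkeeping excludes it. Ruling it out genuinely requires equilibrium reasoning --- the mirror of the paper's argument above: $\hy_{0,j}=0$ gives $\hy_{-1,j}=-\epsilon_S$, hence $E_{-1,j}\le-\epsilon_S$ while $E_{0,j},E_{1,j}\approx 0$ by Lemma~\ref{lem:hy-boundary}, so $\Ex[\oY_j]$ sits $\Omega(\epsilon_S)$ below $\hx_j$, contradicting Lemma~\ref{lem:y-team}. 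Your claim that ``no equilibrium reasoning is needed for these parts'' is therefore false for this sub-case, and the stated lemma is not fully proven. The omission is benign for the downstream use (Definition~\ref{def:error-boundary} and the one-problematic-team-per-coordinate property only need $\hy_{0,j}\ge 0$ so that the $[0,2^{-N_{in}-1})$ case of Lemma~\ref{lem:hy-boundary} applies), but as written your proof does not deliver the statement; either supply the equilibrium argument for strictness or weaken the inequality to $\ge$ and check the uses.
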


\begin{proof}
We prove the first part of the lemma. The second part is proved symmetrically.

Note that it is sufficient to prove that $\hy_{0,j}<1$ because $\hy_{0,j}>\hy_{-1,j}$. Assume by way of contradiction that $\hy_{0,j}\geq 1$. This implies (by definition of the sampling shifts) that $\hy_{1,j}\geq 1+\epsilon_S$.

We denote $E_{i,j}=\hy_{i,j}-\Ex[\oX_{i,j}]$ the imitation error of the $x_{i,j}$-team. By Lemma~\ref{lem:hy-boundary} we know that if $\hy_{i,j}\leq 1$ then we have perfect imitation. So the only possible case not to have perfect imitation is if $\hy_{i,j}>1$, and in such a case $E_{i,j}=\hy_{i,j}-\mathbb{E}[\oX_{i,j}]\geq \hy_{i,j}-1\geq 0$. Note also that $E_{1,j}\geq E_{0,j}+\epsilon_S \geq \epsilon_S$.

The term $\Ex[Y_j]-\hx_j$ (see Definition \ref{def:xj}) can be equivalently viewed as the expectation of $E_{i,j}$ when we draw $i$ from $\{-1,0,1\}$ with corresponding probabilities $\frac{3+g_{i,j}}{10}$. So by the above observation we deduce that $\Ex[Y_j]-\hx_j \geq \frac{3}{10}\epsilon_S$ (because it is the expectation of a positive random variable that gets a value of $\epsilon_S$ with probability of at least $\frac{3}{10}$). This contradicts\footnote{The Perfect-$y$-imitation Lemma (Lemma \ref{lem:y-team}) is applicable here because its proof does not rely on Assumption \ref{as:hy}.} the Perfect-$y$-imitation Lemma (Lemma \ref{lem:y-team}) which implies that $(\Ex[Y_j]-\hx_j\pm \tO(\epsilon_G))^2 \leq O(2^{-2K})$ and, in particular, $\Ex[Y_j]$ is $\tO(\epsilon_G)$-close to $\hx_j$.

The moreover part is a straightforward consequence of Lemma \ref{lem:hy-boundary} because if $\hy_{1,j}\geq 1$ and $\hy_{0,j}<1$ we can bound $1-2^{-N_{in}-1}<y_{i,j}<1$ for $i\in \{0,1\}$.
\end{proof}

With Lemmata \ref{lem:hy-boundary}, and \ref{lem:hy-one-violation} the modifications that are needed to the equilibrium analysis of Section \ref{sec:int} are minor. We briefly overview the equilibrium analysis and mention which (cosmetic) changes are required. 

The Vetoed-mix-significant-bits Lemma (Lemma \ref{lem:no-mix}) and the $\oox$-utility-approximation Lemma (\ref{lem:oox}) do not rely on Assumption \ref{as:hy}.
The analogue of the Correct-bit-no-circuit Lemma (Lemma \ref{lem:imitation-No-circuit}), the $\oox$-far-imitation Lemma (Lemma \ref{lem:grid-far}), the Correct-bit-no-guide Lemma (Lemma \ref{lem:imitation-No-guide}), and the No-guide-imitation Lemma (Lemma \ref{lem:no-krentel->perfect})  for the boundary case is Lemma \ref{lem:hy-boundary}. In the boundary case the situation is simpler and positive results on perfect imitation can be deduced independently of guide player's mixed strategy. The only place we rely on Assumption \ref{as:hy} in the proof of the Perfect-imitation-sample Lemma (Lemma \ref{lem:perfect-im-exist})  is when we state that every non-perfectly imitating $x_r$ satisfies $\oox_r$ is located close to the grid; see Footnote \ref{foot:assumption-use}. This property remains true in the boundary case (Lemma \ref{lem:hy-boundary}).
The Mild-imitation Lemma (Lemma \ref{lem:mild-imitation}), \ref{lem:ex-oox}, and the Perfect-$y$-imitation Lemma (Lemma \ref{lem:y-team}) do not rely on Assumption \ref{as:hy}. 

Given $j\in J$ we extend the definitions of a problematic team $r^*(j)$ and of an imitation error $E_j$ in Definition~\ref{def:error} to the boundary cases. In case $\hy_{i,j}>1$ for some $i\in I$ we know by Lemma \ref{lem:hy-one-violation} that the only team that might have an imitation error  is $r=(1,j)$. This observation leads us to the following definition.

\begin{definition}[Problematic teams and imitation error (boundary version)]\label{def:error-boundary}
In case $\hy_{i,j}>1$ we define the {\em problematic team} to be $r^*=(1,j)$. Correspondingly, the {\em imitation error} is defined by $E_j=(1-2^{-K})-\hy_{1,j}$.

In case $\hy_{i,j}<0$ we define the {\em problematic team} to be $r^*=(-1,j)$. Correspondingly, the {\em imitation error} is defined by $E_j=0-\hy_{-1,j}$.

In case $\hy_r \in [0,1]$ we stick to the original definitions of problematic team and imitation error. See Definition \ref{def:error}.
\end{definition}

We proceed with the overview of the equilibrium analysis for remaining lemmata. 
The property that is used in the proof of the Error-formula Lemma (Lemma \ref{lem:error-gradient}) is the fact that in each coordinate we have a single problematic team. By Lemma \ref{lem:hy-one-violation} and definition \ref{def:error-boundary} this is true also for the boundary case. The Perfect-imitation+computation-sample Lemma (Lemma \ref{lem:perfect-sample}), the Pure-input Lemma (Lemma \ref{lem:pure-L-input}), 
and the $\hy$-imitation Lemma (Lemma \ref{lem:same-bin}), 
do not rely on Assumption \ref{as:hy}, while Lemma \ref{lem:hy-boundary} is complementary to the Correct-bit-matched-input Lemma (Lemma \ref{lem:imitation-matched-input}). 
The Final Lemma (Lemma \ref{lem:final}) does not rely on Assumption \ref{as:hy}.
This completes the reduction of $\textsc{2D-GD-FixedPoint}$ to $\textsc{5-Pt-Potential}$.

\section{Reducing Polytensor Games via Domain's Modification}\label{sec:domain}

To complete the proof of Theorem \ref{theo:main} we need to show that {\sc 5-Polytensor IdenticalInterest} can be polynomially reduced to $\textsc{Deg-5-GD-FixedPoint}$. 

\begin{proposition}\label{pro:inKKT}
$c-${\sc Polytensor-IdenticalInterest} can be polynomially reduced to\\ $\textsc{Deg-5-GD-FixedPoint}$.  
\end{proposition}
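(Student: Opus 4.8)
The plan is to give a polynomial-time map sending a $c$-polytensor identical-interest game $\Gamma$ and a precision $\epsilon_N$ to a \textsc{Deg-5-GD-FixedPoint} instance $(\phi,\epsilon,\alpha)$, so that normalizing any solution of the latter yields an $\epsilon_N$-Nash equilibrium of $\Gamma$. Write $A_i$ for player $i$'s action set, $m_i=|A_i|$, and identify a mixed profile with a point $x\in[0,1]^J$, $J=\sum_i m_i$, whose coordinates are $x_{i,a}$ for $i\in[n]$, $a\in A_i$. Let $U(x)=\sum_{S:\,|S|=c}\sum_{a_S}u^S(a_S)\prod_{i\in S}x_{i,a_i}$ be the multilinear extension of the common utility: every monomial of $U$ is a product of $c$ \emph{distinct} variables, hence componentwise affine (in particular componentwise concave), and $U$ has only $O(n^cm^c)$ monomials. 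I would run \textsc{Deg-5-GD-FixedPoint} on
\[
\phi(x)=\frac1Z\Bigl(U(x)-\Lambda\sum_{i\in[n]}\Bigl(\sum_{a\in A_i}x_{i,a}-1\Bigr)^2+C\Bigr),
\]
where $\Lambda,C,Z$ are positive constants of polynomial bit-length fixed below so that $\phi$ maps $[0,1]^J$ into $[0,1]$. Expanding the square, $\phi$ is an explicit sum of monomials of degree at most $\max(c,2)$ (so at most $5$ when $c=5$), and every such monomial is componentwise concave: the only squared monomials are the $-\tfrac{\Lambda}{Z}x_{i,a}^2$, which are concave in $x_{i,a}$ and constant in the remaining variables, and all other monomials are componentwise affine. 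Thus $\phi$ is a legal \textsc{Deg-5-GD-FixedPoint} instance, with $\alpha$ any explicit bound on the Lipschitz constant of $\nabla\phi$ read off from the (polynomially many, polynomially described) coefficients — this bound turns out to be polynomially bounded because the $1/Z$ normalization dominates.

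Next I would record the two facts that drive everything. First, $\nabla_{i,a}U(x)$ is exactly player $i$'s expected payoff from switching to the pure action $a$ against $x_{-i}$. Second, $\nabla_{i,a}\phi(x)=\frac1Z\bigl(\nabla_{i,a}U(x)-\mu_i\bigr)$ with $\mu_i:=2\Lambda(s_i-1)$, $s_i:=\sum_{a\in A_i}x_{i,a}$, and crucially $\mu_i$ does not depend on $a$. Given a solution $x$ of \textsc{Deg-5-GD-FixedPoint}$(\phi,\epsilon,\alpha)$, the reduction outputs the normalized profile $\hat x$ with $\hat x_{i,a}:=x_{i,a}/s_i$.

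Correctness then splits into a feasibility step and a Nash step. \textbf{Feasibility:} at any $\epsilon$-GD-fixed point, the defining inequality of \textsc{GD-FixedPoint} at a coordinate $x_{i,a}>0$ gives $\mu_i\le\|\nabla U\|_\infty+Z\epsilon$, and at a coordinate $x_{i,a}<1$ gives $\mu_i\ge-\|\nabla U\|_\infty-Z\epsilon$; since for $\Lambda$ large neither an all-zero nor an all-one block can be a fixed point, both kinds of coordinate are present in every block, so $|s_i-1|\le\delta:=\bigl(\|\nabla U\|_\infty+Z\epsilon\bigr)/(2\Lambda)$. In particular $s_i>0$, so $\hat x$ is well defined, lies in the product of simplices, and $\|\hat x-x\|_\infty=O(\delta)$. \textbf{Nash:} for $a$ in the support of $\hat x_i$ we have $x_{i,a}>0$, hence $\nabla_{i,a}U(x)\ge\mu_i-Z\epsilon$, so — using that $\nabla U$ has an explicit Lipschitz constant $L_U$ on $[0,1]^J$ to pass from $x$ to $\hat x$ — the expected payoff of $\hat x_i$ at $\hat x$ is at least $\mu_i-Z\epsilon-O(L_U\delta\sqrt J)$; and for any action $b$ that $\hat x_i$ does not already play with probability one, $x_{i,b}<1$, hence $\nabla_{i,b}U(x)\le\mu_i+Z\epsilon$, so $i$'s payoff after deviating to $b$ at $\hat x$ is at most $\mu_i+Z\epsilon+O(L_U\delta\sqrt J)$ (the remaining corner case $x_{i,b}=1$ is the almost-sure action itself and is handled directly from the surviving inequality). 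Subtracting, no deviation gains more than $O\bigl(Z\epsilon+L_U\delta\sqrt J\bigr)$, so $\hat x$ is $\epsilon_N$-Nash once the parameters are set as follows: put $\epsilon:=\epsilon_N/(4Z)$ (pinning $Z\epsilon=\epsilon_N/4$ independently of $Z$), choose $\Lambda$ large enough that $L_U\sqrt J\,\delta\le\epsilon_N/8$ (which fixes $\delta$, hence $\Lambda$), then take $C:=\Lambda\sum_i(m_i-1)^2+\sup_{[0,1]^J}|U|$ and $Z:=2\sup_{[0,1]^J}|U|+\Lambda\sum_i(m_i-1)^2$ (an upper bound on the oscillation of the unnormalized potential); all are of polynomial bit-length and there is no circularity.

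The step I expect to be the main obstacle is the quantitative bookkeeping: choosing $\Lambda,C,Z,\epsilon,\alpha$ simultaneously so that $\phi$ provably maps $[0,1]^J$ into $[0,1]$, $\alpha$ is an honest bound, $\epsilon$ is exponentially small yet of polynomial bit-length, and $\Lambda$ is large enough to pin the block-sums near $1$ while keeping everything polynomially describable; and, alongside this, dispatching the boundary corner cases — blocks whose weight concentrates on a single action, and coordinates sitting exactly at $0$ or $1$, which receive only one of the two \textsc{GD-FixedPoint} inequalities — and checking they do not break the Nash estimate. Everything else (componentwise concavity, degree $5$, the polynomial monomial count, and the standard correspondence between Nash equilibria of $\Gamma$ and stationary points of its multilinear extension) is routine.
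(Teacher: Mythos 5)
Your reduction map is the same as the paper's --- the penalized multilinear extension $\phi(x)=\frac1Z\bigl(U(x)-\Lambda\sum_i(s_i-1)^2+C\bigr)$, with the same degree-$5$ and componentwise-concavity observations --- but your correctness argument for the Nash step takes a genuinely different route. The paper never passes from $x$ to the normalized profile $\hat x$ via Lipschitz continuity of $\nabla U$. Instead it evaluates the derivative of $\phi$ at $x$ itself in the direction $(s_1-x_{1,1},-x_{1,2},\dots,-x_{1,m_1},0,\dots,0)$: the components of this direction sum to zero, so the quadratic penalty drops out exactly, and multilinearity yields the identity that this directional derivative equals $\prod_i s_i\,\bigl[U(e_{1,1},\hat x_{-1})-U(\hat x)\bigr]$. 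A profitable $\epsilon_N$-deviation in the normalized game therefore directly forces some coordinate of $\nabla\phi(x)$ to exceed about $\epsilon_N 2^{-n}/(2s_1)$, contradicting the fixed-point conditions; this exact identity is what lets the paper get away with only pinning $s_i\in[0.5,1.5]$ and with a crude penalty weight $C=4\cdot 2^A$. Your route instead pins $s_i$ within $\delta=O(1/\Lambda)$ of $1$ and pays an $O(L_U\delta\sqrt J)$ error for replacing $x$ by $\hat x$, absorbed by taking $\Lambda$ polynomially large; this is sound, your parameter ordering is non-circular, and the corner cases you flag (all-zero and all-one blocks; a deviation coordinate sitting at $x_{i,b}=1$) close as you describe. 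A side benefit of your version is that you keep the domain equal to $[0,1]^J$ and normalize $\phi$ into $[0,1]$ as the problem definition requires, whereas the paper works on $[0,2]^A$ with an unnormalized, exponentially large range and leaves that rescaling implicit.
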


It have been observed in \cite{CLS} that there is a one-to-one correspondence of  local maxima of the potential function defined on the \emph{product of simplices} (i.e., the mixed action profiles in the game) and mixed Nash equilibria in the congestion games. However, since we have restricted the domain of $\textsc{GD-FixedPoint}$ to be the \emph{hypercube} (rather than product of simplices), this observation is insufficient, and mild additional arguments are required. The proof below provides these additional arguments.

\begin{proof}

We recall that $n$ denotes the number of players.
We denote by $A_i$ the number actions  of player $i\in [n]$. We denote $A=\sum_{i\in [n]} A_i$.
The mixed strategies of player $i$ are denoted by $x_i\in \Delta(A_i)\subset [0,1]^{A_i}$ where $x_{i,k}$ denotes the probability of player $i$ for choosing her $k$-th action.
The potential of the game is a multilinear function $U:\times_{i\in [n]} \Delta(A_i) \rightarrow [-1,1]$. The same multilinear function $U$ can be viewed as a function with larger domain $U:[0,2]^{A_1}\times ... \times [0,2]^{A_n} = [0,2]^{A} \rightarrow [-2^A,2^A]$.
For every point $x=((x_{1,k})_{k\in A_1},...,(x_{n,k})_{k\in A_n})\in [0,2]^A$ we denote $$s_i:=\sum_{k\in A_i} x_{i,k}.$$

As an input to $\textsc{Deg-5-GD-FixedPoint}$ we consider the function $\phi:[0,2]^A \rightarrow \mathbb{R}$ that is given by
\begin{align}\label{eq:cong-game-pot}
\phi(x):=-C \sum_{i\in [n]} ( s_i -1 )^2 + U(x)
\end{align}
where $C=4\cdot 2^A$ and we set $\epsilon=\frac{1}{3}\epsilon_N 2^{-n}$.
We argue that every 
$x=((x_{1,k})_{k\in A_1},...,(x_{n,k})_{k\in A_n})$ that is a solution for $\textsc{GD-FixedPoint}(\phi,\epsilon)$ 
corresponds to an $\epsilon_N$-Nash equilibrium of the congestion game after normalization of the strategies. Namely, we argue that the profile of actions $((\frac{1}{s_1} x_{1,k})_{k\in A_1},...,(\frac{1}{s_n} x_{n,k})_{k\in A_n})$ is an $\epsilon_N$-Nash equilibrium.

The gradient in the direction $(i,k)$ is given by

\begin{align*}
\nabla_{i,k} \phi(x):=-2C ( s_i -1 ) + U(e_{i,k},x_{-i})
\end{align*}

If $s_i<0.5$ then $x_{i,1}<0.5$ and $\nabla_{i,1} \phi(x)>C+U(e_{i,1},x_{-i})\geq 2^A$ and it cannot be a {\sc GD-FixedPoint} solution.

If $s_i>1.5$ then there exists $x_{i,k}>0$ and $\nabla_{i,k} \phi(x) < -C+U(e_{i,k},x_{-i})\leq -2^A$ and it cannot be a {\sc GD-FixedPoint} solution.

Therefore, every solution must satisfy $0.5\leq s_i \leq 1.5$, and assume by way of contradiction that $((\frac{1}{s_1} x_{1,k})_{k\in A_1},...,(\frac{1}{s_n} x_{n,k})_{k\in A_n})$ is not an $\epsilon_N$-Nash equilibrium. Some player $i$, without loss of generality $i=1$, has a profitable deviation to a pure action $k$ which improves her utility by at least $\epsilon_N$. Without loss of generality $k=1$. Namely, $$U(e_{1,1},(\frac{1}{s_2} x_{2,k})_{k\in A_2},...,(\frac{1}{s_n} x_{n,k})_{k\in A_n})\geq U((\frac{1}{s_1} x_{1,k})_{k\in A_1},(\frac{1}{s_2} x_{2,k})_{k\in A_2},...,(\frac{1}{s_n} x_{n,k})_{k\in A_n})+\epsilon_N.$$ 

Consider the gradient in the direction $(-x_{1,1}+s_1,-x_{1,2},...-x_{1,k},0,0,...,0)$ which corresponds to the direction in which player 1 increases the weight on her first action and proportionally decreases the weights from the point $x_1$ such that the term $s_1$ will remains fixed. The gradient in this direction is given by 

\begin{align}\label{eq:grad-dir}
\begin{aligned}
&\nabla_{(-x_{1,1}+s_1,-x_{1,2},...-x_{1,k},0,0,...,0)} \phi (x) = \\
&s_1 [-2C ( s_1 -1 ) + U(e_{1,1},x_{-1})] +\sum_{k} x_{1,k} [2C ( s_1 -1 ) - U(e_{1,k},x_{-1})]= \\
&s_1 U(e_{1,1},x_{-1}) - U(x) = \\
& \Pi_{i\in [n]} {s_i} \left[ U(e_{1,1},\frac{x_2}{s_2},...,\frac{x_n}{s_n})- U(\frac{x_1}{s_1},\frac{x_2}{s_2},...,\frac{x_n}{s_n}) \right]  > \Pi_{i\in [n]} {s_i} \epsilon_N \geq \epsilon_N 2^{-n}.
\end{aligned}
\end{align}

If, by way of contradiction, we have $|\nabla_{1,k} \phi(x)|\leq \epsilon$ for every $k$, then we have
\begin{align*}
|\nabla_{(-x_{1,1}+s_1,-x_{1,2},...-x_{1,k},0,0,...,0)} \phi (x)|\leq (s_1-x_{1,1})\nabla_{1,1} \phi(x) +\sum_{k\geq 2}x_{1,k}|\nabla_{1,k} \phi(x)|\leq 2s_1 \epsilon \leq \epsilon_N 2^{-n}
\end{align*}
which contradicts Inequality \eqref{eq:grad-dir}. 

Finally, we observe that the potential in Equation \eqref{eq:cong-game-pot} is a valid potential for $\textsc{Deg-5-GD-FixedPoint}$. First, since $U$ is a $5$-polytensor game, the utility $U(x)$ is a multilinear degree 5 polynomial. Note also that the monomials of the term $-C\sum_i (s_i-1)^2$ are $-Cx_{i,k} x_{i,k'}$ which are either multilinear (for $k\neq k'$) or strictly concave (for $k=k'$). Namely, all the monomials of the potential are of maximal degree 5 and also are component-wise concave.   
\end{proof}

 
\bibliographystyle{alpha}
\bibliography{ref}

\end{document}